\documentclass[12pt]{article}

\usepackage[utf8]{inputenc}
\usepackage[T1]{fontenc}
\usepackage[margin=1in]{geometry}
\usepackage{amsmath,amssymb,amsthm,mathtools,bm}
\usepackage{booktabs,comment}
\usepackage{enumitem}
\usepackage{graphicx}
\usepackage{subcaption}
\usepackage{setspace}
\usepackage[authoryear]{natbib}
\usepackage[colorlinks=true,linkcolor=blue,citecolor=blue,urlcolor=blue]{hyperref}
\allowdisplaybreaks
\newtheorem{assumption}{Assumption}
\newtheorem{theorem}{Theorem}

\newtheorem{lemma}{Lemma}
\newtheorem{corollary}{Corollary}
\newtheorem{remark}{Remark}

\newcommand{\R}{\mathbb{R}}
\newcommand{\E}{E}
\newcommand{\Pp}{P}
\newcommand{\Pn}{\mathbb{P}_{n}}
\newcommand{\Gn}{\mathbb{G}_{n}}
\newcommand{\calX}{\mathcal{X}}

\newcommand{\calF}{\mathcal{F}}
\newcommand{\calR}{\mathcal{R}}

\newcommand{\calH}{\mathcal{H}}
\newcommand{\calD}{\mathcal{D}}
\newcommand{\calA}{\mathcal{A}}

\newcommand{\ind}{\mathbf{1}}
\newcommand{\norm}[1]{\left\lVert #1\right\rVert}

\newcommand{\weak}{\rightsquigarrow}
\DeclareMathOperator{\Var}{Var}
\DeclareMathOperator{\Cov}{Cov}

\title{Specification Testing for Dyadic Regression Models}
\author{%
{\large \textbf{Ulrich Hounyo}}\thanks{Department of Economics, University at Albany-State University of New York, Albany, NY 12222, United States. E-mail: \texttt{khounyo@albany.edu}.}
\and
{\large \textbf{Jiahao Lin}}\thanks{School of Digital Economy and Management, Fuyao University of Science and Technology, Fuzhou, 350109, China. E-mail: \texttt{jhlin@fyust.edu.cn}.} 
\and
{\large \textbf{Xiaojun Song}}\thanks{Department of Business Statistics and Econometrics, Peking University, Beijing, China. E-mail: \texttt{sxj@gsm.pku.edu.cn}.}
}
\date{\today}

\begin{document}
\maketitle

\begin{abstract}
This paper develops specification tests for linear conditional-mean models with undirected dyadic data, a setting that, to the best of our knowledge, has not previously been addressed in the literature. We establish a uniform projection theorem that reduces the dyadic process to its latent first-order node projections under shared-node dependence. We then show that a raw first-order node-multiplier bootstrap is valid when this node component is nondegenerate but double-counts dyad-specific variation when dyads are independent. An exact covariance decomposition motivates a corrected Gaussian bootstrap, whose validity is established separately under shared-node dependence and independent dyads. The continuum moment characterization is omnibus, while the implemented Kolmogorov-Smirnov and Cram\'er-von Mises tests are consistent against fixed alternatives captured by the evaluation grid. We also characterize their power against rate-appropriate local alternatives. Simulations show that the corrected Kolmogorov-Smirnov test provides the most stable size control while retaining substantial local power. An application to the Lazega law-firm network rejects additive linear and quadratic specifications but does not reject a richer specification containing an economically relevant interaction.

\end{abstract}

\noindent\textbf{Keywords:} dyadic data, specification testing, residual-marked empirical process, exchangeable arrays, multiplier bootstrap.\\
\textbf{JEL Classification:} C12, C14, C21, C31.

\newpage

\section{Introduction}

Dyadic data are increasingly common in empirical economics. International trade is observed between pairs of countries, credit and ownership relationships connect firms or financial institutions, and social and professional networks record interactions between individuals. A defining feature of such data is that observations may be dependent whenever their dyads share a node. Bilateral trade flows involving the same country, for example, may respond to common country-level shocks even when the trading partners differ. Accounting for this shared-node dependence is therefore central to valid inference in dyadic regression; see, e.g., \citet{fafchamps2007formation}, \citet{aronow2015cluster}, and \citet{tabord2019inference}.

Linear specifications remain attractive because they are transparent, easy to estimate, and convenient for economic interpretation and counterfactual analysis. Their conclusions can nevertheless be misleading when economically relevant nonlinearities or interactions are omitted; see, for example, \citet{white1981consequences} and, in a related dependent-data setting, \citet{su2010profile}. This concern is especially important in network applications, where the effect of one characteristic may depend on the characteristics of both nodes. It is therefore important to assess the maintained regression specification before interpreting its coefficients.

Most existing methods for dyadic regression conduct inference under a prespecified linear conditional mean. In practice, however, its functional form is rarely known and must itself be assessed. To the best of our knowledge, specification testing for conditional-mean models with dyadic data has not previously been studied.

This paper addresses this gap by developing specification tests based on an omnibus collection of residual moments for linear conditional-mean models with undirected dyadic data. Under the null hypothesis, the conditional mean of the dyadic outcome is linear in the included regressors, whereas under the alternative it is left unrestricted. We characterize the null through residual moments indexed by lower-orthant indicators of the regressors and construct Kolmogorov-Smirnov (KS) and Cram\'er-von Mises (CvM) statistics from the resulting residual-marked empirical process.

Extending residual-based specification tests to dyadic data is not only a matter of replacing an independent-observation variance estimator with a dyadic-robust one. A dyadic empirical process contains two sources of sampling variation. The first is generated by latent node characteristics and is shared by all dyads incident to the same node. The second is specific to individual dyads. For a fixed residual mark, a standard projection decomposition identifies the node-level component and a degenerate second-order remainder. An omnibus statistic, however, depends on a continuum of marks. Pointwise negligibility of the remainder does not establish that its supremum or integrated square is negligible. Furthermore, a bootstrap that correctly reproduces the node component need not reproduce the dyad-specific component when the former disappears. 

Our contribution is threefold. First, we establish a uniform projection theorem for dyadic empirical processes indexed by a VC-type class. The theorem shows that the  second-order component is negligible uniformly over the index set at the node-level rate. The proof is nontrivial because the remainder contains both dyad-specific variation and variation generated jointly by the two node characteristics. After separating these components, we control the former as a Rademacher process and the latter as a decoupled second-order Rademacher chaos. The result reduces the leading dyadic process uniformly to an ordinary empirical process over the latent nodes and provides the functional weak convergence required for both the KS and CvM statistics.

Second, we develop a raw first-order node-multiplier bootstrap and prove its validity under nondegenerate shared-node dependence. The relevant node projections are functions of latent variables and must be recovered from averages over the dyads incident to each node. We establish this recovery uniformly over the class of residual marks. We then control the additional effects of estimated residuals and estimated orthogonalization terms by embedding the random estimated class in a deterministic enlargement with controlled entropy. 

Third, we identify a failure of the raw node-multiplier bootstrap under first-order degeneracy and propose a covariance correction. Because every dyad is incident to two nodes, it enters two node averages. The raw node bootstrap consequently counts the same-dyad variation twice. This duplication is negligible when node-level variation dominates, but it becomes first-order when the node projection vanishes. In the independent-dyad case, the raw bootstrap therefore has twice the correct asymptotic covariance.

An exact finite-sample covariance decomposition separates the same-dyad contribution from the covariance between distinct dyads sharing a node. This decomposition leads to a corrected Gaussian bootstrap that retains the shared-node component while removing the extra same-dyad contribution. We prove that the correction is asymptotically irrelevant under nondegenerate shared-node dependence and essential under independent dyads. The corrected procedure thus agrees with the raw node bootstrap in the usual dyadic regime while also recovering the faster independent-dyad limit when node-level variation disappears.

The theory permits partial degeneracy: the node-level covariance may vanish at some evaluation points without invalidating the functional approximation. It also covers the totally first-order degenerate benchmark of mutually independent dyads. We do not claim validity for every totally degenerate exchangeable array. In more general cases, the second-order component may become leading and may have a non-Gaussian limit. The paper makes this boundary explicit rather than imposing a Gaussian approximation where it is not justified.

Building on these results, we establish the asymptotic validity and power of the proposed tests. The continuum KS functional detects every fixed violation of the conditional-mean restriction, while the implemented fixed-grid tests are consistent whenever the selected grid captures the departure. We also characterize local power at the rate appropriate to each variance regime. The local-power analysis accounts for re-estimation of the linear coefficient and identifies the component of a local departure that remains detectable by the specification process. Despite the nonstandard theory, implementation on a fixed evaluation grid requires only one OLS estimation and no bandwidth selection or repeated estimation across bootstrap draws.

The Monte Carlo experiments examine the finite-sample size and local power of the proposed tests across different sample sizes and strengths of shared-node dependence. The results illustrate the practical importance of accounting for dyadic dependence and show how the relative performance of the procedures changes as the node-level component becomes weaker. 

We also apply the proposed tests to a law-firm network. The empirical analysis examines whether professional links among lawyers can be adequately represented by an additive linear conditional mean based on individual and pairwise characteristics. The results indicate that professional connections cannot be adequately described by adding the separate contributions of experience, demographic similarity, and organizational proximity. Instead, these characteristics appear to operate jointly. The application also demonstrates how the proposed procedures can be used as formal diagnostic tools before interpreting a linear dyadic regression and illustrates their ability to detect economically meaningful departures from the maintained specification.

This paper relates to two strands of literature: inference for exchangeable arrays and omnibus specification testing. First, general empirical-process and bootstrap theory for exchangeable arrays is developed by \citet{davezies2021}, but our results are not direct applications of that theory. We develop two-regime asymptotic theory tailored to estimated residual-marked dyadic processes. We establish an explicit uniform projection theorem that separately controls the dyad-specific empirical-process component and the node-pair Rademacher chaos, uniformly recover the unobserved node projections from incident-dyad averages, and derive conditional maximal and contraction inequalities needed to handle estimated residual marks. Other related contributions to inference for exchangeable arrays and dyadic data include the high-dimensional Gaussian approximations of \citet{chiang2023inference}, the analysis of degeneracy under two-way clustering by \citet{menzel2021bootstrap}, and nonparametric methods for dyadic regression and density estimation developed by \citet{graham2021minimax}, \citet{graham2024kernel}, and \citet{Cattaneo01102024}.

Second, the paper contributes to the literature on omnibus conditional-moment specification testing, including \citet{bierens1982consistent}, \citet{bierens1990}, \citet{stute1997}, \citet{whang2000consistent}, and \citet{escanciano2006}. Related kernel-based approaches include \citet{zheng1996} and \citet{fan1996consistent}, while specification testing under spatial dependence is studied by \citet{su2017specification}, \citet{gupta2024consistent}, \citet{yang2024model}, and \citet{lee2025heteroskedasticity}. Dyadic sampling differs from these settings because shared-node dependence generates competing node-level and dyad-level variance components, whose relative importance determines the appropriate normalization, the functional limit of the test process, and the validity of the bootstrap approximation.

The remainder of the paper is organized as follows. Section \ref{sec:setup} introduces the sampling framework and the residual-marked process. Section \ref{sec:asymptotics} develops the uniform projection theory and derives the functional limits. Section \ref{sec:bootstrap} presents the raw and covariance-corrected procedures, establishes their validity, and studies local power. Section \ref{sec:simulation} shows the finite-sample simulation results. Section \ref{sec:empirical} further demonstrates the practical relevance of the proposed tests through an empirical application. Section \ref{sec:conclusion} concludes. The appendix contains all proofs and auxiliary results.

The following notation is used throughout. For a probability measure \(Q\)
and a measurable function \(f\), write
\(
Qf=\int f\,dQ\) and \(\|f\|_{Q,p}=(Q|f|^p)^{1/p}.
\)
For vectors, \(\|\cdot\|\) denotes the
Euclidean norm; for matrices, it denotes the induced operator norm.
A prime denotes transposition, and \(\lambda_{\min}(A)\) denotes the
smallest eigenvalue of a symmetric matrix \(A\).
We write \(\ind\{\cdot\}\) for the indicator function and write \(a\preceq b\)
for coordinatewise inequality, i.e., every coordinate of \(a\) is no larger than the corresponding coordinate of \(b\).

For an index set \(T\), \(\ell^\infty(T)\) denotes the Banach space of
bounded real-valued functions on \(T\), equipped with the supremum norm
\(\|z\|_\infty=\sup_{t\in T}|z(t)|\). Thus a process
\(Z_n=\{Z_n(t):t\in T\}\) is viewed as an \(\ell^\infty(T)\)-valued random
element, and
\(
Z_n\weak Z\) in \(\ell^\infty(T)
\)
denotes weak convergence of the entire process. We use \(\to_p\), \(o_p(1)\), and \(O_p(1)\) for
convergence in probability, convergence to zero in probability, and
boundedness in probability, respectively; \(a_n\asymp b_n\) means that
\(a_n/b_n\) is bounded above and away from zero. Conditional on the observed
dyadic sample, expectation and probability are denoted by \(E^*\) and
\(P^*\); \(o_{p^*}(1)\) denotes convergence to zero in conditional
probability, in outer probability when measurability requires it. Finally,
\(BL_1\) is the set of real-valued functions bounded by one and Lipschitz
with constant at most one on the relevant metric space.

\section{Model Setup}
\label{sec:setup}

\subsection{Undirected dyadic sampling}

Let \(i,j\in\{1,\ldots,n\}\) index nodes. We observe one undirected dyadic observation
\(Z_{ij}=(Y_{ij},X_{ij}')'\) for each unordered pair \((i,j)\) with \(1\leq i<j\leq n\), where \(X_{ij}\in \calX\subset \R^d\) and $\calX$ is a compact rectangle.  Let 
\(\mathcal D_n=\{(i,j):1\leq i<j\leq n\}\) and \(N_n=|\mathcal D_n|=n(n-1)/2\). For notational convenience, whenever \(j<i\), set \(Z_{ij}=Z_{ji}\), and use the same convention for all dyad-level functions, marks, and latent dyad variables. For a measurable function \(f\), define
\(\Pn f=N_n^{-1}\sum_{i<j}f(Z_{ij})\) and \(Pf=\E[f(Z_{12})]\). For a
function class \(\calF\), the node-scaled centered dyadic empirical process is
the random map
\begin{equation}
\Gn=\{\Gn f:f\in\calF\}
=\left\{\sqrt n(\Pn f-Pf):f\in\calF\right\}\in\ell^\infty(\calF).
\label{eq:empirical-process-definition}
\end{equation}
In the independent-dyad scenario introduced below, the nondegenerate
normalization is instead \(\sqrt{N_n}\).

The dyadic dependence assumption allows two dyadic observations
$Z_{ij}$ and $Z_{pq}$ to be dependent only when the two dyads share at
least one endpoint, that is, when
\[
    \{i,j\}\cap\{p,q\}\neq \varnothing .
\] The array is modeled as jointly exchangeable and dissociated. Joint exchangeability means that relabeling the nodes does not change the joint distribution. Dissociation means that collections of dyads involving disjoint sets of nodes are independent.  To accommodate this feature, we describe the dependence structure using the Aldous-Hoover-Kallenberg (AHK, \citet{aldous1981};
 \citet{hoover1979};  \citet{kallenberg1989representation}) representation. 
\begin{assumption}[Sampling]
\label{ass:sampling}
The undirected array \((Z_{ij})_{i<j}\) is jointly exchangeable and dissociated. Equivalently, it admits a representation \(Z_{ij}=\tau(U_i,U_j,U_{ij})\), where the node variables \((U_i)\) and dyad variables \((U_{ij})\) are mutually independent and identically distributed, and \(\tau\) is symmetric in its first two arguments.
\end{assumption}

\subsection{Regression model and null hypothesis}

Under Assumption \ref{ass:sampling}, the dyadic observations $Z_{ij}$
 are identically distributed but need not be independent.   Define
\(Q=\E[X_{12}X_{12}']\) and suppose that \(Q\) is nonsingular. Whether or not the null is true, define the population linear projection coefficient
\(\beta_*=Q^{-1}\E[X_{12}Y_{12}]\) and the projection residual
\(\varepsilon_{ij}=Y_{ij}-X_{ij}'\beta_*\). By construction, \(\E[X_{12}\varepsilon_{12}]=0\).

We test
\begin{equation}
H_0:\quad \E[Y_{12}\mid X_{12}]=X_{12}'\beta_*
\quad\text{almost surely},
\label{eq:null}
\end{equation}
against the unrestricted alternative that \(\E[Y_{12}\mid X_{12}]\) is not linear almost surely. Equivalently, \(H_0\) states that \(\E[\varepsilon_{12}\mid X_{12}]=0\) almost surely.

Define
\(M(x)=\E[X_{12}\ind\{X_{12}\preceq x\}]\) and
\begin{equation}
q_x(X)=\ind\{X\preceq x\}-M(x)'Q^{-1}X.
\label{eq:qx}
\end{equation}
 The population residual-marked moment is
$
\Delta(x)=\E[\varepsilon_{12}\ind\{X_{12}\preceq x\}].$
The adjustment in \eqref{eq:qx} yields a Neyman-orthogonal moment. For \(x\in\calX\), define the population residual mark
\begin{align}\label{eq:pop mark}r_x(Z)=(Y-X'\beta_*) q_x(X)=\varepsilon q_x(X).\end{align} Since \(\E[X_{12}\varepsilon_{12}]=0\), the following equality holds regardless of whether the null hypothesis is satisfied: \begin{equation}
\Delta(x)=\E[r_x(Z_{12})].
\label{eq:delta}
\end{equation}

\begin{lemma}[Omnibus characterization]
\label{lem:identification}
Suppose \(\E|\varepsilon_{12}|<\infty\). Then \(H_0\) holds if and only if \(\Delta(x)=0\) for every \(x\in\R^d\).
\end{lemma}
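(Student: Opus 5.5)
The plan has two directions, one routine and one via a uniqueness-of-measures argument. Let \(g(X_{12})=\E[\varepsilon_{12}\mid X_{12}]\). This is well defined and integrable because \(\E|\varepsilon_{12}|<\infty\). By the tower property,
\[
\Delta(x)=\E[\varepsilon_{12}\ind\{X_{12}\preceq x\}]=\E[g(X_{12})\ind\{X_{12}\preceq x\}].
\]

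\emph{Necessity.} If \(H_0\) holds, then \(g=0\) almost surely under the law of \(X_{12}\). Hence \(\Delta(x)=0\) for every \(x\in\R^d\). This direction needs nothing further.

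\emph{Sufficiency.} Suppose \(\Delta(x)=0\) for all \(x\in\R^d\). Define the finite signed Borel measure
\[
\nu(B)=\E[g(X_{12})\ind\{X_{12}\in B\}],
\]
which is finite because \(g(X_{12})\) is integrable. Write \(\nu=\nu^+-\nu^-\), where \(\nu^\pm(B)=\E[g^\pm(X_{12})\ind\{X_{12}\in B\}]\) are finite positive measures. The hypothesis says that \(\nu^+\) and \(\nu^-\) agree on every lower orthant \((-\infty,x]\).

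The lower orthants form a \(\pi\)-system, since the intersection of two orthants is the orthant at the coordinatewise minimum, and this \(\pi\)-system generates the Borel \(\sigma\)-field on \(\R^d\). The two measures also have equal total mass: letting \(x\uparrow\infty\) coordinatewise and applying dominated convergence gives \(\nu^+(\R^d)=\nu^-(\R^d)\). Alternatively, because \(\calX\) is a compact rectangle, one can take \(x\) to be its upper corner. Dynkin's \(\pi\)-\(\lambda\) theorem then gives \(\nu^+=\nu^-\) on all Borel sets, so \(\nu\equiv 0\).

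Next, take \(B=\{g>0\}\), which is Borel because \(g\) is a measurable function of \(X_{12}\). This gives \(\E[g^+(X_{12})]=0\). Likewise \(B=\{g<0\}\) gives \(\E[g^-(X_{12})]=0\). Therefore \(g(X_{12})=0\) almost surely, which is \(H_0\).

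\emph{Main obstacle.} The only delicate point is the \(\pi\)-\(\lambda\) extension for a signed measure. Handling it requires splitting into positive and negative parts and verifying that their total masses coincide, which is where the limit \(x\to\infty\) (or the compactness of \(\calX\)) is used. Everything else is the tower property and integrability.
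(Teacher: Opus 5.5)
Your proof is correct and follows the paper's argument. Necessity is the tower property. For sufficiency, you extend from lower orthants by the $\pi$-$\lambda$ theorem, handling the total mass via $x\uparrow\infty$, and then test against $\{g>0\}$ and $\{g<0\}$. The only difference is cosmetic: the paper does not split into $\nu^{+}$ and $\nu^{-}$. It applies the $\pi$-$\lambda$ theorem directly to $\{B:\mu(B)=0\}$, which is already a Dynkin system for a finite signed measure, so the step you flag as the main obstacle is not actually needed.
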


Lemma \ref{lem:identification} shows that lower-orthant indicators generate an omnibus collection of instruments. No smoothing parameter is needed. This is an important difference from kernel-based quadratic-form tests such as \citet{zheng1996}.

\subsection{Estimated residual-marked process}

Define the OLS estimator
\(\widehat\beta=Q_n^{-1}\Pn(XY)\), where \(Q_n=\Pn(XX')\). Let
\(\widehat\varepsilon_{ij}=Y_{ij}-X_{ij}'\widehat\beta\) and
\(M_n(x)=\Pn[X\ind\{X\preceq x\}]\). The estimated residual-marked process is
\begin{equation}
\widehat R_n(x)=\Pn[\widehat\varepsilon\ind\{X\preceq x\}].
\label{eq:Rhat}
\end{equation}
Define the sample orthogonalized instrument and the estimated residual mark, respectively, as
\begin{equation}
q_{n,x}(X)=\ind\{X\preceq x\}-M_n(x)'Q_n^{-1}X\quad \text{and}\quad \widehat r_{ij}(x)=\widehat\varepsilon_{ij}q_{n,x}(X_{ij}).
\label{eq:qnx}
\end{equation}
The OLS normal equations imply \(\Pn(X\widehat\varepsilon)=0\), and therefore the following equality is exact whenever \(Q_n\) is nonsingular:
\begin{equation}
\widehat R_n(x)
=\Pn\widehat r(x)
=\frac{1}{N_n}\sum_{i<j}\widehat r_{ij}(x).
\label{eq:exact-orthogonal}
\end{equation}
Therefore,  $\widehat{R}_n(x)$
 is the sample analog of $\Delta(x)$ as given in \eqref{eq:delta}.

\begin{assumption}[Regressors and moments]
\label{ass:moments}
The lower-orthant indicators below are indexed by
\(\ind\{X_{ij}\preceq x\}\), \(x\in\calX\).  \(\norm{X_{12}}\leq C_X\) almost surely for a finite constant \(C_X\). There exists \(\delta>0\) such that \(\E|Y_{12}|^{4+\delta}<\infty\). The smallest eigenvalue of \(Q=\E[X_{12}X_{12}']\) is bounded away from zero.
\end{assumption}

Bounded regressors are imposed to keep the empirical-process verification transparent. They can be replaced by suitable moment and weighted-entropy conditions. The moment condition is used to estimate the two covariance components in the corrected bootstrap.

\section{Uniform Asymptotic Theory}
\label{sec:asymptotics}

The asymptotic analysis has two complementary components. This section first develops the general projection theory needed to
characterize the dyadic empirical process \(\Gn r_x=\sqrt{n}(\mathbb P_n r_x-Pr_x)\). We then establish a
uniform linearization showing that
\(\sqrt n\{\widehat R_n(x)-\Delta(x)\}\) is uniformly approximated by
\(\Gn r_x\). Combining these two results allows us to derive the asymptotic
behavior of the estimated residual-marked process \(\widehat R_n\).

\subsection{Theory for general dyadic empirical processes}

For a square-integrable function \(f\) of a dyadic observation, let
\(\mu_f=Pf=E[f(Z_{ij})]\) and define its first- and second-order projection
terms by
\begin{align}
f_1(u)
&=E[f(Z_{ij})\mid U_i=u]-\mu_f,
\label{eq:f1}\\
f_2(u,v,w)
&=f\{\tau(u,v,w)\}-\mu_f-f_1(u)-f_1(v).
\label{eq:f2}
\end{align}
The subscripts ``1'' and ``2'' indicate the orders of the corresponding projection terms.
By symmetry of the dyadic sampling structure, the two first-order
projections coincide:
\(
f_1(u)=E[f(Z_{ij})\mid U_j=u]-\mu_f.
\)
Moreover, by construction,
\[
E[f_1(U_i)]=0,\qquad
E[f_2(U_i,U_j,U_{ij})\mid U_i]=0,\qquad
E[f_2(U_i,U_j,U_{ij})\mid U_j]=0.
\]

Lemma \ref{lem:projection} in the appendix establishes the exact decomposition
\begin{equation}
\Gn f
=
\frac{2}{\sqrt n}\sum_{i=1}^n f_1(U_i)
+\sqrt n\,\mathbb U_n f_2,
\label{eq:main-projection}
\end{equation}
where
\(
\mathbb U_n f_2
=
N_n^{-1}\sum_{i<j}f_2(U_i,U_j,U_{ij}).
\)
For each fixed \(f\), the degeneracy of \(f_2\) implies that the second term
is asymptotically negligible. The following result strengthens this
pointwise conclusion by establishing uniform negligibility over the
function class.

\begin{lemma}[Uniformly negligible second-order remainder]
\label{lem:uniform-degenerate}
Suppose Assumption \ref{ass:sampling} holds. Let \(\calF\) be pointwise measurable and VC type with envelope \(F\) satisfying
\(PF^2<\infty\).  Let \(f_2\) denote the second-order projection of
each \(f\in\calF\) given in \eqref{eq:f2}.  Then
\begin{equation}
E\left[\sup_{f\in\calF} |\mathbb U_nf_2|  \right]=O(n^{-1}).
\label{eq:uniform-degenerate}
\end{equation}
\end{lemma}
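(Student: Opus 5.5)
Split $f_2$ into two pieces, as the introduction suggests. Define the node-pair component
\[
g_f(u,v)=E[f(\tau(u,v,U_{ij}))]-\mu_f-f_1(u)-f_1(v)
\]
and the dyad-specific component
\[
h_f(u,v,w)=f\{\tau(u,v,w)\}-E[f(\tau(u,v,U_{ij}))].
\]
Then $f_2=g_f+h_f$, and
\[
\mathbb U_n f_2=N_n^{-1}\sum_{i<j}h_f(U_i,U_j,U_{ij})+N_n^{-1}\sum_{i<j}g_f(U_i,U_j).
\]
I bound the expected supremum of each sum separately by $O(n^{-1})$.

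\emph{Dyad-specific term.} Conditional on $(U_i)_{i\le n}$, the summands $h_f(U_i,U_j,U_{ij})$, $i<j$, are independent and have conditional mean zero, because the $U_{ij}$ are i.i.d.\ and independent of the node variables. Standard symmetrization conditional on $(U_i)$ gives
\[
E\sup_f\Big|\sum_{i<j}h_f\Big|\le 2E\sup_f\Big|\sum_{i<j}\epsilon_{ij}\,\tilde f(Z_{ij})\Big|,
\]
with i.i.d.\ Rademacher $\epsilon_{ij}$. Here $\tilde f$ ranges over $\calF$ together with the class of conditional means $f\mapsto E[f(\tau(u,v,U))]$. The latter is handled through the same symmetrization step, using Jensen, so it suffices to treat $\calF$ itself.

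Conditional on the data, the Rademacher process is sub-Gaussian with respect to the empirical $L_2(\Pn)$ metric. The VC-type uniform entropy bound and Dudley's integral then give a conditional bound of order $\sqrt{N_n}\,(\Pn F^2)^{1/2}J(1)$, where $J(1)$ is the uniform entropy integral. Taking expectations and applying Jensen yields $E\sup_f|\cdot|\lesssim \sqrt{N_n}\,(PF^2)^{1/2}$. After dividing by $N_n$ this term is $O(N_n^{-1/2})=O(n^{-1})$.

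\emph{Node-pair term.} The kernel $g_f(u,v)$ is symmetric and completely degenerate: $E[g_f(U_1,v)]=0$ for every $v$, by construction of $f_1$. So $\sum_{i<j}g_f(U_i,U_j)$ is a canonical U-process of order two indexed by $\{g_f:f\in\calF\}$. I would proceed in three steps.
\begin{enumerate}
\item Decouple using de la Peña--Montgomery-Smith, replacing $g_f(U_i,U_j)$ by $g_f(U_i,U_j')$ with an independent copy $(U_j')$.
\item Randomize via degeneracy (the Arcones--Giné symmetrization for canonical U-processes), reducing the problem to $E\sup_f\big|\sum_{i\ne j}\epsilon_i\epsilon_j'g_f(U_i,U_j')\big|$.
\item Condition on the $U$'s. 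The decoupled Rademacher chaos of order two has a sub-exponential (Hanson--Wright type) increment bound with respect to the empirical $L_2$ metric on the $n^2$ pairs. Dudley's entropy integral with $\log N$ in place of $\sqrt{\log N}$ still converges for VC-type classes, since $\int_0^1\log(1/\varepsilon)\,d\varepsilon<\infty$. This gives a conditional bound of order $n\,\big(n^{-2}\sum_{i,j}G(U_i,U_j')^2\big)^{1/2}$, where $G$ is an envelope for the $g_f$.
\end{enumerate}
Taking expectations, this term is $O(n)$, and dividing by $N_n\asymp n^2$ gives $O(n^{-1})$.

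\emph{Main obstacle.} The hard part is showing that the class $\{g_f\}$ is VC type with a square-integrable envelope. It is built from $f$ through conditional expectations $E[f(\tau(u,v,U))]$ and the projections $f_1$, and these operations do not preserve the combinatorial VC property. I would argue at the level of uniform covering numbers instead. Conditional expectation is an $L_2$ contraction, so the Haussler-type bound $N(\varepsilon\|F\|,\cdot)\lesssim\varepsilon^{-v}$ transfers with envelope $E[F\mid u,v]$. This transfer needs a small argument because the uniform entropy bound is over finitely discrete measures (approximate general measures by discrete ones). Sums of three such classes keep polynomial entropy, and the envelope $G\le E[F\mid U_i,U_j]+2E[F\mid U_i]+2E[F\mid U_j]+2PF$ has finite second moment because $PF^2<\infty$. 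Pointwise measurability of $\calF$ ensures the suprema are measurable, so no outer expectations are needed.
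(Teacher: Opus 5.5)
Your proposal is correct and follows the paper's proof essentially step for step. It uses the same split of $f_2$ into the node-pair part $b_f$ (your $g_f$) and the dyad-specific part $c_f$ (your $h_f$). The dyad term is handled by conditional symmetrization plus Dudley. The node term is handled by decoupling, two-sided Rademacher randomization, and hypercontractive chaining with a $\log N$ entropy integral. The entropy of $\{g_f\}$ comes from conditional-expectation contractions.

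One detail needs adjusting: the envelope in that entropy transfer. The contraction bounds $\|g_f-g_{f'}\|_{L_2(Q)}$ by $L_2$ distances under induced laws, and the envelope norm under those laws is $\|\{E[F^2\mid u,v]\}^{1/2}\|_{L_2(Q)}$, which can exceed $\|E[F\mid u,v]\|_{L_2(Q)}$. So you should use envelopes of the form $\{E[F^2\mid\cdot]\}^{1/2}$, as the paper does, rather than $E[F\mid\cdot]$; this keeps the normalized covering bound uniform in $Q$.
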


Lemma \ref{lem:uniform-degenerate} implies that
\(\sup_{f\in\calF}|\sqrt n\,\mathbb U_n f_2|=o_p(1)\), so the second-order term is uniformly negligible at the first-order asymptotic scale. The proof of Lemma \ref{lem:uniform-degenerate} is nontrivial and is provided in Appendix \ref{app:uniform-remainder}. It first
decomposes \(f_2\) into a component driven by the dyad-specific latent
variable and a component driven by the two node variables. After
symmetrization, the former yields an ordinary Rademacher process, whereas
the latter yields a decoupled Rademacher chaos of order two and therefore
requires a separate chaining argument based on hypercontractivity.
VC-type entropy bounds provide uniform control of both components.

Combining this reduction with the i.i.d. empirical-process limit for
\(\{f_1:f\in\calF\}\) yields the following general result for the dyadic empirical process.

\begin{lemma}[Dyadic empirical-process limit]
\label{lem:ddg}
Let \(\calF\) be a pointwise measurable VC-type class with envelope \(F\) satisfying \(PF^2<\infty\). Under Assumption \ref{ass:sampling},
\begin{equation}
\Gn=\left\{\sqrt n(\Pn f-Pf):f\in\calF\right\}
\weak
\mathbb G=\{\mathbb G(f):f\in\calF\}
\quad\text{in }\ell^\infty(\calF),
\label{eq:ddg-limit}
\end{equation}
where \(\mathbb G\) is a centered tight Gaussian process with covariance
\begin{equation}
\E[\mathbb G(f)\mathbb G(g)]
=4\Cov\!\left(\E[f(Z_{12})\mid U_1],\E[g(Z_{12})\mid U_1]\right).
\label{eq:ddg-cov}
\end{equation}
\end{lemma}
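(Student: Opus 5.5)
The plan is to combine the exact decomposition \eqref{eq:main-projection} with Lemma \ref{lem:uniform-degenerate} and a classical i.i.d. Donsker argument. By Lemma \ref{lem:projection}, for every \(f\in\calF\),
\[
\Gn f=\frac{2}{\sqrt n}\sum_{i=1}^n f_1(U_i)+\sqrt n\,\mathbb U_nf_2 .
\]
Lemma \ref{lem:uniform-degenerate} and Markov's inequality give
\(\sup_{f\in\calF}|\sqrt n\,\mathbb U_nf_2|=O_p(n^{-1/2})=o_p(1)\) (in outer probability if needed). Hence \(\Gn\) and the process
\(f\mapsto 2n^{-1/2}\sum_{i}f_1(U_i)\) differ by \(o_p(1)\) in \(\ell^\infty(\calF)\). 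By the Slutsky-type lemma for weak convergence in \(\ell^\infty(\calF)\), it is enough to prove weak convergence of the second process.

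That process equals \(2\,\mathbb G_n^{U}\) evaluated on the class \(\calF_1=\{f_1:f\in\calF\}\). Here \(\mathbb G_n^{U}\) is the standard empirical process of the i.i.d. sample \(U_1,\dots,U_n\), since \(E f_1(U_1)=0\). I will show that \(\calF_1\) is \(P_U\)-Donsker in four steps.

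\begin{enumerate}[label=(\roman*)]
\item \emph{Envelope.} Write \(f_1(u)=\bar f(u)-Pf\), where \(\bar f(u)=E[f\{\tau(u,U_2,U_{12})\}]\) is the conditional-expectation operator applied to \(f\). An envelope is \(F_1(u)=E[F(Z_{12})\mid U_1=u]+PF\). By Jensen, \(\|F_1\|_{P_U,2}\le 2\|F\|_{P,2}<\infty\).
\item \emph{Entropy.} Conditional expectation is a contraction in \(L_2\), and by Jensen \(\|\bar f-\bar g\|_{Q_U,2}\le\|f-g\|_{\tilde Q,2}\). Here \(\tilde Q\) is the law of \(\tau(u,U_2,U_{12})\) with \(u\sim Q_U\), and \(\bar F\) is controlled by \(\|F\|_{\tilde Q,2}\). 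The VC-type uniform entropy bound for \(\calF\) then transfers to \(\{\bar f\}\), and adding the constants \(Pf\) costs only a one-dimensional factor in the covering numbers. This gives a uniform entropy integral for \(\calF_1\) with envelope \(F_1\).
\item \emph{Measurability.} Pointwise measurability is inherited: if \(f_m\to f\) pointwise with \(f_m\) in a countable subclass, then \(\bar f_m\to \bar f\) by dominated convergence under the envelope.
\item \emph{Conclusion.} Theorem 2.5.2 of van der Vaart and Wellner gives \(\mathbb G_n^U\weak\mathbb G^U\) in \(\ell^\infty(\calF_1)\), a tight Brownian bridge. Since \(\calF\) and \(\calF_1\) are indexed by the same \(f\), this also gives weak convergence in \(\ell^\infty(\calF)\).
\end{enumerate}

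The covariance then follows directly:
\[
\Cov\bigl(2\mathbb G^U(f_1),2\mathbb G^U(g_1)\bigr)=4E[f_1(U_1)g_1(U_1)].
\]
This equals \(4\Cov(E[f(Z_{12})\mid U_1],E[g(Z_{12})\mid U_1])\), which is \eqref{eq:ddg-cov}. Tightness of \(\mathbb G\) is inherited from the Donsker property.

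I expect the main obstacle to be step (ii): the entropy transfer under the projection \(f\mapsto f_1\). Uniform entropy is a supremum over finitely discrete \(Q_U\), and the induced \(\tilde Q\) is generally not finitely discrete. I would first handle this by approximating \(\tilde Q\) with finitely discrete measures, as in the standard argument that VC-type bounds extend to all probability measures with \(\tilde QF^2<\infty\). If that approach becomes awkward, I would instead verify the bracketing-free alternative: show that \(\calF_1\) is totally bounded in \(L_2(P_U)\) and check asymptotic equicontinuity via symmetrization. The key point there is that \(E\sup|\mathbb G_n^U(\bar f-\bar g)|\) is dominated, through conditional Jensen, by the corresponding quantity for the dyad-level class evaluated at an auxiliary i.i.d. draw \((U_i,U_i',U_{ii}')\). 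For that class the VC-type maximal inequality applies directly.
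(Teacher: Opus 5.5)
Your route is the same as the paper's: the exact projection identity, Lemma~\ref{lem:uniform-degenerate} plus Markov to remove $\sqrt n\,\mathbb U_nf_2$ uniformly, a uniform-entropy transfer from $\calF$ to $\calF_1$ via conditional Jensen, the i.i.d.\ Donsker theorem for $\calF_1$, and Slutsky in $\ell^\infty(\calF)$. However, step (ii) does not work as written, and it is exactly where the paper's proof does its real work.

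\textbf{The problem in step (ii).} To turn the bound $N(\epsilon\|F\|_{\tilde Q,2},\calF,L_2(\tilde Q))\le (A/\epsilon)^v$ into a bound on $N(\epsilon\|F_1\|_{Q_U,2},\calF_1,L_2(Q_U))$ that is uniform in $Q_U$, you need the envelope of $\calF_1$ to dominate the induced norm from below, namely $\|F\|_{\tilde Q,2}\le C\|F_1\|_{Q_U,2}$. Your envelope $F_1=E[F(Z_{12})\mid U_1]+PF$ only satisfies the opposite inequality $\|E[F\mid U_1]\|_{Q_U,2}\le\|F\|_{\tilde Q,2}$, because Jensen runs the wrong way. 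This is what your remark that $\bar F$ is controlled by $\|F\|_{\tilde Q,2}$ amounts to. The ratio $\|F\|_{\tilde Q,2}/\|F_1\|_{Q_U,2}$ can be unbounded over $Q_U$. For example, take $Q_U=\delta_u$ at points where $E[F^2\mid U_1=u]$ is large relative to $E[F\mid U_1=u]^2$. The transferred covering bound then degrades to $(A\|F\|_{\tilde Q,2}/(\epsilon\|F_1\|_{Q_U,2}))^v$, which is not uniform.

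\textbf{The fix.} Use the larger, still square-integrable envelope $F_1(u)=\{E[F(Z_{12})^2\mid U_1=u]\}^{1/2}+\|F\|_{P,2}$, as the paper does. Apply the VC bound under the mixture $(Q^{(1)}+P)/2$. Then $\|F_1\|_{Q,2}^2\ge\|F\|_{Q^{(1)},2}^2+\|F\|_{P,2}^2$ controls the conditional-expectation part and the constant $Pf$ at the same time. Your one-dimensional covering of the constants also works with this envelope, since $\|F_1\|_{Q,2}\ge PF$. This envelope lower bound is the crux of the entropy transfer. The finitely-discrete issue you flag is secondary; the paper sidesteps it by stating the VC bound over all probability measures.

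With this repair the rest of your argument goes through. Your measurability check in (iii) is a detail the paper leaves implicit, but you should redefine functions on the null set where $E[F\mid U_1=u]=\infty$ before invoking dominated convergence.
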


Lemma \ref{lem:ddg} provides the asymptotic limit for a general dyadic empirical process.  For the inference analysis in Section
\ref{sec:bootstrap}, we  further introduce an ideal node-multiplier
process indexed by the fixed population class \(\calF\).  Let \((\xi_i)_{i=1}^n\) be i.i.d. multipliers, independent of the dyadic
sample, such that
\begin{align}
E\xi_i=0,\qquad E\xi_i^2=1,
\qquad
E\exp(t\xi_i)\leq\exp(C_\xi t^2)
\quad\text{for every }t\in\mathbb R,\label{eq:multipliers}
\end{align}
where \(C_\xi<\infty\). Define the ideal node-multiplier process by
\begin{equation}
\mathbb G_n^*
=
\{\mathbb G_n^*f:f\in\calF\},
\qquad
\mathbb G_n^*f
=
\frac{2}{\sqrt n}\sum_{i=1}^n\xi_i
\left\{
\frac{1}{n-1}\sum_{j\neq i}^nf(Z_{ij})-\Pn f
\right\}.
\label{eq:ideal-multiplier}
\end{equation}
The centered incident-dyad average in braces estimates the first-order node
projection \(f_1(U_i)\). The following lemma shows that the multiplier process
therefore reproduces the functional limit in Lemma \ref{lem:ddg}.

\begin{lemma}[Ideal node-multiplier functional limit]
\label{lem:ideal-multiplier}
Suppose Assumption \ref{ass:sampling} holds and \(\calF\) is a pointwise
measurable VC-type class with envelope \(F\) satisfying \(PF^2<\infty\).
Then, conditionally on the data,
\[
\mathbb G_n^*
\weak
\mathbb G
\quad\text{in }\ell^\infty(\calF)
\]
in probability, where \(\mathbb G\) is the centered Gaussian process in
Lemma \ref{lem:ddg}.
\end{lemma}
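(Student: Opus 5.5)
The plan is to compare \(\mathbb G_n^*\) with the infeasible multiplier process
\[
\widetilde{\mathbb G}_n^*f=\frac{2}{\sqrt n}\sum_{i=1}^n\xi_i f_1(U_i),
\]
where \(f_1\) is the first-order node projection in \eqref{eq:f1}.

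\emph{Step 1: the infeasible process.} Conditionally on \((U_i)\), \(\widetilde{\mathbb G}_n^*\) is an ordinary multiplier empirical process over the i.i.d. sample \(U_1,\dots,U_n\). It is indexed by the class \(\calF_1=\{f_1:f\in\calF\}\). By Jensen, this class is VC type with envelope \(E[F(Z_{12})\mid U_1]+PF\), which is square integrable. The conditional multiplier central limit theorem (van der Vaart--Wellner, Thm.~2.9.6/2.9.7) then applies, since \(\xi_i\) is sub-Gaussian and \(\calF_1\) is Donsker. It gives \(\widetilde{\mathbb G}_n^*\weak\mathbb G\) conditionally in probability. The covariance is \(4\Cov(f_1(U_1),g_1(U_1))\), which matches \eqref{eq:ddg-cov}.

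\emph{Step 2: the difference.} It remains to show
\[
E^*\sup_{f\in\calF}\bigl|\mathbb G_n^*f-\widetilde{\mathbb G}_n^*f\bigr|=o_p(1).
\]
Write \(\Delta_i(f)=\frac1{n-1}\sum_{j\ne i}f(Z_{ij})-\Pn f-f_1(U_i)\). Substituting the Hoeffding decomposition \(f(Z_{ij})=\mu_f+f_1(U_i)+f_1(U_j)+f_2(U_i,U_j,U_{ij})\) gives
\[
\Delta_i(f)=\frac1{n-1}\sum_{j\ne i}f_1(U_j)+\frac1{n-1}\sum_{j\ne i}f_2(U_i,U_j,U_{ij})-(\Pn f-\mu_f).
\]
The first and third pieces are, up to \(O(1/n)\) terms, common to all \(i\). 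They contribute \(n^{-1/2}\sum_i\xi_i\) times a quantity that is uniformly \(O_p(n^{-1/2})\) by Lemma \ref{lem:ddg}, so this contribution is negligible. The leftover \(-f_1(U_i)/(n-1)\) term is a uniformly \(O_p(n^{-1/2})\) multiplier process, hence also negligible. The key term is
\[
A_n(f)=\frac{2}{\sqrt n}\sum_i\xi_i\frac{1}{n-1}\sum_{j\ne i}f_2(U_i,U_j,U_{ij}).
\]

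\emph{Step 3: the main obstacle.} Pointwise, \(E A_n(f)^2\le \frac{4}{n(n-1)^2}\sum_i E\bigl(\sum_j f_2\bigr)^2\), where \(\sum_j f_2\) means \(\sum_{j\ne i}f_2(U_i,U_j,U_{ij})\). The \(f_2\) terms are conditionally uncorrelated given \(U_i\), because \(E[f_2\mid U_i]=0\) and the \(j\)-summands are independent given \(U_i\). So this quantity is \(O(1/n)\). The uniform version is the main obstacle. Because \(\xi\) is sub-Gaussian, conditionally on the data \(A_n\) is a sub-Gaussian process with respect to the random metric
\[
d_n(f,g)^2=\frac4n\sum_i\Bigl(\frac1{n-1}\sum_{j\ne i}(f_2-g_2)\Bigr)^2 .
\]
Dudley's entropy integral then bounds \(E^*\sup_f|A_n(f)|\) by \(\int_0^{\sigma_n}\sqrt{\log N(\epsilon,\calF,d_n)}\,d\epsilon\), where \(\sigma_n=\sup_f d_n(f,0)\). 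Covering numbers of \(d_n\) can be controlled by \(L_2\) covering numbers of \(\calF\) under the empirical-type measures built from the incident-dyad averages: by Cauchy--Schwarz, \(d_n^2\le 4\,\Pn(f_2-g_2)^2\cdot(1+o(1))\). These are polynomial by the VC property, with envelope moments handled via \(PF^2<\infty\). It therefore suffices to show \(\sigma_n\to_p0\). For this I would expand \(\sup_f \frac1n\sum_i(\frac1{n-1}\sum_j f_2)^2\) into a diagonal part and an off-diagonal part. The diagonal part is \(\frac1{(n-1)^2}\sum_i\sum_j f_2^2\le C n^{-1}\Pn F_2^2=O_p(n^{-1})\). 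The off-diagonal part, \(\frac{1}{n(n-1)^2}\sum_i\sum_{j\ne k}f_2(\cdot_{ij})f_2(\cdot_{ik})\), is a degenerate U-type process. I would bound it uniformly by the same symmetrization and decoupled-chaos argument used for Lemma \ref{lem:uniform-degenerate}, applied to the VC-type product class \(\{f_2g_2\}\). If that proves too heavy, I would fall back on truncating \(F\) at a level \(M\) and using uniform entropy with bounded functions. That would give \(\sigma_n^2=O_p(n^{-1})+o_p(1)\), and the entropy integral would then vanish.

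\emph{Step 4: conclusion.} Combining Steps 1--3 gives \(\sup_{h\in BL_1}|E^*h(\mathbb G_n^*)-E^*h(\widetilde{\mathbb G}_n^*)|\le E^*\min(2,\|\mathbb G_n^*-\widetilde{\mathbb G}_n^*\|_\infty)\to_p0\). Together with Step 1, this yields the claim.
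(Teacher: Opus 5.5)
Your argument is correct, and its skeleton matches the paper's. Write \(\mathbb G_n^*f\) as a multiplier process driven by \(f_1(U_i)\) plus a remainder. Then kill the remainder by conditional sub-Gaussian chaining under the random node metric. Covering numbers transfer through the dyad-to-node contraction \(n^{-1}\sum_i\bar h_i^2\le\Pn h^2\), and the radius goes to zero by controlling incident averages of \(f_2\).

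You differ from the paper in two places.

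\emph{Leading term.} You cite the standard conditional multiplier CLT for the Donsker class \(\calF_1\). The paper instead keeps the sample-centered leading term and rebuilds that theorem by hand: conditional Lindeberg in Lemma \ref{lem:conditional-fidi}, plus chaining for conditional equicontinuity. Your version is shorter. You also peel off the \(i\)-invariant pieces explicitly, each bounded by \(E^*|n^{-1/2}\sum_i\xi_i|\le1\) times a uniformly \(O_p(n^{-1/2})\) factor. This isolates a cleaner remainder \(A_n(f)\) than the paper's \(e_{i,n}(f)\), which mixes \(f_1\) and \(f_2\) terms.

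\emph{Radius.} Here your diagonal/off-diagonal expansion for \(\sigma_n\) is heavier than necessary. It is also not literally ``the same argument'' as Lemma \ref{lem:uniform-degenerate}. The off-diagonal sum runs over two-stars centred at node \(i\), so you would first have to condition on the hub variable \(U_i\). Once you do that, \(\{f_2(U_i,U_j,U_{ij})\}_{j\ne i}\) are i.i.d.\ and centered. The ordinary second-moment maximal inequality then gives directly \(E\sup_{f}|(n-1)^{-1}\sum_{j\ne i}f_{2,ij}|^2\le C\,PF_2^2/(n-1)\). Averaging over \(i\) yields \(\sigma_n^2=O_p(n^{-1})\), with no expansion, no chaos bound and no truncation. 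This is exactly how the paper proves Lemma \ref{lem:node-recovery}.

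Minor points:
\begin{enumerate}
\item The displayed diagonal term is missing the factor \(1/n\). Your bound \(Cn^{-1}\Pn F_2^2\) is right for the correct expression.
\item Transferring uniform entropy from \(\calF\) to \(\calF_1\) via Jensen needs the envelope \(\{E[F(Z_{12})^2\mid U_1]\}^{1/2}+\|F\|_{P,2}\), not \(E[F(Z_{12})\mid U_1]+PF\). This is harmless, since any square-integrable envelope suffices for the Donsker property.
\item Because \(f_1(U_i)\) and \(f_{2,ij}\) are latent, \(E^*\) should be read as integration over \(\xi\) with the whole latent array held fixed. This is legitimate, since \(\xi\) is independent of that array and \(\mathbb G_n^*\) depends on it only through the data.
\end{enumerate}
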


Define \(\calR=\{r_x:x\in\calX\}\). Lemma
\ref{lem:linearization} below shows that the leading term governing the
estimated residual-marked process
\(\sqrt n\{\widehat R_n(x)-\Delta(x)\}\) is \(\Gn r_x\). Moreover, Lemma
\ref{lem:entropy} in the appendix verifies that \(\calR\) is a VC-type class
with a square-integrable envelope. Therefore, the general conditional
multiplier result in Lemma \ref{lem:ideal-multiplier} applies to \(\calR\).
Setting \(f=r_x\) in \eqref{eq:ideal-multiplier} gives
\begin{equation}
\mathbb G_n^*r_x
=
\frac{2}{\sqrt n}\sum_{i=1}^n\xi_i
\left\{
\frac{1}{n-1}\sum_{j\neq i}r_x(Z_{ij})-\Pn r_x
\right\}.
\label{eq:ideal-bootstrap}
\end{equation}
Consequently, conditionally on the data,
\(\{\mathbb G_n^*r_x:x\in\calX\}\) consistently reproduces the node-scale
weak limit of \(\{\Gn r_x:x\in\calX\}\).

However, this ideal process is infeasible in the specification-testing application
because \(r_x\) depends on unknown population quantities. It nevertheless
provides the theoretical benchmark for the feasible raw bootstrap developed
later. In particular, Section \ref{sec:bootstrap} further shows that replacing \(r_x\) with its
estimated counterpart \(\widehat r_{ij}(x)\) is uniformly asymptotically negligible. We next
specialize the general process theory to the residual-marked sample process
and derive its functional limits under the relevant variance regimes before
constructing the feasible bootstrap.

\subsection{Theory for the estimated residual-marked process}

This subsection connects the estimated residual-marked process to the
population-indexed dyadic empirical process and then combines this
linearization with the results of the preceding subsection to derive the
functional limits of \(\widehat R_n\) under the relevant variance regimes.
The following lemma establishes the required uniform connection.

\begin{lemma}[Uniform linearization]
\label{lem:linearization}
Under Assumptions \ref{ass:sampling}-\ref{ass:moments},
\begin{equation}
\sup_{x\in\calX}
\left|
\sqrt n\{\widehat R_n(x)-\Delta(x)\}
-\Gn r_x
\right|
=o_p(1).
\label{eq:uniform-linearization}
\end{equation}
\end{lemma}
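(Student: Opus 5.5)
We begin from the exact algebraic identity
\[
\widehat R_n(x)=\Pn[\widehat\varepsilon\,\ind\{X\preceq x\}]
=\Pn[\varepsilon\,\ind\{X\preceq x\}]-M_n(x)'(\widehat\beta-\beta_*),
\]
which follows from \(\widehat\varepsilon_{ij}=\varepsilon_{ij}-X_{ij}'(\widehat\beta-\beta_*)\). Since \(\widehat\beta-\beta_*=Q_n^{-1}\Pn(X\varepsilon)\) exactly (whenever \(Q_n\) is nonsingular), we obtain
\[
\widehat R_n(x)-\Delta(x)
=(\Pn-P)[\varepsilon\ind\{X\preceq x\}]-M_n(x)'Q_n^{-1}\Pn(X\varepsilon).
\]
The target is \(\Gn r_x/\sqrt n=(\Pn-P)[\varepsilon\ind\{X\preceq x\}]-M(x)'Q^{-1}\Pn(X\varepsilon)\), where we used \(P(X\varepsilon)=0\). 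The difference is therefore exactly
\[
\sqrt n\{\widehat R_n(x)-\Delta(x)\}-\Gn r_x
=-\{M_n(x)'Q_n^{-1}-M(x)'Q^{-1}\}\sqrt n\,\Pn(X\varepsilon),
\]
and the proof reduces to bounding this remainder uniformly in \(x\).

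We control the two factors separately. First, \(\sqrt n\,\Pn(X\varepsilon)=\Gn(X\varepsilon)\) is a finite vector of fixed functions with square-integrable envelope, because \(\norm{X}\le C_X\) and \(\E\varepsilon^2<\infty\) under Assumption \ref{ass:moments}. Applying Lemma \ref{lem:ddg}, or simply the projection decomposition \eqref{eq:main-projection} with Chebyshev's inequality, gives \(O_p(1)\). Second, we show \(\sup_{x}\norm{M_n(x)-M(x)}=o_p(1)\) and \(\norm{Q_n-Q}=o_p(1)\). The class \(\{X_k\ind\{X\preceq x\}:x\in\calX,\,k\le d\}\) is VC type, being bounded functions times lower-orthant indicators, with bounded envelope \(C_X\). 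Lemma \ref{lem:ddg} then yields \(\sup_x\norm{\Gn(X\ind\{X\preceq x\})}=O_p(1)\), hence \(\sup_x\norm{M_n(x)-M(x)}=O_p(n^{-1/2})\). The same argument gives \(Q_n-Q=O_p(n^{-1/2})\). Since \(\lambda_{\min}(Q)\) is bounded away from zero, \(Q_n\) is invertible with probability approaching one and \(\norm{Q_n^{-1}-Q^{-1}}=O_p(n^{-1/2})\).

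Finally, we write
\[
M_n'Q_n^{-1}-M'Q^{-1}=(M_n-M)'Q_n^{-1}+M'(Q_n^{-1}-Q^{-1}),
\]
and use \(\sup_x\norm{M(x)}\le C_X\). This makes the bracket \(O_p(n^{-1/2})\) uniformly in \(x\), so the remainder is \(O_p(n^{-1/2})=o_p(1)\), establishing \eqref{eq:uniform-linearization}. The only step requiring care is the uniform control of \(M_n-M\), which needs the VC-type verification and an application of Lemma \ref{lem:ddg} to the indicator class. Everything else is exact algebra, because the paper's choice of \(\beta_*\) as a projection coefficient makes \(P(X\varepsilon)=0\) hold under both the null and the alternative. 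If one prefers not to invoke functional weak convergence, a uniform law of large numbers for the dyadic class suffices, since only \(o_p(1)\) is needed for this factor.
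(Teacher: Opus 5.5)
Your proposal is correct and follows the paper's proof essentially step for step. The same exact OLS algebra reduces the discrepancy to $-\{M_n(x)'Q_n^{-1}-M(x)'Q^{-1}\}\sqrt n\,\Pn(X\varepsilon)$, which is bounded as a uniformly vanishing factor times an $O_p(1)$ factor. The only difference is that you extract the sharper rate $O_p(n^{-1/2})$ for $\sup_{x}\|M_n(x)-M(x)\|$ and $\|Q_n-Q\|$ from Lemma~\ref{lem:ddg}, whereas the paper needs, and proves in Lemma~\ref{lem:prelim}, only uniform consistency, which you note suffices.
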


Lemma \ref{lem:linearization} reduces the asymptotic behavior of the estimated
process to that of \(\{\Gn r_x:x\in\calX\}\). By Lemma \ref{lem:ddg}, the
behavior of this process at the \(\sqrt n\) rate is determined by the
first-order node projections $\frac{2}{\sqrt n}\sum_{i=1}^n\psi_x(U_i)$, where 
\begin{align}\label{eq:first-order-node-projection}
\psi_x(U_1)
=
\E[r_x(Z_{12})\mid U_1]-Pr_x
=
\E[r_x(Z_{12})\mid U_1]-\Delta(x).
\end{align}
Define the node-level covariance kernel
\begin{equation}
\Omega(x_1,x_2)
=
4\E[\psi_{x_1}(U_1)\psi_{x_2}(U_1)]
\label{eq:Omega}
\end{equation}
and the dyad-level covariance kernel
\begin{equation}
\Gamma(x_1,x_2)
=
\Cov\{r_{x_1}(Z_{12}),r_{x_2}(Z_{12})\}.
\label{eq:Gamma}
\end{equation}
The relative importance of these two covariance components determines the
appropriate normalization and, later, the appropriate bootstrap procedure.

\begin{assumption}[Two variance regimes]
\label{ass:nondeg}
One of the following two scenarios holds.
\begin{enumerate}[label=(\roman*)]
\item \emph{Node-nondegenerate dyads:}
\(
\sup_{x\in\calX}\Omega(x,x)>0.
\)
\item \emph{Independent dyads:}
\((Z_{ij})_{i<j}\) are mutually independent and identically distributed, and
\(
\sup_{x\in\calX}\Gamma(x,x)>0.
\)
\end{enumerate}
\end{assumption}

Scenario (i) allows \(\Omega(x,x)=0\) at some indices and therefore permits
partial degeneracy, but it requires the node projection to be nontrivial
somewhere on \(\calX\). Scenario (ii) instead imposes total first-order
degeneracy while retaining nontrivial dyad-level variation.  These two scenarios are the leading cases considered in the literature, although typically not for the more general supremum-type process studied here; see, for example, \citet{mackinnon2021wild}, \citet{chiang2023standard}, and \citet{hounyo2024wild}. Two
regimes lead to different effective sample sizes and are treated separately 
below. The two regimes considered here do not exhaust all possible limits at the
boundary of total first-order degeneracy. Under alternative forms of total
first-order degeneracy, the second-order component may remain leading and
generate a non-Gaussian limit; see, e.g., 
\citet{menzel2021bootstrap}, \citet{davezies2025analytic}, and \citet{hounyo2025projection} for related
discussions of degeneracy in the recent literature. 

Combining Lemmas \ref{lem:ddg} and \ref{lem:linearization}  gives the
node-scale functional limit.

\begin{theorem}[Uniform weak convergence]
\label{thm:weak}
Under Assumptions \ref{ass:sampling}-\ref{ass:moments},
\begin{equation}
\sqrt n\{\widehat R_n-\Delta\}
\weak
\mathbb G_R
\quad\text{in }\ell^\infty(\calX),
\label{eq:weak-general}
\end{equation}
where \(\mathbb G_R\) is a centered tight Gaussian process with covariance
kernel \(\Omega\) defined in \eqref{eq:Omega}. Under \(H_0\),
\(\Delta(x)=0\) for every \(x\in\calX\), and hence
\(
\sqrt n\,\widehat R_n
\weak
\mathbb G_R\)
 in $\ell^\infty(\calX).$
Under scenario (i) of Assumption \ref{ass:nondeg}, \(\mathbb G_R\) is
nontrivial.
\end{theorem}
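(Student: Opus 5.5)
The plan is to combine Lemma \ref{lem:linearization} with Lemma \ref{lem:ddg} applied to the class \(\calR=\{r_x:x\in\calX\}\), and then transfer the limit from \(\ell^\infty(\calR)\) to \(\ell^\infty(\calX)\).

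\textbf{Step 1: Lemma \ref{lem:ddg} applies to \(\calR\).} We first check the hypotheses. By Lemma \ref{lem:entropy} in the appendix, \(\calR\) is VC type with envelope \(F(Z)=|\varepsilon|\{1+\sup_{x}\|M(x)'Q^{-1}\|\,C_X\}\). Since \(\|M(x)\|\le C_X\), \(\lambda_{\min}(Q)\) is bounded away from zero, and \(E|Y_{12}|^{4+\delta}<\infty\), we get \(PF^2<\infty\).

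Pointwise measurability follows by restricting \(x\) to a countable dense subset of \(\calX\) with rational coordinates. For any \(x\), take rationals \(x_k\downarrow x\) coordinatewise. Then \(\ind\{X\preceq x_k\}\to\ind\{X\preceq x\}\) pointwise by right-continuity, and \(M(x_k)\to M(x)\) by dominated convergence. Hence \(r_{x_k}\to r_x\) pointwise.

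Lemma \ref{lem:ddg} then gives \(\{\Gn r:r\in\calR\}\weak\mathbb G\) in \(\ell^\infty(\calR)\). The covariance is
\[
4\Cov\{E[r_{x_1}(Z_{12})\mid U_1],E[r_{x_2}(Z_{12})\mid U_1]\}=4E[\psi_{x_1}(U_1)\psi_{x_2}(U_1)]=\Omega(x_1,x_2),
\]
using \eqref{eq:first-order-node-projection}.

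\textbf{Step 2: transfer to \(\ell^\infty(\calX)\).} The map \(z\mapsto(x\mapsto z(r_x))\) from \(\ell^\infty(\calR)\) to \(\ell^\infty(\calX)\) is linear with norm at most one, so it is continuous. If distinct \(x\) happen to give the same \(r_x\), this is harmless, because the composite is still well defined.

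The continuous mapping theorem then yields \(\{\Gn r_x:x\in\calX\}\weak\mathbb G_R\), where \(\mathbb G_R(x)=\mathbb G(r_x)\). This process is centered, Gaussian and tight, with kernel \(\Omega\).

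\textbf{Step 3: combine with the linearization.} Lemma \ref{lem:linearization} states that the difference between \(\sqrt n\{\widehat R_n-\Delta\}\) and \(\Gn r_\cdot\) is \(o_p(1)\) in sup norm. Slutsky's lemma in \(\ell^\infty(\calX)\), for outer-probability negligible perturbations, then gives \eqref{eq:weak-general}.

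Under \(H_0\), Lemma \ref{lem:identification} together with \eqref{eq:delta} gives \(\Delta\equiv0\) on \(\calX\). The second display follows immediately.

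\textbf{Step 4: nontriviality.} Under scenario (i) there is an \(x_0\) with \(\Var(\mathbb G_R(x_0))=\Omega(x_0,x_0)>0\), so \(\mathbb G_R\) is not the zero process.

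\textbf{Main obstacle.} Given the earlier lemmas, the only delicate point is the measurability and envelope verification for \(\calR\) in Step 1, since the orthogonalization term \(M(x)'Q^{-1}X\) couples \(x\) with the population moments. This is handled by the boundedness of \(X\) and the right-continuity of \(M\). The remaining steps are routine applications of the continuous mapping theorem and Slutsky's lemma.
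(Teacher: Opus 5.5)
Your proposal is correct and follows the paper's own proof. In both, Lemma \ref{lem:entropy} verifies that \(\calR\) is a pointwise measurable VC-type class with square-integrable envelope, Lemma \ref{lem:ddg} gives the limit with kernel \(\Omega\), Lemma \ref{lem:linearization} plus Slutsky transfers it to \(\sqrt n\{\widehat R_n-\Delta\}\), and Lemma \ref{lem:identification} gives \(\Delta\equiv0\) under \(H_0\). You also make explicit two points the paper leaves implicit (the norm-one map from \(\ell^\infty(\calR)\) to \(\ell^\infty(\calX)\), and nontriviality under scenario (i)); and in your measurability step the countable set should include the coordinate endpoints of \(\calX\), as in Lemma \ref{lem:entropy}, so that the approximating \(x_k\) stay inside \(\calX\).
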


Theorem \ref{thm:weak} does not require \(\Omega(x,x)\) to be strictly
positive at every \(x\). If \(\Omega(x,x)=0\) at an isolated point or on a
subset of \(\calX\), then \(\mathbb G_R(x)=0\) at those indices, but the
functional convergence remains valid. If \(\Omega(x,x)=0\) for every
\(x\in\calX\), the theorem remains correct but yields only the degenerate
limit
\begin{align}\label{eq:converge to 0}\sqrt n\{\widehat R_n-\Delta\}
\weak 0
 \quad\text{in }\ell^\infty(\calX),\end{align}
Scenario (ii) of Assumption \ref{ass:nondeg} identifies an important case in
which a nondegenerate limit can instead be obtained at the faster
\(\sqrt{N_n}\) rate.

\begin{theorem}[Independent-dyad weak convergence]
\label{thm:weak-independent}
Suppose Assumptions \ref{ass:sampling}-\ref{ass:moments} and scenario (ii)
of Assumption \ref{ass:nondeg} hold. Then
\begin{equation}
\sqrt{N_n}\{\widehat R_n-\Delta\}
\weak
\mathbb G_D
\quad\text{in }\ell^\infty(\calX),
\label{eq:weak-independent}
\end{equation}
where \(\mathbb G_D\) is a centered tight Gaussian process with covariance
kernel \(\Gamma\) defined in \eqref{eq:Gamma}. Under \(H_0\),
$\sqrt{N_n}\,\widehat R_n
\weak
\mathbb G_D$ in $\ell^\infty(\calX).$
\end{theorem}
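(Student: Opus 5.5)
Under scenario (ii) the dyads are i.i.d., so the plan is to treat the array as an ordinary i.i.d. sample of size \(N_n\) and repeat the two-step argument behind Theorem \ref{thm:weak} at the faster rate \(\sqrt{N_n}\). The first step is a \(\sqrt{N_n}\)-scale version of Lemma \ref{lem:linearization}. Write \(\widehat\varepsilon_{ij}=\varepsilon_{ij}-X_{ij}'(\widehat\beta-\beta_*)\) and use the exact identity \eqref{eq:exact-orthogonal}. Then
\[
\widehat R_n(x)-\Delta(x)=(\Pn-P)\varepsilon\ind\{X\preceq x\}-M_n(x)'(\widehat\beta-\beta_*).
\]
Since \(\widehat\beta-\beta_*=Q_n^{-1}\Pn(X\varepsilon)\) and \(P(X\varepsilon)=0\), we get
\[
\widehat R_n(x)-\Delta(x)-(\Pn-P)r_x=-\{M_n(x)'Q_n^{-1}-M(x)'Q^{-1}\}(\Pn-P)(X\varepsilon).
\]
Under independent dyads, the multivariate CLT gives \(\sqrt{N_n}(\Pn-P)(X\varepsilon)=O_p(1)\). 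Here \(\E|Y|^{4+\delta}<\infty\) and bounded \(X\) provide the needed moments. The ordinary i.i.d. Glivenko--Cantelli theorem for the VC class \(\{X\ind\{X\preceq x\}\}\), together with \(Q_n\to_p Q\) and \(\lambda_{\min}(Q)\) bounded away from zero, gives \(\sup_x\|M_n(x)'Q_n^{-1}-M(x)'Q^{-1}\|=o_p(1)\). The product is therefore \(o_p(N_n^{-1/2})\) uniformly in \(x\). This yields
\[
\sup_{x}\bigl|\sqrt{N_n}\{\widehat R_n(x)-\Delta(x)\}-\sqrt{N_n}(\Pn-P)r_x\bigr|=o_p(1).
\]

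The second step is a functional CLT for \(\{\sqrt{N_n}(\Pn-P)r_x:x\in\calX\}\). Because the \(Z_{ij}\) are i.i.d., this is a standard empirical process indexed by \(\calR\) based on \(N_n\) observations. By Lemma \ref{lem:entropy}, \(\calR\) is VC type with envelope \(F(Z)=|\varepsilon|(1+C\,C_X)\), and \(PF^2<\infty\). The uniform entropy condition therefore makes \(\calR\) \(P\)-Donsker (van der Vaart--Wellner, Thm.~2.5.2), with pointwise measurability supplied by the rational approximation of orthants. The limit \(\mathbb G_D\) is then a centered tight Gaussian process with covariance \(\Cov\{r_{x_1}(Z_{12}),r_{x_2}(Z_{12})\}=\Gamma(x_1,x_2)\). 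The condition \(\sup_x\Gamma(x,x)>0\) ensures the limit is nontrivial. Combining the two steps by Slutsky's lemma in \(\ell^\infty(\calX)\) gives \eqref{eq:weak-independent}. Under \(H_0\), Lemma \ref{lem:identification} gives \(\Delta\equiv0\), which yields the final display.

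The main obstacle is the remainder in the first step. One must check that the estimated-coefficient term is negligible at the faster \(\sqrt{N_n}\) scale, not merely at \(\sqrt n\). This cannot be inherited from Lemma \ref{lem:linearization}, which controls errors only at rate \(o_p(n^{-1/2})\). The fix is to recompute the rate of \((\Pn-P)(X\varepsilon)\) directly under independence: its variance is \(N_n^{-1}\E[XX'\varepsilon^2]\), so the term is \(O_p(N_n^{-1/2})\). This avoids any appeal to the node projection, which vanishes in this regime.
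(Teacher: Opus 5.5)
Your proof is correct and is essentially the paper's argument. You show that the exact remainder $\widehat R_n(x)-\Delta(x)-(\Pn-P)r_x=-\{M_n(x)'Q_n^{-1}-M(x)'Q^{-1}\}\Pn(X\varepsilon)$ is $o_p(N_n^{-1/2})$ uniformly, using the i.i.d.\ rate $\Pn(X\varepsilon)=O_p(N_n^{-1/2})$, and then finish with the ordinary i.i.d.\ Donsker theorem for $\calR$ and Slutsky; the only immaterial difference is that you bound the coefficient factor by $o_p(1)$ via Glivenko--Cantelli where the paper asserts the sharper $O_p(N_n^{-1/2})$, and either suffices.
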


Theorem \ref{thm:weak-independent} complements Theorem \ref{thm:weak} by describing a variance regime in which the first-order node projection vanishes. When all dyads are independent, \( \E[r_x(Z_{12})\mid U_1]=\Delta(x) \) almost surely, so that \(\psi_x(U_1)=0\), \(\Omega(x_1,x_2)=0\) for every \(x_1,x_2\in\calX\), and consequently \eqref{eq:converge to 0} holds. Theorem \ref{thm:weak-independent} identifies the next nondegenerate order: the appropriate normalization is \(\sqrt{N_n}\) rather than \(\sqrt n\). The limiting covariance is then generated by the dyad-level marks themselves, as represented by \(\Gamma\) in \eqref{eq:Gamma}, rather than by their first-order node projections.

We conclude this section by recording the behavior of the estimated process
under fixed alternatives. By Lemma \ref{lem:identification}, failure of
\(H_0\) implies that \(\Delta\) is not identically zero.

\begin{theorem}[Consistency against fixed alternatives]
\label{thm:consistency}
Suppose Assumptions \ref{ass:sampling}-\ref{ass:moments} hold and \(H_0\)
is false. Let \(\nu\) be a fixed finite Borel measure on \(\calX\). Then
\(
\sup_{x\in\calX}|\Delta(x)|>0
\)
and
\[
\sup_{x\in\calX}|\widehat R_n(x)|
\to_p
\sup_{x\in\calX}|\Delta(x)|.
\]
If \(\int\Delta(x)^2d\nu(x)>0\), then
\(
\int\widehat R_n(x)^2d\nu(x)
\to_p
\int\Delta(x)^2d\nu(x).
\)
\end{theorem}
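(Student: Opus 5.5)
The first claim, \(\sup_{x\in\calX}|\Delta(x)|>0\), should follow from Lemma \ref{lem:identification} together with right-continuity of \(\Delta\). Since \(H_0\) fails, there is some \(x_0\in\R^d\) with \(\Delta(x_0)\neq 0\). We must move this point into \(\calX\). Because \(\calX=\prod_k[a_k,b_k]\) is a compact rectangle containing the support of \(X_{12}\), we have \(\ind\{X_{12}\preceq x_0\}=\ind\{X_{12}\preceq \pi(x_0)\}\) almost surely. Here \(\pi\) replaces each coordinate \(x_{0k}\ge b_k\) by \(b_k\). Any coordinate \(x_{0k}<a_k\) makes the indicator vanish, which would give \(\Delta(x_0)=0\), a contradiction. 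Hence \(\pi(x_0)\in\calX\) and \(\Delta(\pi(x_0))=\Delta(x_0)\neq 0\).

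The main step is the uniform convergence \(\sup_{x\in\calX}|\widehat R_n(x)-\Delta(x)|\to_p 0\). I would obtain it from Lemma \ref{lem:linearization} and Theorem \ref{thm:weak}, which hold without \(H_0\). Lemma \ref{lem:linearization} gives \(\sqrt n\{\widehat R_n-\Delta\}=\Gn r_\cdot+o_p(1)\) uniformly in \(x\). Lemma \ref{lem:ddg}, applied to the VC-type class \(\calR\) with square-integrable envelope (Lemma \ref{lem:entropy}), shows that \(\sup_x|\Gn r_x|=O_p(1)\). Therefore
\[
\sup_{x\in\calX}|\widehat R_n(x)-\Delta(x)|=O_p(n^{-1/2})=o_p(1).
\]
Alternatively, one can invoke the tightness of the limit in Theorem \ref{thm:weak} together with the continuous mapping theorem for the sup norm. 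The only point needing care is that the envelope moment \(PF^2<\infty\) for \(\calR\) uses Assumption \ref{ass:moments}, and that none of these lemmas assumed \(H_0\). The paper states both explicitly: \(\beta_*\) is the projection coefficient and \(\Delta=Pr_x\) holds regardless of the null.

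The conclusions then follow by continuous mapping. The reverse triangle inequality gives
\[
\bigl|\sup_x|\widehat R_n|-\sup_x|\Delta|\bigr|\le\sup_x|\widehat R_n-\Delta|\to_p0.
\]
For the Cramér–von Mises functional, write \(\widehat R_n^2-\Delta^2=(\widehat R_n-\Delta)(\widehat R_n+\Delta)\). This yields
\[
\Bigl|\int\widehat R_n^2d\nu-\int\Delta^2d\nu\Bigr|\le \nu(\calX)\,\sup_x|\widehat R_n-\Delta|\,\bigl(2\sup_x|\Delta|+\sup_x|\widehat R_n-\Delta|\bigr).
\]
The right-hand side is \(o_p(1)\) because \(\nu\) is finite and \(\sup_x|\Delta|\le E|\varepsilon_{12}|\,(1+\sup_x\|M(x)'Q^{-1}\|C_X)<\infty\). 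Measurability of the suprema is handled by pointwise measurability: since \(\Delta\) is right-continuous, the suprema may be taken over a countable dense subset. The condition \(\int\Delta^2d\nu>0\) is needed only to make the limit positive, not for the convergence itself.
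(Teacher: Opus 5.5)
Your proposal is correct and follows the paper's proof essentially step for step. It uses the same coordinatewise clipping of a point \(x_0\) with \(\Delta(x_0)\neq0\) into \(\calX\), and the same uniform rate \(\sup_{x}|\widehat R_n(x)-\Delta(x)|=O_p(n^{-1/2})\) from the node-scale linearization and functional limit, neither of which uses \(H_0\). It then concludes with the same reverse triangle inequality and factorization \(\widehat R_n^2-\Delta^2=(\widehat R_n-\Delta)(\widehat R_n+\Delta)\). Two minor points: right-continuity of \(\Delta\) is not actually needed in the first step, and because Theorem~\ref{thm:weak} requires only Assumptions~\ref{ass:sampling}--\ref{ass:moments}, your argument already covers independent dyads without the paper's separate appeal to Theorem~\ref{thm:weak-independent}.
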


Theorem \ref{thm:consistency} gives the full-index consistency result that
underlies the tests: the continuum KS functional detects every fixed
violation of \(H_0\), while the continuum CvM functional detects every
violation satisfying
\(\int\Delta(x)^2d\nu(x)>0\). On any fixed grid
\(\calX_G=\{x_1,\ldots,x_G\}\), with nonnegative weights \(w_g\) summing
to one, the corresponding separation conditions are
\(\max_{g\leq G}|\Delta(x_g)|>0\) for KS and
\(\sum_{g=1}^G w_g\Delta(x_g)^2>0\) for CvM. Establishing consistency of
the implemented grid tests additionally requires the grid to capture the
departure and the bootstrap critical values to be valid. We construct those
critical values next.

\section{Bootstrap Inference, Size, and Power}
\label{sec:bootstrap}
\label{sec:power}

The preceding section establishes the functional limit of the estimated
residual-marked process. We now turn that limit theory into an implementable
testing procedure. We first construct the raw node-multiplier bootstrap and
then a covariance-corrected Gaussian bootstrap. We apply both procedures to
KS and CvM functionals and establish their respective validity regions,
then characterize local asymptotic power. The full-index processes are useful
for developing the functional theory, but both procedures are implemented by
evaluating the sample and bootstrap processes on the same prespecified finite
grid.

\subsection{The raw node-multiplier bootstrap}

Previous discussion shows that the first-order law of the
residual-marked process is generated by the latent node projections
\(\psi_x(U_i)\) in \eqref{eq:first-order-node-projection}. Because \(\psi_x(U_i)\) is unobserved, these projections cannot be
used directly. They can, however, be recovered from incident-dyad averages.
Conditional on \(U_i\), averaging over all dyads containing node \(i\)
integrates out the latent variation associated with the other endpoint and
the dyad-specific shock. To implement this idea, for each node \(i\), let
\begin{equation}
\widehat{\bar r}_{x,i}
=
\frac{1}{n-1}\sum_{j\neq i}^n\widehat r_{ij}(x),
\qquad
\widehat\psi_i(x)
=
\widehat{\bar r}_{x,i}-\widehat R_n(x).
\label{eq:psihat}
\end{equation}
Here, the estimated residual mark $\widehat{r}_{ij}(x)$ and the estimated first-order node projection $\widehat \psi_i(x)$ are the sample analogs of $r_x(Z_{ij})$ and $\psi_x(U_i)$, respectively. Since every dyad enters exactly two incident-dyad averages,
\(
\frac{1}{n}\sum_{i=1}^n\widehat{\bar r}_{x,i}
=
\Pn\widehat r(x)
=
\widehat R_n(x).
\)
Hence, \(n^{-1}\sum_{i=1}^n\widehat\psi_i(x)=0\) exactly.

Recall the multipliers specified in \eqref{eq:multipliers}. Replacing the
infeasible population marks $r_x(Z_{ij})$ in \eqref{eq:ideal-bootstrap} with the estimated
dyad marks \(\widehat r_{ij}(x)\) gives the feasible raw node-multiplier
process
\begin{equation}
\widehat R_n^{\rm{raw},*}(x)
=
\frac{2}{ n}\sum_{i=1}^n
\xi_i\widehat\psi_i(x).
\label{eq:feasible-bootstrap}
\end{equation}
The factor two reflects the two symmetric positions in which each node enters
an undirected dyad. 

The procedure can be based on either of two complementary functionals. The
KS statistic captures the largest localized departure from the null, whereas
the CvM statistic aggregates departures over the index set. Their continuum
versions are
\begin{equation}
T_n^{KS}
=
\sup_{x\in\calX}
\left|
\sqrt n\,\widehat R_n(x)
\right|
\label{eq:KS}
\end{equation}
and, for a fixed finite Borel measure \(\nu\) on \(\calX\),
\begin{equation}
T_n^{CvM}
=
n\int_{\calX}\widehat R_n(x)^2\,d\nu(x).
\label{eq:CvM}
\end{equation}
These continuum statistics provide a convenient full-index formulation of
the asymptotic theory; they are not the statistics computed in the
implementation. In practice, both bootstrap procedures evaluate the sample
and bootstrap processes on the same prespecified finite grid \(\calX_G\),
replacing the supremum over \(\calX\) by the maximum over \(\calX_G\). For
the CvM statistic, we use the
discrete measure \(\nu_G=\sum_{g=1}^G w_g\delta_{x_g}\), where
\(\delta_{x_g}\) denotes the probability measure placing unit mass at
\(x_g\), \(w_g\geq0\), and \(\sum_{g=1}^G w_g=1\); equal weights
\(w_g=1/G\) are the default choice.\footnote{There is no universally optimal
\(G\); we use a sufficiently fine grid and assess robustness to further
refinement. Alternatively, following \citet{stute1997}, one may evaluate the
process at the observed marks and use their empirical distribution for the
CvM measure.} The implemented sample statistics are
\begin{align}
T_{n,G}^{KS}
=
\max_{g\le G}
\left|
\sqrt n\,\widehat R_n(x_g)
\right|,\qquad T_{n,G}^{CvM}
=
n\sum_{g=1}^G w_g\widehat R_n(x_g)^2.
\label{eq:KS-grid}
\end{align}

For the auxiliary full-index formulation of the raw procedure, define
\[
T_n^{KS,\rm raw,*}
=
\sup_{x\in\calX}|\sqrt{n}\widehat R_n^{\rm{raw},*}(x)|,\qquad
T_n^{CvM,\rm raw,*}
=
n\int_{\calX}[\widehat R_n^{\rm{raw},*}(x)]^2\,d\nu(x).
\]
The raw bootstrap statistics used in implementation are instead
\[
T_{n,G}^{KS,\rm raw,*}
=
\max_{g\leq G}|\sqrt n\,\widehat R_n^{\rm{raw},*}(x_g)|,
\qquad
T_{n,G}^{CvM,\rm raw,*}
=
n\sum_{g=1}^G w_g\{\widehat R_n^{\rm{raw},*}(x_g)\}^2.
\]
For \(S\in\{KS,CvM\}\), let \(c_{n,1-\alpha}^{S,\rm raw,*}\) and
\(c_{n,G,1-\alpha}^{S,\rm raw,*}\) denote the conditional
\((1-\alpha)\)-quantiles of \(T_n^{S,\rm raw,*}\) and
\(T_{n,G}^{S,\rm raw,*}\), respectively. The full-index comparison rejects
\(H_0\) whenever
\(
T_n^S>c_{n,1-\alpha}^{S,\rm raw,*},
\)
and is retained below as a theoretical consequence of functional weak
convergence. The implemented raw test rejects whenever
\(
T_{n,G}^S>c_{n,G,1-\alpha}^{S,\rm raw,*}.
\)

\paragraph{Algorithm 1: Raw node-multiplier bootstrap.}
Fix a grid \(\calX_G=\{x_g:g\leq G\}\subset\calX\), a statistic
\(S\in\{KS,CvM\}\), a nominal level \(\alpha\), and the number of
bootstrap draws \(B\). Proceed as follows.
\begin{enumerate}[label=Step \arabic*.,leftmargin=*]
\item Estimate \(\widehat\beta\), form
\(\widehat\varepsilon_{ij}=Y_{ij}-X_{ij}'\widehat\beta\), and compute
\(Q_n\), \(M_n(x)\), and \(q_{n,x}(X_{ij})\) for \(x\in\calX_G\).
\item For each \(x\in\calX_G\), compute the estimated residual marks
\(\widehat r_{ij}(x)\), the incident-dyad averages
\(\widehat{\bar r}_{x,i}\), and the centered node scores
\(\widehat\psi_i(x)\).
\item For each \(b=1,\ldots,B\), draw independent multipliers
\(\{\xi_i^{(b)}\}_{i=1}^n\) and compute
\(\widehat R_n^{\mathrm{raw},*b}(x)\) from \eqref{eq:feasible-bootstrap} for every
\(x\in\calX_G\).
\item Compute \(T_{n,G}^S\) and
\(\{T_{n,G}^{S,\mathrm{raw},*b}\}_{b=1}^B\). Reject \(H_0\) when
\(T_{n,G}^S\) exceeds the empirical \((1-\alpha)\)-quantile of the
bootstrap statistics.
\end{enumerate}

The ideal multiplier process in Lemma \ref{lem:ideal-multiplier} is indexed by the
infeasible population marks \(r_x\). In contrast,
\(\widehat R_n^{\rm{raw},*}\) uses estimated residuals, \(Q_n\), and \(M_n(x)\).
The next lemma shows that these plug-in operations are uniformly negligible.
Its proof embeds the random estimated class in a deterministic VC-type
enlargement and then combines a dyad-to-node contraction with a conditional
maximal inequality.

\begin{lemma}[Bootstrap plug-in stability]
\label{lem:bootstrap-stability}
Under Assumptions \ref{ass:sampling}-\ref{ass:moments}, let
\(\mathbb G_n^*r_x\) denote the ideal multiplier process in
\eqref{eq:ideal-multiplier} applied to \(r_x\). Then
\begin{equation}
\sup_{x\in\calX}
\left|
\sqrt{n}\widehat R_n^{\rm{raw},*}(x)-\mathbb G_n^*r_x
\right|
=
o_{p^*}(1)
\quad\text{in probability}.
\label{eq:bootstrap-stability}
\end{equation}
\end{lemma}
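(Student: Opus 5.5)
Write $\widehat r_{ij}(x)-r_x(Z_{ij})$ explicitly and reduce the difference of the two processes to a small number of terms. Each term is a product of a data-dependent coefficient and a node-multiplier process built from a fixed VC-type class. Since $\widehat\varepsilon_{ij}=\varepsilon_{ij}-X_{ij}'(\widehat\beta-\beta_*)$ and $q_{n,x}(X)=q_x(X)-\{M_n(x)'Q_n^{-1}-M(x)'Q^{-1}\}X$, we have
\[
\widehat r_{ij}(x)-r_x(Z_{ij})
=-(\widehat\beta-\beta_*)'X_{ij}q_x(X_{ij})
-\varepsilon_{ij}X_{ij}'a_n(x)
+(\widehat\beta-\beta_*)'X_{ij}X_{ij}'a_n(x),
\]
where $a_n(x)=Q_n^{-1}M_n(x)-Q^{-1}M(x)$. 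The centering by $\Pn$ inside \eqref{eq:ideal-multiplier} and \eqref{eq:feasible-bootstrap} is linear, and the same holds for $\widehat R_n(x)=\Pn\widehat r(x)$. Hence $\sqrt n\widehat R_n^{\rm raw,*}(x)-\mathbb G_n^*r_x$ equals $\mathbb G_n^*$ applied to this difference. Because the coefficients do not depend on $(i,j)$, it decomposes as
\[
-(\widehat\beta-\beta_*)'\,\mathbb G_n^*[Xq_x]
-\mathbb G_n^*[\varepsilon X]'a_n(x)
+(\widehat\beta-\beta_*)'\,\mathbb G_n^*[XX']\,a_n(x),
\]
with $\mathbb G_n^*$ acting coordinatewise on vector and matrix marks.

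Next, control the coefficients. Lemma \ref{lem:ddg} applied to the finite class $\{X_kY,X_kX_l\}$ gives $\|\widehat\beta-\beta_*\|=O_p(n^{-1/2})$. Lemma \ref{lem:ddg} applied to the VC class $\{X_k\ind\{X\preceq x\}:x\in\calX\}$, whose envelope is $C_X$, gives $\sup_x\|M_n(x)-M(x)\|=O_p(n^{-1/2})$. Together with $\lambda_{\min}(Q)$ bounded away from zero, this yields $\sup_x\|a_n(x)\|=O_p(n^{-1/2})$. Since the coefficients are $o_p(1)$, it suffices to show that the three multiplier processes are $O_{p^*}(1)$ in probability, uniformly in $x$.

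The marks $\varepsilon X$ and $XX'$ do not depend on $x$, so the second and third terms only need pointwise conditional second moments. For these, $E^*(\mathbb G_n^*f)^2=4n^{-1}\sum_i(\bar f_i-\Pn f)^2\le 4n^{-1}\sum_i\bar f_i^2$, where $\bar f_i$ is the incident-dyad average. By Jensen's inequality (the dyad-to-node contraction) this is at most $8\Pn f^2$, which is $O_p(1)$ by the moment condition $E|Y|^{4+\delta}<\infty$ together with $\|X\|\le C_X$. The first term needs uniformity in $x$. For it I would invoke Lemma \ref{lem:ideal-multiplier} for the fixed class $\{X_kq_x(X):x\in\calX\}$, which is VC type as a product of a bounded fixed vector with an indicator minus a linear function of $X$ with bounded coefficients. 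That lemma gives conditional weak convergence, hence conditional tightness of $\sup_x|\mathbb G_n^*[X_kq_x]|$. Equivalently, one can apply a sub-Gaussian maximal inequality conditional on the data: the covariance semimetric is dominated by $\Pn$-type $L_2$ norms, so a uniform entropy integral bound applies.

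The main obstacle is making the conditional sup bound rigorous: it must hold in outer probability and be quantified correctly (conditionally on the data, in probability). I expect the cleanest route to avoid Lemma \ref{lem:ideal-multiplier} and use instead a direct chaining bound. Conditional on the data, $x\mapsto\mathbb G_n^*[X_kq_x]$ is sub-Gaussian with respect to the metric $\rho_n(x,x')^2=8\Pn(f_x-f_{x'})^2$. Dudley's inequality with the VC uniform entropy then gives $E^*\sup_x|\cdot|\lesssim\sqrt{\Pn F^2}\int_0^1\sqrt{\log N(\epsilon)}\,d\epsilon=O_p(1)$. Markov's inequality under $P^*$ then completes the proof. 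Because the marks after the algebraic expansion are indexed by a deterministic class, the random-class enlargement mentioned in the text is reduced here to this coefficient-times-fixed-class factorization.
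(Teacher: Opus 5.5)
Your proof is correct, but it takes a different route from the paper.

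\textbf{The paper's route.} The paper does not expand the plug-in error. It writes the difference as the node-multiplier process applied to $d_{ij,n}(x)=\widehat r_{ij}(x)-r_x(Z_{ij})$. It then embeds the random family $\{d_{x,A_n(x),\widehat\beta}:x\in\calX\}$ in the deterministic VC-type enlargement $\calD_C$ of Lemma \ref{lem:enlarged-class}, and shows that its empirical $L_2$ radius $\delta_n$ is $o_p(1)$. Finally, it applies the localized conditional maximal inequality of Lemma \ref{lem:conditional-maximal}, which gives a bound of order $\delta_n\sqrt{1+\log(C/\delta_n)}$. The smallness therefore comes from a shrinking radius inside a fixed entropy integral.

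\textbf{Your route.} You use the bilinear structure of the plug-in error in the linear model. The nuisance estimates enter only through $\widehat\beta-\beta_*$ and $a_n(x)$, neither of which depends on the dyad. Linearity of $\mathbb G_n^*$ therefore lets you factor them out. What remains are multiplier processes indexed by the fixed VC class $\{X_kq_x:x\in\calX\}$ and by the single marks $\varepsilon X_k$ and $X_kX_l$. These only need to be $O_{p^*}(1)$, and a Dudley bound (or the second-moment bound via the dyad-to-node contraction) gives that. The smallness then comes from the coefficients being $o_p(1)$. Because those coefficients are measurable with respect to the data, the bounds on $E^*\sup$ transfer directly through conditional Markov.

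\textbf{What each approach buys.} Your argument is more elementary: it needs no enlargement, no localization, and no $s\sqrt{\log(C/s)}$ step. It also delivers the sharper rate $O_{p^*}(n^{-1/2})$. The paper's argument is more general. It only requires the estimated marks to lie in a deterministic VC class with vanishing empirical $L_2$ distance to the population marks. It would therefore extend to nuisance parameters that enter nonlinearly, such as a nonlinear null regression function, where no coefficient-times-fixed-class factorization is available.

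\textbf{Minor points.} The contraction gives $n^{-1}\sum_i\bar f_i^2\le\Pn f^2$, so your constant $8$ can be $4$; this is immaterial. Also, finite second moments of $Y$ already suffice for $\Pn(\varepsilon X_k)^2=O_p(1)$; the $4+\delta$ moment is not needed here.
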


Lemma \ref{lem:bootstrap-stability} transfers the conditional limit of the
ideal multiplier process to its feasible counterpart. Together with
Lemma \ref{lem:ideal-multiplier}, it yields the following theorem.

\begin{theorem}[Node-scale limit of the raw node multiplier]
\label{thm:bootstrap}
Under Assumptions \ref{ass:sampling}-\ref{ass:moments}, conditionally on
the data,
\begin{equation}
\sqrt{n}\widehat R_n^{\rm{raw},*}
\weak
\mathbb G_R
\quad\text{in }\ell^\infty(\calX),
\label{eq:bootstrap-weak}
\end{equation}
in probability, where \(\mathbb G_R\) is the Gaussian process appearing
in Theorem \ref{thm:weak}.
\end{theorem}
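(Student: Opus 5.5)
The proof combines Lemma \ref{lem:ideal-multiplier}, applied to the class \(\calR=\{r_x:x\in\calX\}\), with the plug-in stability result of Lemma \ref{lem:bootstrap-stability}. It then transfers conditional weak convergence through a uniformly negligible perturbation, measured in the bounded-Lipschitz metric.

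\emph{Step 1: the ideal process.} By Lemma \ref{lem:entropy}, \(\calR\) is pointwise measurable and VC type, with an envelope \(F\) satisfying \(PF^2<\infty\). Assumption \ref{ass:moments} supplies this: bounded \(X\) and \(E|Y_{12}|^{4+\delta}<\infty\) give \(|r_x(Z)|\le C(1+|Y|)\), using \(\sup_x\|M(x)\|\le C_X\) and \(\lambda_{\min}(Q)\) bounded away from zero.

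Under Assumption \ref{ass:sampling}, Lemma \ref{lem:ideal-multiplier} therefore gives
\[
\sup_{h\in BL_1(\ell^\infty(\calR))}\bigl|E^*h(\mathbb G_n^*)-Eh(\mathbb G)\bigr|\to_p 0 .
\]
The index map \(x\mapsto r_x\) identifies \(\ell^\infty(\calR)\)-valued elements with \(\ell^\infty(\calX)\)-valued elements isometrically. Under this identification, \(\{\mathbb G(r_x):x\in\calX\}\) has covariance
\[
4\Cov\bigl(E[r_{x_1}(Z_{12})\mid U_1],\,E[r_{x_2}(Z_{12})\mid U_1]\bigr)=4E[\psi_{x_1}(U_1)\psi_{x_2}(U_1)]=\Omega(x_1,x_2).
\]
Both processes are centered, tight and Gaussian, so this process has the law of \(\mathbb G_R\) from Theorem \ref{thm:weak}. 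Hence \(\{\mathbb G_n^*r_x\}_{x\in\calX}\weak\mathbb G_R\) conditionally, in probability.

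\emph{Step 2: transfer to the feasible process.} Write \(\Delta_n^*=\sup_x|\sqrt n\widehat R_n^{\mathrm{raw},*}(x)-\mathbb G_n^*r_x|\). By Lemma \ref{lem:bootstrap-stability}, \(\Delta_n^*=o_{p^*}(1)\) in probability. Fix \(h\in BL_1(\ell^\infty(\calX))\) and \(\eta>0\). Since \(h\) is bounded by one and \(1\)-Lipschitz,
\[
\bigl|E^*h(\sqrt n\widehat R_n^{\mathrm{raw},*})-E^*h(\mathbb G_n^*r)\bigr|\le E^*\min(2,\Delta_n^*)\le \eta+2P^*(\Delta_n^*>\eta).
\]
This bound is uniform in \(h\). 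The second term tends to zero in (outer) probability, and \(\eta\) is arbitrary. By the triangle inequality with Step 1, \(\sup_{h\in BL_1}|E^*h(\sqrt n\widehat R_n^{\mathrm{raw},*})-Eh(\mathbb G_R)|\to_p0\), which is \eqref{eq:bootstrap-weak}.

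If the definition of conditional weak convergence also requires asymptotic measurability, this follows as well. The feasible process is a finite-dimensional linear combination of the multipliers, with data-dependent coefficients, so it is measurable in \(\xi\) given the data. Asymptotic measurability of \(\mathbb G_n^*\) comes with Lemma \ref{lem:ideal-multiplier}.

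\emph{Main obstacle.} The substantive difficulty lies entirely in the two cited lemmas: uniform recovery of \(\psi_x(U_i)\) from incident-dyad averages, and the deterministic VC enlargement for the estimated marks. Given those, the only delicate point here is bookkeeping. One must check that the covariance of the ideal limit, indexed through \(\calR\), coincides with \(\Omega\). One must also keep the ``in outer probability'' qualifiers consistent when combining the two \(o_{p^*}(1)\)-in-probability statements. I would handle the latter through the bounded-Lipschitz bound above, which needs only convergence of \(P^*(\Delta_n^*>\eta)\) in outer probability.
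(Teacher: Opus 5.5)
Your proposal is correct and matches the paper's proof. You apply Lemma \ref{lem:ideal-multiplier} to \(\calR\) via Lemma \ref{lem:entropy}, combine it with Lemma \ref{lem:bootstrap-stability}, and transfer the conditional limit through the bounded-Lipschitz bound, which re-derives Lemma \ref{lem:conditional-slutsky} inline. You also check explicitly that the \(\calR\)-indexed limit has covariance \(\Omega\) and hence the law of \(\mathbb G_R\), a step the paper leaves implicit.
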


When \(\mathbb G_R\) is nontrivial, this theorem gives a valid bootstrap for
the \(\sqrt n\)-scaled sample process. Under independent dyads,
\(\mathbb G_R\equiv0\), so convergence at the node scale does not justify
critical values for the nondegenerate \(\sqrt{N_n}\)-scaled statistic.

\subsection{The covariance-corrected Gaussian bootstrap}
\label{sec:corrected-bootstrap}

The preceding node multiplier reproduces the node projection but counts each
dyad once through each endpoint. This is asymptotically harmless when the
node component has order one, but it doubles the leading same-dyad variance
when all dyads are independent. We therefore introduce a correction that
keeps the shared-node term and removes exactly one of the two copies of the
same-dyad term.

The coefficient is obtained from an exact counting identity. For any
centered vector mark \(\bm a_{ij}\), let
\(V_0=\E[\bm a_{12}\bm a_{12}']\) and
\(V_1=\E[\bm a_{12}\bm a_{13}']\). Dissociation implies
\begin{equation}
\Var\left\{\sqrt n\,\Pn\bm a\right\}
=\frac2{n-1}V_0+\frac{4(n-2)}{n-1}V_1.
\label{eq:finite-cov-corrected}
\end{equation}
The first coefficient counts the \(N_n\) same-dyad terms; the second counts
the \(n(n-1)(n-2)\) ordered pairs of distinct dyads sharing one node.

Let \(\calX_G=\{x_1,\ldots,x_G\}\) be a fixed deterministic grid and write
\(\widehat{\bm r}_{ij}=(\widehat r_{ij}(x_1),\ldots,
\widehat r_{ij}(x_G))'\), \(\widehat{\bm R}_n=
(\widehat R_n(x_1),\ldots,\widehat R_n(x_G))'\), and
\(\widehat{\bm a}_{ij}=\widehat{\bm r}_{ij}-\widehat{\bm R}_n\).
Define the same-dyad and shared-node covariance estimators
\begin{align}
\widehat V_{0,n}
&=\frac1{N_n}\sum_{i<j}\widehat{\bm a}_{ij}
                 \widehat{\bm a}_{ij}',                                     
\label{eq:V0}\\
\widehat V_{1,n}
&=\frac1{n(n-1)(n-2)}
  \sum_{i=1}^n\sum_{j\ne i}\sum_{\substack{k\ne i\\k\ne j}}
  \widehat{\bm a}_{ij}\widehat{\bm a}_{ik}' .
\label{eq:V1}
\end{align}
Let
\(
\widehat{\bar{\bm a}}_i=(n-1)^{-1}\sum_{j\ne i}\widehat{\bm a}_{ij}
=\widehat{\bm\psi}_i
\), where
\(\widehat{\bm\psi}_i=(\widehat\psi_i(x_1),\ldots,
\widehat\psi_i(x_G))'\). Expanding the incident-dyad averages gives the exact
identity
\begin{equation}
\widehat V_{1,n}
=
\frac{n-1}{n-2}\frac1n\sum_{i=1}^n
\widehat{\bm\psi}_i\widehat{\bm\psi}_i'
-\frac1{n-2}\widehat V_{0,n}.
\label{eq:V1-incident-identity}
\end{equation}
Thus, \(\widehat V_{1,n}\) can be computed from incident-dyad averages
without evaluating the triple sum in \eqref{eq:V1} directly.
The covariance matrices of the raw node-multiplier vector and its
finite-sample-corrected counterpart on \(\calX_G\) are, respectively,
\begin{align}
\widehat K_n^{\rm raw}
&=\frac{4}{n-1}\widehat V_{0,n}
  +\frac{4(n-2)}{n-1}\widehat V_{1,n},
\label{eq:Kraw}\\
\widehat K_n^{\rm FS}
&=\frac{2}{n-1}\widehat V_{0,n}
  +\frac{4(n-2)}{n-1}\widehat V_{1,n}.
\label{eq:KFS}
\end{align}
The corrected matrix \eqref{eq:KFS} mirrors the finite-sample covariance
decomposition in \eqref{eq:finite-cov-corrected}. Indeed, expanding the incident sums in \eqref{eq:psihat} gives
\[
\frac4n\sum_{i=1}^n\widehat{\bm\psi}_i\widehat{\bm\psi}_i'
=\widehat K_n^{\rm raw},
\qquad
\widehat K_n^{\rm raw}-\widehat K_n^{\rm FS}
=\frac2{n-1}\widehat V_{0,n}\succeq0,
\]

Sampling noise can make \(\widehat K_n^{\rm FS}\) slightly indefinite.
Let \(\Pi_+(A)\) replace the negative eigenvalues of a symmetric matrix
\(A\) by zero and set
\(\widehat K_{n,+}^{\rm FS}=\Pi_+(\widehat K_n^{\rm FS})\).
Because \(\widehat K_{n,+}^{\rm FS}\) estimates the covariance of
\(\sqrt n\,\widehat{\bm R}_n\), let
\(\widehat{\bm R}_{n,G}^{\rm corr,*}
=(\widehat R_{n,G}^{\rm corr,*}(x_1),\ldots,
\widehat R_{n,G}^{\rm corr,*}(x_G))'\) denote the unscaled corrected
bootstrap process on the grid, drawn according to
\begin{equation}
\widehat{\bm R}_{n,G}^{\rm corr,*}\mid\mathcal Z_n
\sim N\left(0,\frac{1}{n}\widehat K_{n,+}^{\rm FS}\right).
\label{eq:corrected-draw}
\end{equation}
Its bootstrap statistics are
\[
T_{n,G}^{KS,\rm corr,*}
=
\max_{g\leq G}
\left|\sqrt n\,\widehat R_{n,G}^{\rm corr,*}(x_g)\right|,
\qquad
T_{n,G}^{CvM,\rm corr,*}
=
n\sum_{g=1}^G w_g
\{\widehat R_{n,G}^{\rm corr,*}(x_g)\}^2.
\]
Let \(c_{n,G,1-\alpha}^{S,\rm corr,*}\) denote the conditional
\((1-\alpha)\)-quantile of \(T_{n,G}^{S,\rm corr,*}\). The corrected test
compares this critical value with the same sample statistic \(T_{n,G}^S\)
defined in \eqref{eq:KS-grid} and rejects \(H_0\)
whenever
\(
T_{n,G}^S>c_{n,G,1-\alpha}^{S,\rm corr,*}.
\)
Under independent dyads, one may equivalently replace the node-scale factor
\(\sqrt n\) by the dyad-scale factor \(\sqrt{N_n}\) in both the sample and
bootstrap statistics. This common rescaling leaves every bootstrap
comparison, critical-value decision, and Monte Carlo \(p\)-value unchanged.

\paragraph{Algorithm 2: Covariance-corrected Gaussian bootstrap.}
Fix a grid \(\calX_G=\{x_g:g\leq G\}\subset\calX\), a statistic
\(S\in\{KS,CvM\}\), a nominal level \(\alpha\), and the number of
bootstrap draws \(B\). Proceed as follows.
\begin{enumerate}[label=Step \arabic*.,leftmargin=*]
\item Estimate \(\widehat\beta\) and compute
\(\widehat{\bm r}_{ij}\), \(\widehat{\bm R}_n\), and
\(\widehat{\bm a}_{ij}\) and their incident-dyad averages
\(\widehat{\bm\psi}_i\) on \(\calX_G\).
\item Compute \(\widehat V_{0,n}\) from \eqref{eq:V0} and
\(\widehat V_{1,n}\) from \eqref{eq:V1-incident-identity}, form
\(\widehat K_n^{\rm FS}\) from \eqref{eq:KFS}, and replace any negative
eigenvalues by zero to obtain \(\widehat K_{n,+}^{\rm FS}\).
\item For each \(b=1,\ldots,B\), draw
\(\widehat{\bm R}_{n,G}^{\rm corr,*b}\sim
N(0,\widehat K_{n,+}^{\rm FS}/n)\) conditionally on the data.
\item Compute \(T_{n,G}^S\) and
\(\{T_{n,G}^{S,\rm corr,*b}\}_{b=1}^B\). Reject \(H_0\) when
\(T_{n,G}^S\) exceeds the empirical \((1-\alpha)\)-quantile of the
bootstrap statistics.
\end{enumerate}

\begin{remark}
Both bootstrap procedures are implemented on the same fixed grid
\(\calX_G\). The distinction concerns their theoretical construction. The raw
node-multiplier process \(\widehat R_n^{\rm{raw},*}(x)\) is naturally defined for every
\(x\in\calX\), which permits a full-index functional limit and an auxiliary
continuum test to be stated. Algorithm 1 nevertheless evaluates this process
only on \(\calX_G\). By contrast, constructing
\(\widehat K_{n,+}^{\rm FS}\) requires forming a finite covariance matrix on
\(\calX_G\) and projecting it onto the positive-semidefinite cone. The
covariance-corrected Gaussian bootstrap is therefore defined directly on the
implementation grid, and no separate continuum corrected process is
introduced.
\end{remark}

\begin{remark}
A bootstrap based on
\(N_n^{-1/2}\sum_{i<j}\xi_{ij}\widehat r_{ij}(x)\), with independent dyad
multipliers \(\xi_{ij}\), is valid when the dyads truly are independent.
Under shared-node dependence, it eliminates covariance between incident
dyads and is generally invalid. Conversely, the raw node multiplier is
valid under nondegenerate shared-node dependence but, as shown below,
has twice the correct limiting covariance under independent dyads.
\end{remark}

\subsection{Asymptotic validity and comparison}

Having stated each procedure together with its own statistic and rejection
rule, we now compare their validity. Both implemented tests use a fixed
deterministic grid. Under nondegenerate shared-node dependence, the raw and
corrected grid tests are asymptotically equivalent. The corrected grid test
also remains valid when the dyads are independent, whereas the raw grid test
does not. For completeness, we additionally state the continuum validity of
the raw procedure as a theoretical consequence of its full-index functional
limit. The following regularity condition translates convergence of the
sample and bootstrap laws into consistency of their critical values.

\begin{assumption}[Critical-value regularity]
\label{ass:critical}
For the nominal level \(\alpha\in(0,1)\), the following conditions hold in
the relevant variance regime.
\begin{enumerate}[label=(\roman*)]
\item Under scenario (i), the distribution functions of
\(\|\mathbb G_R\|_\infty\) and
\(\int_{\calX}\mathbb G_R(x)^2\,d\nu(x)\)
are continuous and strictly increasing at their respective
\((1-\alpha)\)-quantiles. For the fixed deterministic grid
\(\calX_G=\{x_1,\ldots,x_G\}\) used in implementation, the same condition holds for
\(\max_{g\leq G}|\mathbb G_R(x_g)|\) and
\(\sum_{g=1}^G w_g\mathbb G_R(x_g)^2\). Moreover, the covariance matrix
of \((\mathbb G_R(x_1),\ldots,\mathbb G_R(x_G))'\) is nonzero and
\(\sum_{g=1}^G w_g\Var\{\mathbb G_R(x_g)\}>0\).
\item Under scenario (ii), for the fixed deterministic grid
\(\calX_G=\{x_1,\ldots,x_G\}\) used in implementation, the distribution functions of
\(\max_{g\leq G}|\mathbb G_D(x_g)|\) and
\(\sum_{g=1}^G w_g\mathbb G_D(x_g)^2\)
are continuous and strictly increasing at their respective
\((1-\alpha)\)-quantiles. Moreover, the covariance matrix of
\((\mathbb G_D(x_1),\ldots,\mathbb G_D(x_G))'\) is nonzero and
\(\sum_{g=1}^G w_g\Var\{\mathbb G_D(x_g)\}>0\).
\end{enumerate}
\end{assumption}

Assumption \ref{ass:critical} imposes two regime-specific requirements. First, continuity and
strict increase of the limiting distribution at the relevant quantile ensure
that convergence of the bootstrap law translates into convergence of the
bootstrap critical value and asymptotically exact size. Second, the grid
conditions ensure that the selected evaluation points capture nontrivial
sampling variation, so that the limiting grid-based KS and CvM statistics are
not degenerate. The continuum condition is used only for the auxiliary
full-index raw result under scenario (i); under scenario (ii), where
\(\mathbb G_R\equiv0\), only the grid conditions for \(\mathbb G_D\) are
imposed. All implementation results rely on the grid conditions. The next
theorem establishes asymptotic validity.

\begin{theorem}[Validity of the raw and covariance-corrected tests]
\label{thm:bootstrap-tests}
Suppose \(H_0\) holds and Assumptions
\ref{ass:sampling}-\ref{ass:critical} hold.
\begin{enumerate}[label=(\alph*),leftmargin=*]
\item Under scenario (i) of Assumption \ref{ass:nondeg}, the auxiliary
full-index raw test is asymptotically valid:
\begin{equation}
\Pp\left(
T_n^S>c_{n,1-\alpha}^{S,\rm raw,*}
\right)
\to\alpha,
\qquad
S\in\{KS,CvM\}.
\label{eq:size}
\end{equation}
The implemented raw test also satisfies
\begin{equation}
\Pp\left(
T_{n,G}^S>c_{n,G,1-\alpha}^{S,\rm raw,*}
\right)
\to\alpha,
\qquad
S\in\{KS,CvM\}.
\label{eq:size-raw-grid}
\end{equation}
\item Under either scenario of Assumption \ref{ass:nondeg}, the implemented corrected test satisfies
\begin{equation}
\Pp\left(
T_{n,G}^S>c_{n,G,1-\alpha}^{S,\rm corr,*}
\right)
\to\alpha,
\qquad
S\in\{KS,CvM\}.
\label{eq:size-corrected-grid}
\end{equation}

\item Under scenario (ii) of Assumption \ref{ass:nondeg}, after multiplying
the covariance matrices by
\(N_n/n\),
\begin{align}
\frac{N_n}{n}\widehat K_{n,+}^{\rm FS}
&\to_p[\Gamma(x_g,x_h)]_{g,h=1}^G,\qquad
\frac{N_n}{n}\widehat K_n^{\rm raw}
\to_p 2[\Gamma(x_g,x_h)]_{g,h=1}^G.
\label{eq:Kraw-ID}
\end{align}
\end{enumerate}
\end{theorem}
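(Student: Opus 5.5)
Each part reduces to a statement about sample and bootstrap laws plus a continuous-mapping step; the only genuinely new work is in scenario (ii), where the covariance estimators must be controlled at the dyad scale.

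\textbf{Part (a).} Under $H_0$ we have $\Delta\equiv0$, so Theorem \ref{thm:weak} gives $\sqrt n\,\widehat R_n\weak\mathbb G_R$ in $\ell^\infty(\calX)$. Theorem \ref{thm:bootstrap} gives the conditional limit $\sqrt n\,\widehat R_n^{\rm raw,*}\weak\mathbb G_R$ in probability.

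1. The maps $z\mapsto\|z\|_\infty$ and $z\mapsto\int z^2d\nu$ are continuous on $\ell^\infty(\calX)$; for the CvM map, note that $\nu$ is finite and $z$ is bounded. By the continuous mapping theorem (in its conditional bootstrap version, via $BL_1$ functionals), $T_n^S$ and $T_n^{S,\rm raw,*}$ therefore share the same limit law.
2. Assumption \ref{ass:critical}(i) makes this limit law continuous and strictly increasing at the $(1-\alpha)$-quantile. A standard quantile lemma then gives $c^{S,\rm raw,*}_{n,1-\alpha}\to_p c_{1-\alpha}$.
3. Slutsky's lemma then yields \eqref{eq:size}.
4. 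The grid version \eqref{eq:size-raw-grid} follows identically, since evaluation at $x_1,\dots,x_G$ is a continuous projection $\ell^\infty(\calX)\to\R^G$.

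\textbf{Part (b), scenario (i).} It suffices to show $\widehat K^{\rm FS}_{n,+}\to_p\Omega_G:=[\Omega(x_g,x_h)]_{g,h}$.

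1. From the identity $\frac4n\sum_i\widehat{\bm\psi}_i\widehat{\bm\psi}_i'=\widehat K_n^{\rm raw}$, Lemma \ref{lem:bootstrap-stability} together with a law of large numbers for the node scores gives $\widehat K_n^{\rm raw}\to_p\Omega_G$. Alternatively, this is the conditional covariance of the raw grid bootstrap, which converges by Theorem \ref{thm:bootstrap} plus uniform integrability from the sub-Gaussian multipliers.
2. The difference $\widehat K_n^{\rm raw}-\widehat K_n^{\rm FS}=\frac{2}{n-1}\widehat V_{0,n}$ is $O_p(1/n)$. This uses $\widehat V_{0,n}=O_p(1)$, which follows from bounded $X$, $\E|Y|^{4+\delta}<\infty$, and $\widehat\beta\to_p\beta_*$.
3. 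The map $\Pi_+$ is continuous (1-Lipschitz in Frobenius norm) and fixes the PSD matrix $\Omega_G$, so $\widehat K^{\rm FS}_{n,+}\to_p\Omega_G$.
4. Hence the Gaussian draw converges conditionally to $N(0,\Omega_G)$, the law of the grid vector of $\mathbb G_R$. The continuous mapping, quantile and Slutsky steps of part (a) then conclude.

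\textbf{Part (c) and part (b), scenario (ii).} This is the main obstacle: $\widehat V_{0,n}$ and $\widehat V_{1,n}$ must be controlled at the dyad scale.

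1. Write $\widehat{\bm a}_{ij}=\bm a_{ij}+(\text{plug-in error})$, where $\bm a_{ij}=\bm r_{ij}-\bm R$ with $\bm r_{ij}=(r_{x_g}(Z_{ij}))_{g=1}^G$ and $\bm R=(\Delta(x_g))_{g=1}^G$ its mean. The plug-in errors are linear in $\widehat\beta-\beta_*=O_p(N_n^{-1/2})$, in $Q_n-Q$, and in $M_n-M$.
2. Under independent dyads, $\widehat V_{0,n}\to_p\Gamma_G=[\Gamma(x_g,x_h)]$ by the i.i.d.\ law of large numbers. The $4+\delta$ moment handles the products involving $\widehat\varepsilon^2$.
3. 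The key claim is $\widehat V_{1,n}=O_p(N_n^{-1/2})=o_p(n^{-1})$. For the population version, the summands $\bm a_{ij}\bm a_{ik}'$ with $j\neq k$ have mean zero, since $V_1=0$. Their pairwise covariances vanish unless the index triples share two dyads, so the variance is $O(n^{-2})$, i.e.\ $V_{1}$'s estimator is $O_p(n^{-1})$ — not yet enough.
4. To get $o_p(n^{-1})$ I would compute the second moment more carefully. The summand pairs $\bm a_{ij}\bm a_{ik}'$ and $\bm a_{i'j'}\bm a_{i'k'}'$ are correlated only if both dyads coincide, i.e.\ $\{ij,ik\}=\{i'j',i'k'\}$. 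That leaves $O(n^3)$ correlated pairs out of $n^6$ normalization, giving variance $O(n^{-3})$ and hence $\widehat V_{1,n}=O_p(n^{-3/2})=o_p(n^{-1})$.
5. The plug-in error in $\widehat V_{1,n}$ is controlled by \eqref{eq:V1-incident-identity}: the estimated incident averages differ from the true ones by $O_p(N_n^{-1/2})$ plus a mean-zero term of order $n^{-1/2}$. Its contribution to $\frac{N_n}{n}\cdot\frac{4(n-2)}{n-1}\widehat V_{1,n}$ must be checked term by term to be $o_p(1)$. This bookkeeping is the delicate step.
6. Since $\frac{N_n}{n}\cdot\frac{2}{n-1}=1$ and $\frac{N_n}{n}\cdot\frac{4}{n-1}=2$, we get $\frac{N_n}{n}\widehat K_n^{\rm FS}\to_p\Gamma_G$ and $\frac{N_n}{n}\widehat K^{\rm raw}_n\to_p2\Gamma_G$. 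Applying $\Pi_+$ gives \eqref{eq:Kraw-ID}.
7. For the size claim in scenario (ii), rescale both the statistic and the bootstrap by $\sqrt{N_n/n}$. Theorem \ref{thm:weak-independent} gives the grid limit $N(0,\Gamma_G)$ for the statistic, and the conditional Gaussian draw has the same limit. Assumption \ref{ass:critical}(ii) and the quantile argument of part (a) then yield \eqref{eq:size-corrected-grid}.
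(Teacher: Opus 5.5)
Your proposal follows the paper's proof essentially step for step: continuous mapping plus quantile consistency for (a), convergence of $\widehat K_n^{\rm raw}$ (and hence of $\widehat K_n^{\rm FS}$, since they differ by $\frac{2}{n-1}\widehat V_{0,n}$) together with continuity and positive homogeneity of $\Pi_+$ for (b), and in scenario (ii) the count of $O(n^3)$ correlated pairs out of $O(n^6)$, which gives $V_{1,n}(\bm a^0)=O_p(n^{-3/2})$. Two small points. First, the ``key claim'' $O_p(N_n^{-1/2})=o_p(n^{-1})$ in your step 3 is false as written, since $N_n^{-1/2}\asymp n^{-1}$, but your step 4 obtains the correct rate. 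Second, the bookkeeping you left open in step 5 closes exactly as in the paper. With $\bm d_{ij}=\widehat{\bm a}_{ij}-\bm a^0_{ij}$, the i.i.d.\ nuisance rates together with $\widehat{\bm R}_n-\bm\Delta=O_p(N_n^{-1/2})$ give $\Pn\|\bm d\|^2=O_p(n^{-2})$, hence $n^{-1}\sum_i\|\bar{\bm d}_i\|^2=O_p(n^{-2})$ by the dyad-to-node contraction. In \eqref{eq:V1-incident-identity}, the cross terms with $\bar{\bm a}^0_i$ (whose average squared norm is $O_p(n^{-1})$) are therefore $O_p(n^{-3/2})$ by Cauchy--Schwarz, the term quadratic in $\bar{\bm d}_i$ is $O_p(n^{-2})$, and the same-dyad change divided by $n-2$ is $O_p(n^{-2})$. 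All of these vanish after multiplication by $2(n-2)$.
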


Theorem \ref{thm:bootstrap-tests} shows that the raw and corrected procedures
are asymptotically equivalent under nondegenerate shared-node dependence,
where the node component determines the limiting law. Under independent
dyads, however, the raw node multiplier counts each dyad through both
endpoints and therefore produces twice the correct limiting covariance. The
corrected procedure removes this duplication and yields a valid grid test under shared-node dependence or independent dyads. Although the appropriate rate changes from \(\sqrt n\) to
\(\sqrt{N_n}\) under independence, this common rescaling of the sample and
bootstrap statistics does not affect rejection decisions or bootstrap
\(p\)-values.

\begin{corollary}[Fixed alternative asymptotic power]
\label{cor:grid-consistency}
Suppose Assumptions \ref{ass:sampling}--\ref{ass:moments} hold, and fix
a deterministic grid
\(\calX_G=\{x_1,\ldots,x_G\}\) with \(w_g>0\) for every \(g\).
If
\(
\max_{g\leq G}|\Delta(x_g)|>0,
\)
then, under either scenario of Assumption \ref{ass:nondeg}, the corrected
grid tests satisfy
\[
\Pp\left(
T_{n,G}^{S}>c_{n,G,1-\alpha}^{S,\rm corr,*}
\right)\to1,
\qquad S\in\{KS,CvM\}.
\]
Under scenario (i), the same conclusion holds for the raw grid tests:
\[
\Pp\left(
T_{n,G}^{S}>c_{n,G,1-\alpha}^{S,\rm raw,*}
\right)\to1,
\qquad S\in\{KS,CvM\}.
\]
\end{corollary}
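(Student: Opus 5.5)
The plan is to show that the sample statistic diverges at its natural rate while the bootstrap critical value stays stochastically bounded on that same scale. The argument is a direct consequence of Theorem~\ref{thm:consistency}, Theorem~\ref{thm:bootstrap}, and the covariance limits behind Theorem~\ref{thm:bootstrap-tests}. Write \(\bm\Delta_G=(\Delta(x_1),\ldots,\Delta(x_G))'\).

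\emph{Step 1: the statistic.} Theorem~\ref{thm:consistency} gives uniform consistency \(\sup_x|\widehat R_n(x)-\Delta(x)|\to_p0\); Theorem~\ref{thm:weak} even gives the rate \(O_p(n^{-1/2})\). Hence \(\max_g|\widehat R_n(x_g)|\to_p\max_g|\Delta(x_g)|>0\). Since every \(w_g>0\), also \(\sum_g w_g\widehat R_n(x_g)^2\to_p\sum_g w_g\Delta(x_g)^2>0\). It follows that \(T_{n,G}^{KS}\ge c\sqrt n\) and \(T_{n,G}^{CvM}\ge c n\) with probability tending to one, for some \(c>0\).

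\emph{Step 2: the critical values under the alternative.} The key observation is that none of the bootstrap results used here invoked \(H_0\). Lemma~\ref{lem:linearization}, Lemma~\ref{lem:bootstrap-stability} and Theorem~\ref{thm:bootstrap} hold for the projection coefficient \(\beta_*\) and the marks \(r_x\) whether or not the null is true, because \(\widehat\psi_i\) is centered at \(\widehat R_n\) rather than at zero.

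For the raw test under scenario (i), Theorem~\ref{thm:bootstrap} gives \(\sqrt n\widehat R_n^{\rm raw,*}\weak\mathbb G_R\) conditionally. On the grid this means \(T_{n,G}^{KS,\rm raw,*}\) and \(T_{n,G}^{CvM,\rm raw,*}\) converge conditionally in law to tight limits. Therefore \(c_{n,G,1-\alpha}^{KS,\rm raw,*}=O_p(1)\) and \(c_{n,G,1-\alpha}^{CvM,\rm raw,*}=O_p(1)\), and combining with Step 1 yields rejection probability tending to one.

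For the corrected test, it suffices to show that \(\widehat K_{n,+}^{\rm FS}=O_p(1)\) under scenario (i). Under scenario (ii) it suffices to show \(\widehat K_{n,+}^{\rm FS}=O_p(n/N_n)=o_p(1)\). Given this, the Gaussian draw \(\sqrt n\widehat{\bm R}^{\rm corr,*}_{n,G}\) has conditional covariance bounded in probability. Its quantile is then bounded by a constant times \(\|\widehat K^{\rm FS}_{n,+}\|^{1/2}\) for KS and \(\|\widehat K^{\rm FS}_{n,+}\|\) for CvM, using Gaussian tail bounds on a fixed finite dimension \(G\). These are \(O_p(1)\), while Step 1 shows the statistics diverge.

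To bound \(\widehat K_{n,+}^{\rm FS}\), first note that \(\|\Pi_+(A)\|\le\|A\|\) for symmetric \(A\). Then bound \(\widehat V_{0,n}\) and \(\widehat V_{1,n}\). The entries of \(\widehat V_{0,n}\) are \(O_p(1)\) because the marks are bounded by \(C|\widehat\varepsilon_{ij}|\), with \(\widehat\beta\to_p\beta_*\), bounded \(X\), and \(\E Y^2<\infty\). Hence \(\Pn\widehat\varepsilon^2=O_p(1)\). Identity~\eqref{eq:V1-incident-identity} then shows \(\widehat V_{1,n}=O_p(1)\) as well, since \(n^{-1}\sum_i\widehat{\bm\psi}_i\widehat{\bm\psi}_i'\le \widehat V_{0,n}\) by Jensen's inequality applied to the incident averages. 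This already gives \(\widehat K^{\rm FS}_{n,+}=O_p(1)\) in both regimes, which suffices: the critical values are \(O_p(1)\) on the \(\sqrt n\) scale while the statistics diverge. No finer rate under (ii) is needed.

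\emph{Main obstacle.} The one delicate point is that Step 2 relies on the bootstrap and covariance estimates remaining bounded when \(H_0\) fails. This holds because all centering uses \(\widehat R_n\), so the shift \(\bm\Delta_G\) does not enter \(\widehat{\bm a}_{ij}\) asymptotically. I would verify this explicitly for \(\widehat V_{1,n}\) by appealing to identity~\eqref{eq:V1-incident-identity}. That identity reduces the bound to crude second-moment bounds and avoids any distributional limit under the alternative.
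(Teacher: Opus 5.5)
Your proposal is correct and follows the paper's proof. The grid statistics diverge at rates $\sqrt n$ and $n$ by Theorem~\ref{thm:consistency}, while $\widehat V_{0,n}=O_p(1)$, the Jensen bound on the incident averages, and identity~\eqref{eq:V1-incident-identity} give bounded covariance matrices and hence $O_p(1)$ critical values. The one difference is the raw test: you invoke the conditional functional limit of Theorem~\ref{thm:bootstrap}, which indeed does not use $H_0$, whereas the paper bounds $\widehat K_n^{\rm raw}=\frac4n\sum_i\widehat{\bm\psi}_i\widehat{\bm\psi}_i'=O_p(1)$ by the same Jensen inequality you already derived and then applies a second-moment bound to the quantiles.
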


\subsection{Local asymptotic power}

Corollary \ref{cor:grid-consistency} shows that the implemented tests are
consistent against fixed alternatives captured by the evaluation grid.
Fixed-alternative consistency does not, however, describe the ability of the
tests to detect departures that shrink with the sample size. The relevant
rate is \(s_n^{-1}\): \(n^{-1/2}\) under nondegenerate shared-node dependence
and \(N_n^{-1/2}\) under independent dyads. Accordingly, write
\[
s_n=\sqrt n\quad\text{in scenario (i)},\qquad
s_n=\sqrt{N_n}\quad\text{in scenario (ii)}.
\]
Consider
\begin{equation}
Y_{ij,n}
=
X_{ij}'\beta_0
+
s_n^{-1}\Delta_0(X_{ij})
+
u_{ij},
\qquad
\E[u_{12}\mid X_{12}]=0,
\label{eq:local-model}
\end{equation}
where \(\Delta_0\) satisfies
\(\E|\Delta_0(X_{12})|^{4+\delta}<\infty\), with \(\delta\) as in
Assumption \ref{ass:moments}. The joint distribution of
\((X_{ij},u_{ij})\), including its underlying node- and dyad-level latent
variables, is held fixed as \(n\) varies. Only the magnitude
\(s_n^{-1}\Delta_0(X_{ij})\) of the local departure changes with \(n\).

Part of \(\Delta_0(X)\) may lie in the linear span of \(X\) and is therefore
absorbed by re-estimation of the linear projection coefficient. Define
\[
b_\Delta
=
Q^{-1}\E[X_{12}\Delta_0(X_{12})],
\qquad
\widetilde\Delta_0(X)
=
\Delta_0(X)-X'b_\Delta.
\]
The component of the local departure that remains visible to the test is
\begin{equation}
\mu_\Delta(x)
=
\E[
\widetilde\Delta_0(X_{12})
\ind\{X_{12}\preceq x\}
]
=
\E[
\Delta_0(X_{12})q_x(X_{12})
].
\label{eq:local-drift}
\end{equation}
Thus \(\mu_\Delta\) is the local departure after projection onto the
orthogonalized instrument class. In this subsection, \(\mathbb G_R\) and
\(\mathbb G_D\) denote the Gaussian limits generated by the baseline mark
\(r_{0,x}=u_{12}q_x(X_{12})\), with covariance kernels
\[
\Omega_0(x,z)=4\Cov\!\left(\E[r_{0,x}\mid U_1],\E[r_{0,z}\mid U_1]\right),
\qquad
\Gamma_0(x,z)=\Cov(r_{0,x},r_{0,z}),
\]
respectively.

Let \(c_{1-\alpha}^{KS}\) and \(c_{1-\alpha}^{CvM}\) denote the
\((1-\alpha)\)-quantiles of
\(\|\mathbb G_R\|_\infty\) and
\(\int_{\calX}\mathbb G_R(x)^2d\nu(x)\), respectively. Let
\(c_{G,1-\alpha}^{KS,R}\) and \(c_{G,1-\alpha}^{CvM,R}\) denote the
\((1-\alpha)\)-quantiles of
\(\max_{g\leq G}|\mathbb G_R(x_g)|\) and
\(\sum_{g=1}^G w_g\mathbb G_R(x_g)^2\), respectively. Define
\(c_{G,1-\alpha}^{KS,D}\) and \(c_{G,1-\alpha}^{CvM,D}\) analogously
using \(\mathbb G_D\).

\begin{theorem}[Local asymptotic power]
\label{thm:local}
Suppose Assumptions \ref{ass:sampling}-\ref{ass:critical} hold uniformly
along the local alternatives in \eqref{eq:local-model}.  In addition, assume that each
shifted limiting statistic appearing on the right-hand sides below has no atom
at the critical value against which it is compared.
\begin{enumerate}[label=(\alph*),leftmargin=*]
\item Under scenario (i),
\begin{equation}
\sqrt n\,\widehat R_n
\weak
\mathbb G_R+\mu_\Delta.
\label{eq:local-limit-R}
\end{equation}
The auxiliary full-index raw tests satisfy
\begin{align}
\Pp\left(
T_n^{KS}>c_{n,1-\alpha}^{KS,\rm raw,*}
\right)
&\to
\Pp\left(
\|\mathbb G_R+\mu_\Delta\|_\infty>c_{1-\alpha}^{KS}
\right),
\label{eq:local-power-continuum-KS}\\
\Pp\left(
T_n^{CvM}>c_{n,1-\alpha}^{CvM,\rm raw,*}
\right)
&\to
\Pp\left(
\int_{\calX}(\mathbb G_R+\mu_\Delta)^2d\nu
>c_{1-\alpha}^{CvM}
\right).
\label{eq:local-power-continuum-CvM}
\end{align}
On the fixed grid \(\calX_G\), the implemented raw grid tests  satisfy
\begin{align}
\Pp\left(
T_{n,G}^{KS}>c_{n,G,1-\alpha}^{KS,\rm raw,*}
\right)
&\to
\Pp\left(
\max_{g\leq G}
|\mathbb G_R(x_g)+\mu_\Delta(x_g)|
>c_{G,1-\alpha}^{KS,R}
\right),
\label{eq:local-power-grid-KS-R}\\
\Pp\left(
T_{n,G}^{CvM}>c_{n,G,1-\alpha}^{CvM,\rm raw,*}
\right)
&\to
\Pp\left(
\sum_{g=1}^G w_g
\{\mathbb G_R(x_g)+\mu_\Delta(x_g)\}^2
>c_{G,1-\alpha}^{CvM,R}
\right).
\label{eq:local-power-grid-CvM-R}
\end{align}
The corrected tests have the same limiting local powers, with
\(c_{n,G,1-\alpha}^{S,\rm raw,*}\) replaced by
\(c_{n,G,1-\alpha}^{S,\rm corr,*}\).

\item Under scenario (ii), 
\begin{equation}
\sqrt{N_n}\,\widehat R_n
\weak
\mathbb G_D+\mu_\Delta.
\label{eq:local-limit-D}
\end{equation}
On the fixed grid \(\calX_G\), the corrected tests satisfy
\begin{align}
\Pp\left(
T_{n,G}^{KS}>c_{n,G,1-\alpha}^{KS,\rm corr,*}
\right)
&\to
\Pp\left(
\max_{g\leq G}
|\mathbb G_D(x_g)+\mu_\Delta(x_g)|
>c_{G,1-\alpha}^{KS,D}
\right),
\label{eq:local-power-grid-KS-D}\\
\Pp\left(
T_{n,G}^{CvM}>c_{n,G,1-\alpha}^{CvM,\rm corr,*}
\right)
&\to
\Pp\left(
\sum_{g=1}^G w_g
\{\mathbb G_D(x_g)+\mu_\Delta(x_g)\}^2
>c_{G,1-\alpha}^{CvM,D}
\right).
\label{eq:local-power-grid-CvM-D}
\end{align}
In contrast, the raw grid tests satisfy
\begin{align}
\Pp\left(
T_{n,G}^{KS}>c_{n,G,1-\alpha}^{KS,\rm raw,*}
\right)
&\to
\Pp\left(
\max_{g\leq G}
|\mathbb G_D(x_g)+\mu_\Delta(x_g)|
>\sqrt{2}\,c_{G,1-\alpha}^{KS,D}
\right),
\label{eq:local-power-grid-raw-KS-II}\\
\Pp\left(
T_{n,G}^{CvM}>c_{n,G,1-\alpha}^{CvM,\rm raw,*}
\right)
&\to
\Pp\left(
\sum_{g=1}^G w_g
\{\mathbb G_D(x_g)+\mu_\Delta(x_g)\}^2
>2c_{G,1-\alpha}^{CvM,D}
\right).
\label{eq:local-power-grid-raw-CvM-II}
\end{align}
\end{enumerate}
\end{theorem}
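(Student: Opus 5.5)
The argument has three steps. First, decompose the estimated process under the local model into the null process plus a deterministic drift. Second, show that the bootstrap critical values are unaffected by the local departure. Third, apply the continuous mapping theorem and Slutsky's lemma.

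\textbf{Drift decomposition.} Under \eqref{eq:local-model} the population projection coefficient is $\beta_{*,n}=\beta_0+s_n^{-1}b_\Delta$, and the projection residual is
\[
\varepsilon_{ij,n}=u_{ij}+s_n^{-1}\widetilde\Delta_0(X_{ij}).
\]
Since $E[u_{12}\mid X_{12}]=0$, the population moment is
\[
\Delta_n(x)=E[\varepsilon_{12,n}\ind\{X_{12}\preceq x\}]=s_n^{-1}\mu_\Delta(x).
\]
This uses $E[X\widetilde\Delta_0(X)]=0$, which gives the second equality in \eqref{eq:local-drift}.

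The proof of Lemma \ref{lem:linearization} can be rerun along the sequence, as permitted by the assumption that Assumptions \ref{ass:sampling}--\ref{ass:critical} hold uniformly. It gives
\[
s_n\{\widehat R_n-\Delta_n\}=s_n(\Pn-P)r_{n,x}+o_p(1),\qquad r_{n,x}=r_{0,x}+s_n^{-1}\widetilde\Delta_0(X)q_x(X).
\]
The perturbation class $\{\widetilde\Delta_0 q_x:x\in\calX\}$ is VC type with square-integrable envelope, by the same argument as Lemma \ref{lem:entropy}. By Lemma \ref{lem:ddg}, and in scenario (ii) by the i.i.d.\ empirical-process CLT at rate $\sqrt{N_n}$, its centered process is $O_p(1)$ at the scale $s_n$. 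Multiplied by $s_n^{-1}$, the perturbation is therefore uniformly $o_p(1)$. Hence
\[
s_n\widehat R_n=s_n(\Pn-P)r_{0,\cdot}+\mu_\Delta+o_p(1).
\]
Theorems \ref{thm:weak} and \ref{thm:weak-independent}, applied to the fixed baseline mark $r_{0,x}$, then give \eqref{eq:local-limit-R} and \eqref{eq:local-limit-D}.

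\textbf{Bootstrap quantiles under the local model.} The main obstacle is showing that the critical values are asymptotically unaffected by the drift. The estimated marks $\widehat r_{ij}(x)$ computed from the perturbed data differ from the null-data marks by
\[
s_n^{-1}\widetilde\Delta_0(X_{ij})q_{n,x}(X_{ij})+O_p(s_n^{-1})\,\|X_{ij}\|,
\]
where the second term comes from $\widehat\beta-\beta_0$.

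In scenario (i), the node scores $\widehat\psi_i(x)$ shift by $n^{-1/2}$ times a centered incident average. The same deterministic-enlargement argument as in Lemma \ref{lem:bootstrap-stability} therefore shows that $\sqrt n\widehat R_n^{\rm raw,*}$ changes by $o_{p^*}(1)$ uniformly in $x$. Theorem \ref{thm:bootstrap} thus still holds, and by Assumption \ref{ass:critical} the raw quantiles converge to $c_{1-\alpha}^{KS}$, $c_{1-\alpha}^{CvM}$ and to their grid analogues.

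For the corrected procedure, $\widehat V_{0,n}$ and $\widehat V_{1,n}$ are sample second moments. The perturbations enter through cross terms, which are controlled by Cauchy--Schwarz together with the $4+\delta$ moments of $\Delta_0$. The same perturbation bound therefore shows that $\widehat K_{n,+}^{\rm FS}$ has the same probability limit as under the null.

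In scenario (ii) I must be careful, because the relevant covariances are of order $n/N_n$. After scaling by $N_n/n$, the perturbation in $\widehat V_{0,n}$ is $O_p(N_n^{-1/2})$. The dyad-level perturbation in $\widehat V_{1,n}$ is also negligible, since distinct dyads are independent and $\widetilde\Delta_0$ is a fixed function. Part (c) of Theorem \ref{thm:bootstrap-tests} therefore persists: the corrected quantiles converge to $c_{G,1-\alpha}^{S,D}$. The raw covariance converges to $2\Gamma_0$, so the raw Gaussian limit is $\sqrt2\,\mathbb G_D$. Its KS quantile is $\sqrt2\,c_{G,1-\alpha}^{KS,D}$ and its CvM quantile is $2c_{G,1-\alpha}^{CvM,D}$. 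For the non-Gaussian raw multiplier, this step requires the conditional CLT on the finite grid, via Lindeberg applied to $\sum_i\xi_i\widehat{\bm\psi}_i$.

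\textbf{Conclusion.} The sample statistic and the critical value now converge jointly, since the critical value converges in probability to a constant. The continuous mapping theorem applied to the sup, integral, max and weighted-sum functionals, together with Slutsky's lemma, gives convergence of the rejection probabilities. The assumed absence of atoms at the limiting critical values then yields the displayed limits \eqref{eq:local-power-continuum-KS}--\eqref{eq:local-power-grid-raw-CvM-II}.
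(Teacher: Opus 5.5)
Your proposal is correct and follows essentially the paper's own argument. It uses the same computation $\beta_{*,n}=\beta_0+s_n^{-1}b_\Delta$ and $\Delta_n=s_n^{-1}\mu_\Delta$, and the same split $r_{n,x}=r_{0,x}+s_n^{-1}\widetilde\Delta_0 q_x$ in which the perturbation is killed by the $s_n^{-1}$ factor. It also matches the paper's Cauchy--Schwarz control of $\widehat V_{0,n}$ and $\widehat V_{1,n}$, including the $o_p(n^{-1})$ requirement on $\widehat V_{1,n}$ in scenario (ii), and the final Slutsky-plus-no-atom step. The one difference is cosmetic: you couple the perturbed-data bootstrap with the null-data feasible bootstrap, whereas the paper passes through the ideal multiplier at $r_{n,x}$ and then $r_{0,x}$.

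The one step you flag but leave unverified is the conditional Lindeberg condition for the rescaled raw bootstrap in scenario (ii). It amounts to showing $\max_{i\le n}\|\widehat{\bm\psi}_i\|=o_p(1)$. The paper obtains this from Rosenthal's inequality with $p=4+\delta$, a union bound over nodes, and an $O_p(n^{-1})$ bound on the plug-in error.
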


Theorem \ref{thm:local} shows that a departure of order \(s_n^{-1}\)
enters the limiting sample process through the deterministic drift
\(\mu_\Delta\), without changing the first-order covariance kernel of
the centered process. Thus the shifted limit is
\(\mathbb G_R+\mu_\Delta\) under shared-node dependence and
\(\mathbb G_D+\mu_\Delta\) under independent dyads. Both bootstraps remain
centered; the corrected bootstrap estimates the corresponding null Gaussian
law in both scenarios, while the raw bootstrap does so only in scenario (i).
The continuum raw results in part (a) describe the full-index theory, whereas
all fixed-grid results correspond directly to implementation.

\begin{remark}
The limiting rejection probabilities of the raw and corrected tests under
scenario (i), and of the corrected tests under scenario (ii), are at least
\(\alpha\). For the fixed-grid tests, if the covariance matrix of the
relevant Gaussian vector is positive definite,
\[
(\mu_\Delta(x_1),\ldots,\mu_\Delta(x_G))'\neq0,
\]
and \(w_g>0\) for every \(g\) in the CvM case, then the corresponding
limiting rejection probabilities are strictly greater than \(\alpha\).
Under scenario (ii), when \(\mu_\Delta=0\), the limiting rejection
probabilities of the raw grid tests are
strictly below \(\alpha\). A nonzero drift raises rejection probability
relative to this conservative null limit, but the limiting local power
need not exceed \(\alpha\) when the drift is small. This variance
inflation explains the loss of local power of the raw tests under
independent dyads.
\end{remark}

\section{Monte Carlo Simulation}
\label{sec:simulation}

This section examines the finite-sample size and local power of the proposed specification tests. We compare three bootstrap procedures: the covariance-corrected Gaussian bootstrap, the raw node-multiplier bootstrap, and a naive dyad-level multiplier bootstrap that treats all dyads as independent. For each bootstrap procedure, we consider both the KS and CvM statistics.

\subsection{Simulation designs}

For each node \(i=1,\ldots,n\) and regressor \(k=1,2\), generate mutually
independent random variables \(U_{k,i}^X\sim U(-1,1)\), and independently
generate \(U^\varepsilon_i\sim N(0,1)\). For each dyad \(i<j\), independently generate \(U^X_{k,ij}\sim U(-1,1)\), \(k=1,2\), and
\(U^\varepsilon_{ij}\sim N(0,1)\), independently of all node-level
variables. The regressors and regression disturbance are constructed as
\begin{equation}
X_{k,ij}
=
\omega(U_{k,i}^X+U^X_{k,j})+U^X_{k,ij},
\qquad
\varepsilon_{ij}
=
\omega(U^\varepsilon_i+U^\varepsilon_j)+U^\varepsilon_{ij},
\label{eq:sim-dgp1-components}
\end{equation}
where \(\omega\geq0\) controls the strength of shared-node dependence, and all node- and dyad-level components are mutually independent across \(k=1,2\). Because \(U^\varepsilon_i\), \(U^\varepsilon_j\), and \(U^\varepsilon_{ij}\) are independent of the regressors and have mean zero, the conditional mean restriction
\(\E(\varepsilon_{ij}\mid X_{1,ij},X_{2,ij})=0\) continues to hold.

We consider two data-generating processes:
\begin{align}
&\mathrm{DGP~1}:\qquad
Y_{ij}
=
1+X_{1,ij}
+
\gamma_n
\left\{
X_{1,ij}^{2}-\E(X_{1,ij}^{2})
\right\}
+
\varepsilon_{ij}.
\label{eq:sim-dgp1}\\
&\mathrm{DGP~2}:\qquad
Y_{ij}
=
1+X_{1,ij}+X_{2,ij}
+
\gamma_n X_{1,ij}X_{2,ij}
+
\varepsilon_{ij},
\label{eq:sim-dgp2}
\end{align}
 DGP 1 therefore introduces an omitted quadratic term, whereas DGP 2 introduces an omitted interaction between the two regressors. The nonlinear components are orthogonal to the regressors included in the fitted linear model. In DGP 1, symmetry implies
\(\E(X_{1,ij})=\E(X_{1,ij}^{3})=0\), so
\(X_{1,ij}^{2}-\E(X_{1,ij}^{2})\) is orthogonal to \(X_{1,ij}\). In DGP 2, independence and centering imply that \(X_{1,ij}X_{2,ij}\) is orthogonal to both \(X_{1,ij}\) and \(X_{2,ij}\). Consequently, these departures are not absorbed by re-estimation of the linear regression coefficients.

For the size experiments, we impose the null by setting \(\gamma_n=0\). We first vary the number of nodes over
\[
n\in\{10,12,15,20,25,30,40,50,70,100\}
\]
while holding \(\omega=1\). We then fix \(n=50\) and vary the strength of dyadic dependence over
\[
\omega\in\{0,0.2,0.4,\ldots,2\}.
\]
When \(\omega=0\), the complete dyadic observations are mutually independent. Positive values of \(\omega\) induce dependence between dyads sharing a node, with larger values representing stronger shared-node dependence.

For the power experiments, we consider the local alternatives
\begin{equation}
\gamma_n=\frac{h}{\sqrt n}.
\label{eq:sim-local-alternative}
\end{equation}
Thus, \(h=0\) corresponds to the null, while larger values of \(h\) represent increasingly pronounced local departures from linearity. We fix \(n=50\) and \(\omega=1\). For DGP 1, we use
\(h\in\{0,0.15,\ldots,1.5\}\), while for DGP 2 we use
\(h\in\{0,0.4,\ldots,4\}\). The different grids account for the different scales of the quadratic and interaction departures.

All results are based on \(10,000\) Monte Carlo replications and \(B=399\) bootstrap repetitions. The nominal significance level is \(5\%\). \footnote{In all reported
implementations, we fix \(G=100\). 
Moreover, each coordinate moves in equally spaced steps from its lower to its
upper endpoint. For each design, the grid endpoints are the deterministic support bounds of the simulated regressors; in particular, each nonconstant \(X_{k,ij}\) has support \([-(2\omega+1),2\omega+1]\).}  The raw procedure uses the node-multiplier bootstrap in Algorithm 1; the corrected procedure uses the bootstrap procedure in Algorithm 2; and the naive procedure attaches independent multipliers directly to the \(\binom n2\) dyads. For the raw and naive procedures, we use independent Rademacher multipliers taking the values \(-1\) and \(1\) with probability \(1/2\) each.

\subsection{Finite-sample size}

Figure \ref{fig:simulation-size} reports rejection probabilities under the null. Panels (a) and (c) vary \(n\) under DGPs 1 and 2, respectively, whereas panels (b) and (d) vary the dependence parameter \(\omega\).

\begin{figure}[t]
\centering
\begin{subfigure}{0.48\textwidth}
    \centering
    \includegraphics[width=\textwidth]{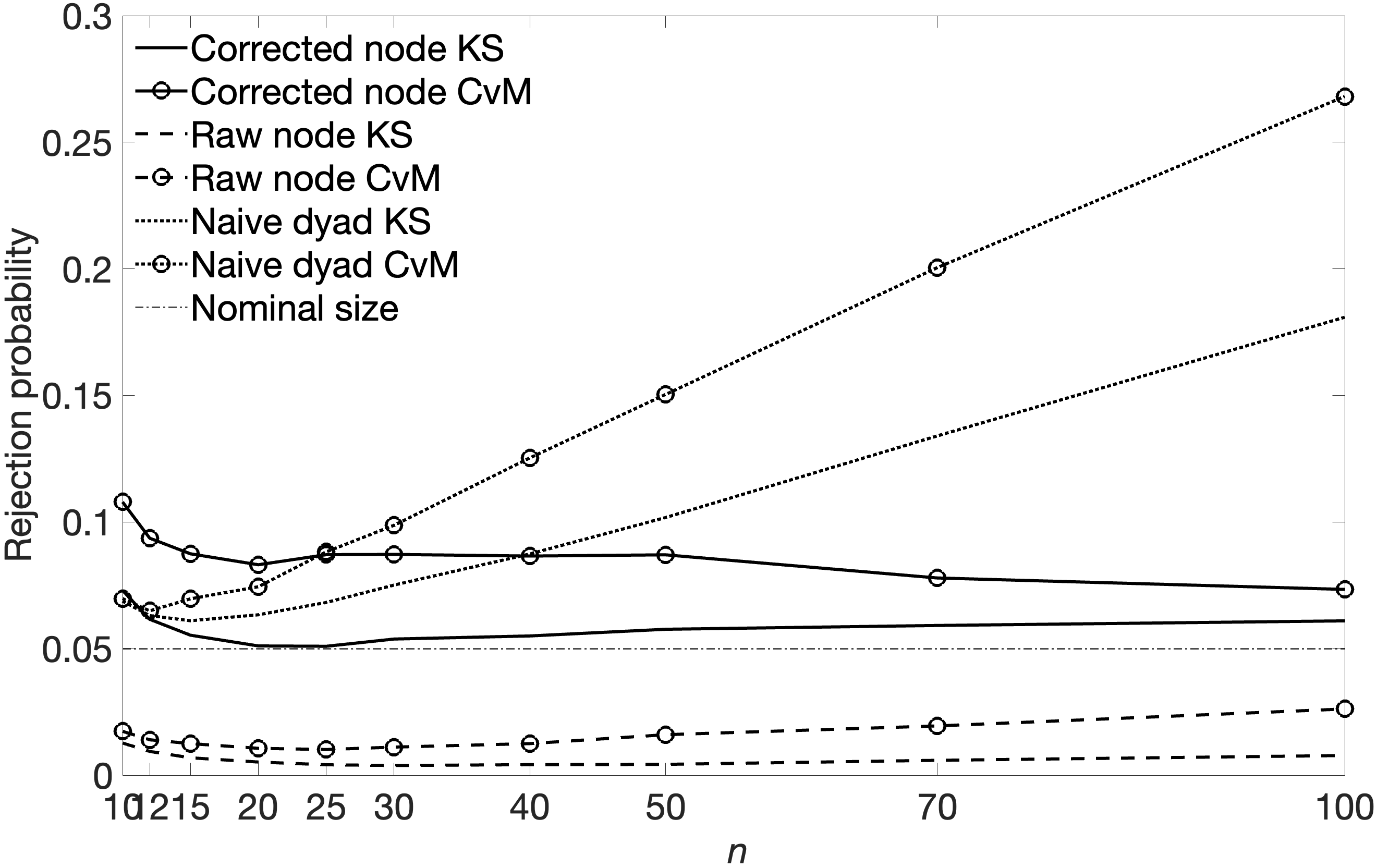}
    \caption{Varying \(n\), DGP 1}
    \label{fig:size-n-dgp1}
\end{subfigure}
\hfill
\begin{subfigure}{0.48\textwidth}
    \centering
    \includegraphics[width=\textwidth]{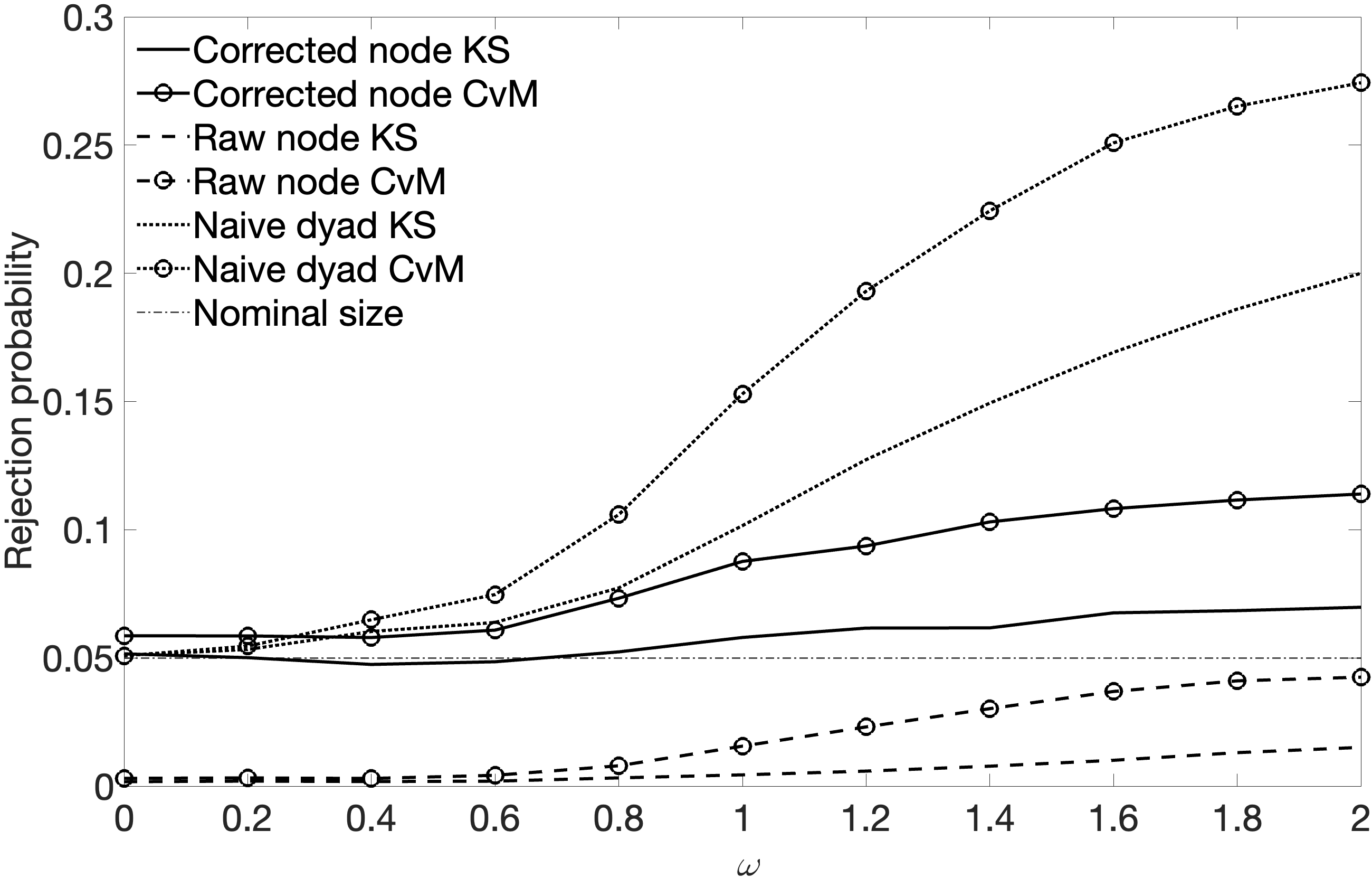}
    \caption{Varying \(\omega\), DGP 1}
    \label{fig:size-omega-dgp1}
\end{subfigure}

\medskip

\begin{subfigure}{0.48\textwidth}
    \centering
    \includegraphics[width=\textwidth]{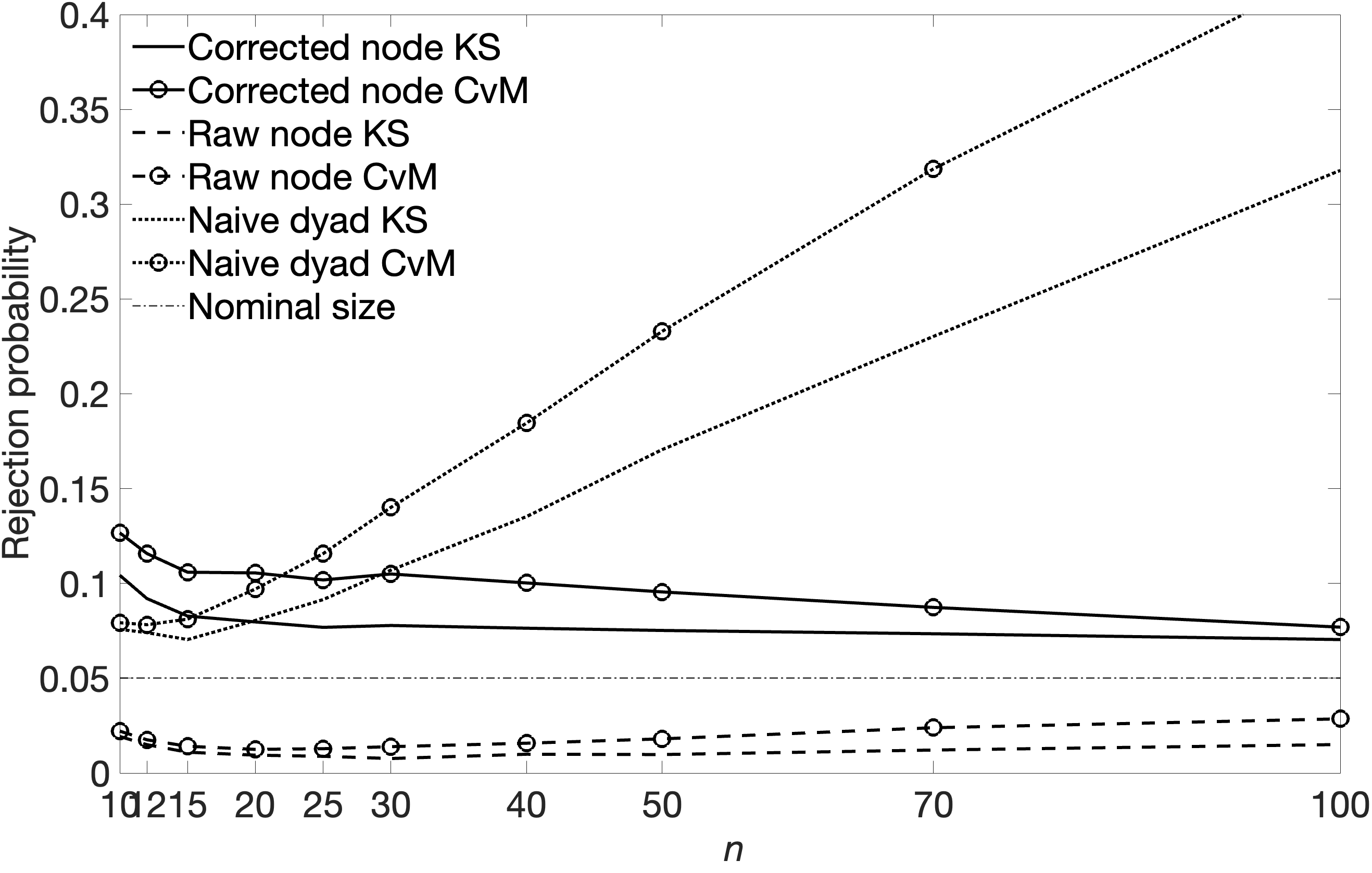}
    \caption{Varying \(n\), DGP 2}
    \label{fig:size-n-dgp2}
\end{subfigure}
\hfill
\begin{subfigure}{0.48\textwidth}
    \centering
    \includegraphics[width=\textwidth]{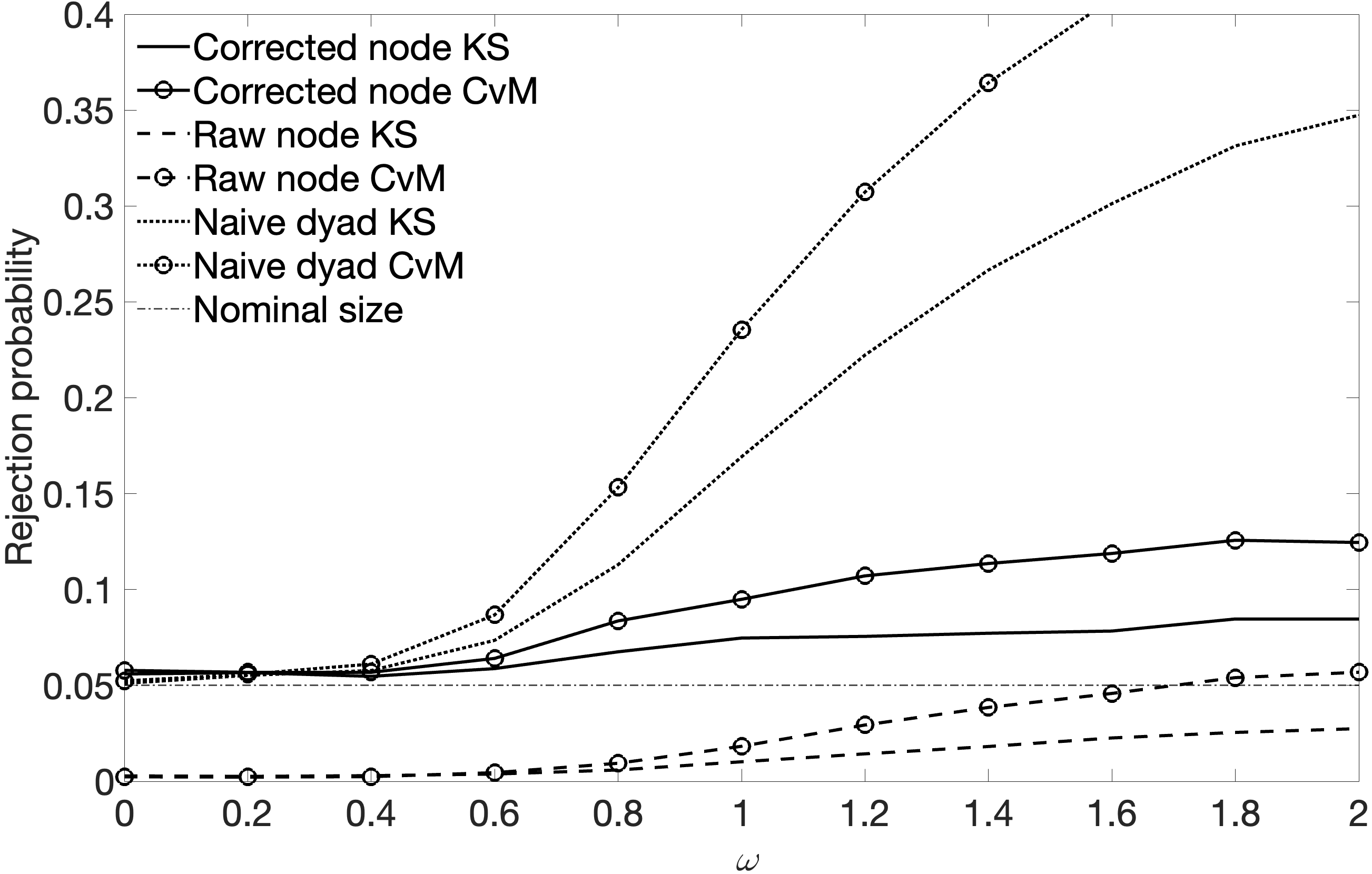}
    \caption{Varying \(\omega\), DGP 2}
    \label{fig:size-omega-dgp2}
\end{subfigure}

\caption{Null rejection probabilities. The nominal significance level is
\(5\%\). The number of nodes is varied while holding \(\omega=1\) in
panels (a) and (c), and \(\omega\) is varied while holding \(n=50\) in
panels (b) and (d).}
\label{fig:simulation-size}
\end{figure}

Several clear patterns emerge. First, the corrected KS test provides the most stable size control across the two DGPs and dependence regimes. Under DGP 1, its rejection probability remains close to the nominal \(5\%\) level over the full range of \(n\) and under weak or moderate dyadic dependence. Its rejection probability increases only modestly as \(\omega\) becomes large. Under DGP 2, the corrected KS test exhibits some overrejection, especially for small \(n\), but its size distortion reduces as $n$ increases.

The corrected CvM test generally overrejects more than the corrected KS test. Its rejection probability is approximately \(7.5\%\)-\(10\%\) in many designs and increases further under strong shared-node dependence. This pattern suggests that the integrated CvM functional may be more sensitive to finite-sample covariance-estimation error than the supremum-based KS functional.

Second, the raw node tests are severely conservative when dyadic dependence is absent or weak. At \(\omega=0\), their rejection probabilities are close to zero. This behavior agrees with the factor-of-two variance discrepancy established in \eqref{eq:Kraw-ID}: when the dyads are independent, the raw node-multiplier bootstrap overestimates the sampling variance and consequently produces critical values that are too large. The distortion declines as shared-node dependence becomes stronger because the first-order node component becomes increasingly important.

An interesting exception occurs for the raw CvM test under DGP 2. When \(\omega\) is large, its rejection probability moves close to the nominal level and is sometimes more accurate than that of the corrected procedures. Thus, the raw CvM test performs well in this particular strongly dependent design. Its performance is not stable across regimes, however: it remains markedly conservative under independence and weak dependence. It therefore cannot be recommended when the  dyadic structure is unknown.

Third, the naive dyad-level tests are reliable only near \(\omega=0\), where the dyads are genuinely independent. Their rejection probabilities rise rapidly with \(\omega\), and the distortion becomes especially severe for the CvM statistic and under DGP 2. The distortion also becomes more visible as \(n\) increases. Treating the \(\binom n2\) dyads as independent understates the sampling variation generated by shared nodes, leading to critical values that are too small. The resulting overrejection demonstrates that the large number of dyads cannot be interpreted as an equally large number of independent observations.

\subsection{Local power}

Figure \ref{fig:simulation-power} reports rejection probabilities under the local alternatives \(\gamma_n=h/\sqrt n\). Because the values at \(h=0\) reproduce the finite-sample size of each procedure, power comparisons must be interpreted together with the size results in Figure \ref{fig:simulation-size}.

\begin{figure}[t]
\centering
\begin{subfigure}{0.48\textwidth}
    \centering
    \includegraphics[width=\textwidth]{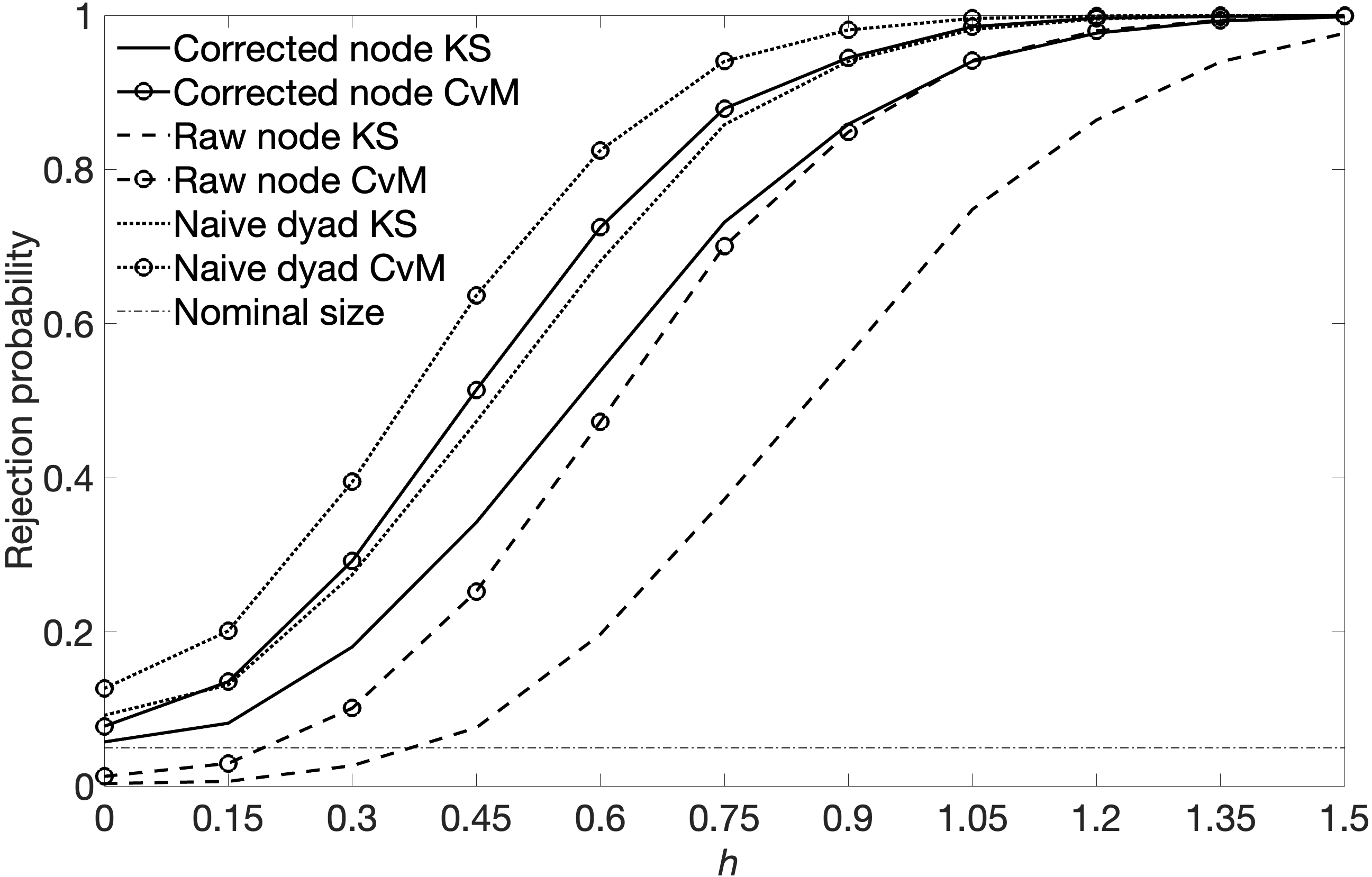}
    \caption{DGP 1}
    \label{fig:power-dgp1}
\end{subfigure}
\hfill
\begin{subfigure}{0.48\textwidth}
    \centering
    \includegraphics[width=\textwidth]{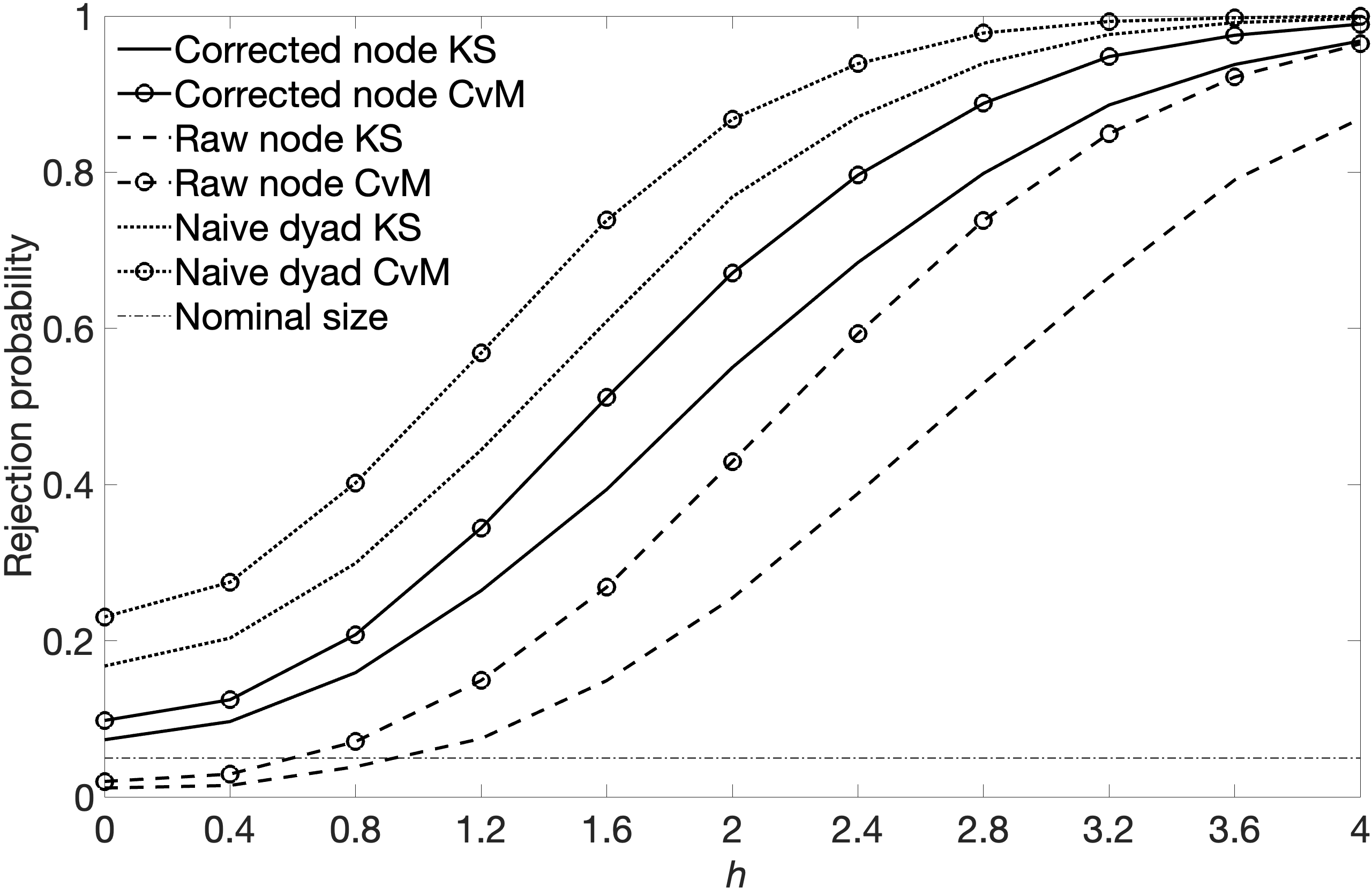}
    \caption{DGP 2}
    \label{fig:power-dgp2}
\end{subfigure}

\caption{Rejection probabilities under the local alternatives
\(\gamma_n=h/\sqrt n\), with \(n=50\) and \(\omega=1\). The nominal
significance level is \(5\%\).}
\label{fig:simulation-power}
\end{figure}

All six tests exhibit increasing rejection probabilities as \(h\) increases, confirming that the residual-marked process detects both the omitted quadratic term in DGP 1 and the omitted interaction in DGP 2. The corrected procedures have nontrivial power against departures of order \(n^{-1/2}\), in agreement with Theorem \ref{thm:local}.

Within each bootstrap procedure, the CvM statistic generally rejects more frequently than the corresponding KS statistic. In particular, the corrected CvM test has somewhat higher power than the corrected KS test in both DGPs. Part of this difference, however, reflects the larger null rejection probability of the corrected CvM test. The corrected KS test starts considerably closer to the nominal level and nevertheless develops power rapidly as \(h\) increases. It therefore provides a more favorable balance between size accuracy and power.

The raw node tests have substantially lower rejection probabilities for small and moderate values of \(h\). This should not be interpreted as evidence that they are intrinsically less sensitive to the alternatives. Rather, their low power largely reflects their severe underrejection under the null. As the departure becomes sufficiently large, their rejection probabilities eventually approach one, but they require a larger value of \(h\) than the corrected procedures.

The naive dyad tests display the highest unadjusted rejection probabilities in much of Figure \ref{fig:simulation-power}. These curves do not represent valid power gains because the same procedures already overreject strongly at \(h=0\). Their apparent advantage is therefore largely generated by underestimated critical values rather than by superior detection of nonlinear alternatives.

Overall, the corrected KS test delivers the best finite-sample performance across the designs considered here. It is valid at the independent-dyad boundary, remains relatively well sized as shared-node dependence strengthens, and retains substantial local power against both quadratic and interaction alternatives. The raw CvM test can provide particularly accurate size under strong dyadic dependence in DGP 2, but this advantage is confined to that regime and is accompanied by severe conservativeness when dependence is weak. When the dependence regime is not known in advance, the corrected KS test is consequently the most reliable default specification test.

\section{Empirical Study}
\label{sec:empirical}

A central and extensively debated question in the network literature is why individuals form professional and social connections. A prominent explanation is homophily: individuals with similar demographic, professional, or organizational characteristics tend to interact more frequently; see, among others, \citet{mcpherson2001birds}, \citet{jackson2008social}, and \citet{graham2017econometric}. In professional networks, common office locations, practice areas, career status, and educational backgrounds may facilitate communication and cooperation. At the same time, the relationship between these characteristics and link formation need not be additive. For example, working in the same office may be particularly important for lawyers in the same practice area, while differences in age or seniority may have different implications across organizational groups. This creates a direct specification question: can the probability of a professional connection be adequately represented by an additive linear conditional mean, or are nonlinearities and interactions required? 

We use the well-known Lazega law-firm network data collected from a corporate law firm in the northeastern United States between 1988 and 1991; see \citet{lazega2001collegial}. The data contain information on \(n=71\) lawyers, including partners and associates, and are publicly available through the \texttt{lazegalaw} data set in the \texttt{amen} R package.\footnote{The data and variable documentation are available at
\url{https://pdhoff.github.io/amen/reference/lazegalaw.html}. An alternative description of the original data is available at
\url{https://www.stats.ox.ac.uk/~snijders/siena/Lazega_lawyers_data.htm}.}
Although the source data are recorded in directed form, for the
professional-connection outcome used here we verified that
\(Y_{ij}=Y_{ji}\) for every pair. We therefore retain one binary observation
for each unordered pair. The resulting undirected network contains
\(
N_n=\binom{71}{2}=2,485
\)
unordered dyads.

Let \(Y_{ij}\) denote the observed co-worker status between each pair \(\{(i,j):i<j\}\), 
where \(Y_{ij}=1\) indicates that the two lawyers are connected and \(Y_{ij}=0\) indicates that they are not connected. Because the outcome is binary,
\[
\E[Y_{ij}\mid X_{ij}]
=
\Pr(Y_{ij}=1\mid X_{ij}),
\]
so the conditional-mean model directly describes the probability of a professional connection.

The baseline dyadic regressors include the sum of the two lawyers' seniority levels, their absolute seniority difference, their absolute age difference, and indicators for whether they have the same office, practice area, professional status, gender, and law-school category. Thus, the regressors capture both the overall experience of a pair and several dimensions of professional, organizational, and demographic similarity.

We consider three increasingly flexible specifications. The first is the additive linear model
\[
\E[Y_{ij}\mid X_{ij}]
=
X_{ij}'\beta_0.
\]
The second augments the baseline model with the squared seniority-gap term,
allowing the connection probability to vary nonlinearly with the seniority
difference. The third specification additionally includes the interaction
between shared office and shared practice:
\[
\E[Y_{ij}\mid X_{ij}]
=
X_{ij}'\beta_0+\text{seniority\_gap}_{ij}^2 \cdot\gamma_0+\text{shared\_office}_{ij}\times\text{shared\_practice}_{ij}\cdot\delta_0.
\]
This interaction allows the association between organizational proximity and
connectivity to depend on whether the lawyers share a practice area.

For each specification, we compute the grid-based KS and CvM statistics with \(9,999\) bootstrap repetitions.\footnote{We use \(G=100\) multivariate evaluation points selected without replacement from the observed regressor vectors using a fixed random seed. The same selected dyads are used to construct the grids across specifications, and all grid points receive equal weight. This construction accommodates both continuous and discrete regressors while avoiding the prohibitively large Cartesian grid. Because the evaluation points are data dependent, we interpret the resulting evidence as an informative specification diagnostic over the observed covariate distribution.}  We report results from three resampling procedures. The covariance-corrected Gaussian bootstrap is our preferred procedure because it accommodates both nondegenerate shared-node dependence and the independent-dyad case. The raw node-multiplier bootstrap is valid under nondegenerate shared-node dependence, while the naive independent-dyad bootstrap is reported only as a benchmark because it ignores dependence between dyads sharing the same lawyer. 

\begin{table}[t!]
\centering
\caption{Specification tests for the Lazega professional network}
\label{tab:lazega-specification}
\small
\begin{tabular}{lcccccc}
\toprule
& \multicolumn{2}{c}{Corrected bootstrap}
& \multicolumn{2}{c}{Raw node bootstrap}
& \multicolumn{2}{c}{Naive dyad bootstrap} \\
\cmidrule(lr){2-3}
\cmidrule(lr){4-5}
\cmidrule(lr){6-7}
Specification
& KS & CvM & KS & CvM & KS & CvM \\
\midrule
Linear
& 0.010 & 0.002
& 0.021 & 0.003
& 0.001 & 0.001 \\

Quadratic
& 0.004 & 0.002
& 0.008 & 0.008
& 0.001 & 0.001 \\

Quadratic plus interaction
& 0.733 & 0.306
& 0.974 & 0.727
& 0.381 & 0.030 \\
\midrule
\multicolumn{7}{l}{
\textit{Interaction estimate in the quadratic-plus-interaction model}
} \\
\addlinespace[2pt]
\multicolumn{4}{l}{Shared office \(\times\) shared practice coefficient}
& \multicolumn{3}{c}{0.170} \\
\multicolumn{4}{l}{Standard error}
& \multicolumn{3}{c}{0.035} \\
\bottomrule
\end{tabular}

\medskip
\begin{minipage}{0.94\textwidth}
\footnotesize
\textit{Notes:} The table reports bootstrap \(p\)-values for the KS and
CvM specification tests. The corrected bootstrap combines the same-dyad
and shared-node covariance components. The raw bootstrap attaches
multipliers to the estimated node projections, while the naive bootstrap
attaches independent multipliers to dyads. The interaction standard error
is computed using the dyadic-dependence-robust inference method by \citet{cameron2014robust}. Each bootstrap procedure
uses \(9,999\) repetitions.
\end{minipage}
\end{table}

The results provide strong evidence against the additive linear specification. The corrected-bootstrap \(p\)-values are \(0.010\) for the KS statistic and \(0.002\) for the CvM statistic. Thus, both tests reject the null hypothesis that the conditional probability of a connection is linear and additive in the included dyadic characteristics at the \(5\%\) significance level. The rejection implies that the linear model leaves a systematic component of professional connectivity unexplained.

Adding the squared seniority-gap term does not resolve the misspecification.
Under the quadratic model, the corrected KS and CvM \(p\)-values are \(0.004\)
and \(0.002\), respectively. Both tests continue to reject strongly.
Consequently, the failure of the linear model cannot be explained solely by
simple curvature in the seniority difference. This finding suggests that an
additive model remains too restrictive even after allowing this effect to be
nonlinear.

The conclusion changes substantially when the shared-office-by-shared-practice
interaction is included. For the quadratic-plus-interaction specification,
the corrected \(p\)-values increase to \(0.733\) for KS and \(0.306\) for
CvM. Neither statistic rejects at conventional significance levels. The raw
node bootstrap gives the same qualitative conclusion, with \(p\)-values of
\(0.974\) and \(0.727\). Thus, after allowing for this interaction, the
residual-marked process contains no statistically detectable systematic
departure from the proposed conditional mean. The estimated interaction coefficient is \(0.170\), with a dyadic-dependence-robust standard error of \(0.035\). In the linear probability model, this estimate implies that sharing both an office and a practice area is associated with an additional 17-percentage-point probability of a professional connection, beyond the separate associations of these characteristics. The estimate is statistically significant and empirically meaningful.

Economically, the results suggest that an additive specification omits a
relevant joint role of organizational proximity and practice area. In
particular, the association between sharing an office and professional
connectivity differs when the two lawyers also share a practice area. The
specification tests therefore favor a model that allows for this interaction
over a purely additive homophily model.

Failure to reject the quadratic-plus-interaction model does not prove that it is the unique correct model. It means that, relative to the omnibus collection of lower-orthant moment restrictions considered by the test, the data provide no statistically significant evidence of remaining conditional-mean misspecification. In contrast, the linear and quadratic models are clearly rejected by the same restrictions.

Finally, the comparison of bootstrap methods demonstrates the empirical importance of accounting for dyadic dependence. For the richest specification, the naive KS test does not reject, whereas the naive CvM test produces a \(p\)-value of \(0.030\) and rejects at the \(5\%\) level. This rejection is not supported by either the corrected or raw node procedures. Because multiple dyads contain the same lawyer, treating all \(2,485\) dyads as independent overstates the effective amount of independent information and can produce misleading inference. The disagreement in the last row therefore provides a concrete illustration of why a dyadic-dependence-robust specification test is needed.

\section{Conclusion}\label{sec:conclusion}

This paper develops specification tests based on an omnibus collection of
residual moments for linear conditional-mean models with undirected
dyadic data. We establish a uniform asymptotic theory for dyadic empirical processes
indexed by VC-type classes. Our uniform projection theorem reduces the
node-scale dyadic process to an ordinary empirical process over the latent
nodes. We further show that the unobserved node projections can be recovered
uniformly from incident-dyad averages and control the effects of estimated
residuals and orthogonalization terms through a deterministic nuisance
enlargement. Together, these results provide a uniform linearization and
establish functional weak convergence of the estimated residual-marked
process.

The theory reveals two distinct variance regimes. Under nondegenerate
shared-node dependence, the process is governed by its first-order node
projections and converges at the \(\sqrt n\) rate. Under independent dyads,
the node projection vanishes, and the dyad-specific component becomes leading,
producing the faster \(\sqrt{N_n}\) rate. The raw node-multiplier bootstrap is
valid in the former regime but counts the same-dyad variation twice in the
latter. An exact covariance decomposition isolates this duplication and
motivates a corrected Gaussian bootstrap that is valid under nondegenerate shared-node dependence or independent dyads. The continuum moment characterization is omnibus, and the implemented KS
and CvM tests are consistent against fixed alternatives captured by the
evaluation grid. We further characterize their power against alternatives
shrinking at the rate appropriate to each regime.

The simulations show that the corrected KS test provides the most stable size
control across the designs considered while retaining substantial local
power. In the Lazega law-firm network, the tests reject the additive linear
and quadratic specifications but do not reject a richer model containing a
squared seniority-gap term and a shared-office-by-shared-practice interaction.
The empirical findings therefore illustrate both the importance of accounting
for shared-node dependence and the usefulness of the proposed tests for
detecting economically meaningful nonlinearities in dyadic conditional-mean
models.

\appendix
\section{Technical Foundations for Dyadic Processes}
\label{app:proofs}

This appendix develops the projection, node-recovery, conditional multiplier, and measurability arguments used in the main text. The order follows the logical construction of the theory: exact projection identities, uniform control of the second-order remainder, recovery of latent node projections, and feasible conditional equicontinuity.

\subsection{A dyadic projection decomposition}

\begin{lemma}[Exact projection identity]
\label{lem:projection}
Under Assumption \ref{ass:sampling}, for every square-integrable symmetric dyad function \(f\),
\begin{equation}
 \sqrt n(\Pn f-Pf)
 =\frac2{\sqrt n}\sum_{i=1}^nf_1(U_i)
 +\sqrt n\,\mathbb U_nf_2,
\label{eq:projection}
\end{equation}
where \(\mathbb U_nf_2=N_n^{-1}\sum_{i<j}f_2(U_i,U_j,U_{ij})\).
Moreover,
\begin{align}
E(\mathbb U_nf_2)^2&=\frac1{N_n}E f_2^2,
\label{eq:deg-var}\\
E\left(\sqrt n\,\mathbb U_nf_2\right)^2
&=\frac{2}{n-1}Ef_2^2.\label{eq:deg-small}
\end{align}
Thus the second-order component is \(o_p(1)\) for each fixed \(f\).
\end{lemma}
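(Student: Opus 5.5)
The identity \eqref{eq:projection} is pure algebra, and I would prove it first. By definition \eqref{eq:f2}, for every pair \(i<j\),
\[
f(Z_{ij})-Pf=f_1(U_i)+f_1(U_j)+f_2(U_i,U_j,U_{ij}).
\]
Summing over \(i<j\) and dividing by \(N_n\) gives a first-order part \(N_n^{-1}\sum_{i<j}\{f_1(U_i)+f_1(U_j)\}\). Each node \(i\) appears in exactly \(n-1\) unordered pairs, so this part equals
\[
\frac{n-1}{N_n}\sum_{i=1}^n f_1(U_i)=\frac{2}{n}\sum_{i=1}^n f_1(U_i).
\]
Multiplying by \(\sqrt n\) yields \eqref{eq:projection}. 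Symmetry of \(\tau\) in its first two arguments ensures that \(f_2(U_i,U_j,U_{ij})=f_2(U_j,U_i,U_{ij})\), so the convention \(Z_{ij}=Z_{ji}\) is consistent.

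For \eqref{eq:deg-var}, I would expand
\[
E(\mathbb U_nf_2)^2=N_n^{-2}\sum_{i<j}\sum_{p<q}E[f_2(U_i,U_j,U_{ij})f_2(U_p,U_q,U_{pq})]
\]
and show that every off-diagonal term vanishes. There are two cases.

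\begin{enumerate}
\item If \(\{i,j\}\cap\{p,q\}=\varnothing\), the two factors depend on disjoint sets of independent latent variables. The expectation therefore factorizes into \((Ef_2)^2\), and \(Ef_2=0\) because \(Ef_1(U_i)=0\).
\item If the dyads share exactly one node, say \(i=p\) and \(j\neq q\), condition on \(U_i\). Given \(U_i\), the two factors depend on the disjoint independent collections \((U_j,U_{ij})\) and \((U_q,U_{iq})\), so they are conditionally independent. The conditional expectation then factorizes, and each factor is \(E[f_2\mid U_i]=0\) by the degeneracy property recorded after \eqref{eq:f2}.
\end{enumerate}

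Only the \(N_n\) diagonal terms survive, each equal to \(Ef_2^2\), which gives \eqref{eq:deg-var}. Multiplying by \(n\) and using \(n/N_n=2/(n-1)\) gives \eqref{eq:deg-small}.

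Square-integrability of \(f\) gives \(Ef_2^2<\infty\), since \(f_1\) is a conditional expectation and hence has a finite second moment. The conclusion that the second-order term is \(o_p(1)\) then follows from Chebyshev's inequality, because \(2Ef_2^2/(n-1)\to0\).

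The only delicate step is the shared-node case. There one must check carefully, from the mutual independence of all node and dyad variables in Assumption \ref{ass:sampling}, that conditional on \(U_i\) the pairs \((U_j,U_{ij})\) and \((U_q,U_{iq})\) are independent, so that the conditional expectation truly splits into a product of two vanishing factors.
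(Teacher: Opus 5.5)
Your proposal is correct and follows essentially the same route as the paper. Both use the node-counting step $(n-1)/N_n=2/n$ for the first-order part, and both prove the variance identity by the same overlap enumeration: disjoint dyads vanish by independence and centering, dyads sharing one node vanish after conditioning on the shared $U_i$, and only the $N_n$ identical dyads survive, with $n/N_n=2/(n-1)$ giving \eqref{eq:deg-small}.
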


\begin{proof}
Substitute the definition of \(f_2\) into the dyad average.  Each node \(i\)
appears in exactly \(n-1\) unordered dyads, so
\begin{align*}
\Pn f-\mu_f
&=\frac1{N_n}\sum_{i<j}
\{f_1(U_i)+f_1(U_j)+f_{2,ij}\}=\frac{n-1}{N_n}\sum_{i=1}^nf_1(U_i)+\mathbb U_nf_2=\frac2n\sum_{i=1}^nf_1(U_i)+\mathbb U_nf_2,
\end{align*}
where $f_{2,ij}=f_2(U_i,U_j,U_{ij})$.  Multiplication by \(\sqrt n\) proves \eqref{eq:projection}.

For the variance calculation, expand the square of \(\mathbb U_nf_2\).  If
\(\{i,j\}\cap\{k,\ell\}=\varnothing\), dissociation and centering give zero
covariance.  If the dyads share exactly one node, say \((i,j)=(1,2)\) and
\((k,\ell)=(1,3)\), condition first on \(U_1\).  Given \(U_1\), the latent
variables entering the two dyads are independent, and
\(E(f_{2,12}\mid U_1)=E(f_{2,13}\mid U_1)=0\).  Hence
\(E(f_{2,12}f_{2,13})=0\).  Only identical unordered dyads remain.  There are
\(N_n\) such diagonal terms, proving \eqref{eq:deg-var};
\eqref{eq:deg-small} follows from \(n/N_n=2/(n-1)\).
\end{proof}

\begin{lemma}[Finite-sample covariance and its limit]
\label{lem:finite-cov}
For square-integrable symmetric functions \(f,g\),
\begin{align}
\Cov\{\sqrt n\Pn f,\sqrt n\Pn g\}
&=4E[f_1(U_1)g_1(U_1)]
+\frac{2}{n-1}E(f_2g_2).\label{eq:finite-cov}
\end{align}
Consequently, its limit is
\(4\Cov(E[f(Z_{12})\mid U_1],E[g(Z_{12})\mid U_1])\).
For \(f=r_x\) and \(g=r_z\), this is \(\Omega(x,z)\).
\end{lemma}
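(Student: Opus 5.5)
The proof is a direct computation that reuses the exact projection identity of Lemma \ref{lem:projection}, applied to both \(f\) and \(g\). By \eqref{eq:projection},
\[
\sqrt n\,\Pn f-\sqrt n\,Pf=\frac{2}{\sqrt n}\sum_{i=1}^n f_1(U_i)+\sqrt n\,\mathbb U_nf_2,
\]
and similarly for \(g\). Since constants do not affect covariances, the covariance of interest is the sum of four pieces: node--node, node--dyad in both orders, and dyad--dyad.

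\textbf{Node--node term.} The \(U_i\) are i.i.d. and each \(f_1(U_i)\) is centered, so
\[
\Cov\Big\{\tfrac{2}{\sqrt n}\sum_i f_1(U_i),\tfrac{2}{\sqrt n}\sum_k g_1(U_k)\Big\}=\tfrac4n\sum_i E[f_1(U_i)g_1(U_i)]=4E[f_1(U_1)g_1(U_1)].
\]

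\textbf{Cross terms.} These vanish. A typical summand is \(E[f_1(U_k)g_2(U_i,U_j,U_{ij})]\). If \(k\notin\{i,j\}\), it is zero by independence and because \(E g_2=0\). If \(k=i\), condition on \(U_i\) and use \(E[g_2(U_i,U_j,U_{ij})\mid U_i]=0\). The case \(k=j\) is the same, using the analogous conditional-mean-zero property.

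\textbf{Dyad--dyad term.} This is the polarized version of the variance argument in Lemma \ref{lem:projection}. Pairs of dyads with disjoint node sets contribute zero by dissociation and centering. Pairs sharing exactly one node, say \((1,2)\) and \((1,3)\), contribute zero: conditional on \(U_1\), the two terms are independent with conditional mean zero. Only the \(N_n\) diagonal pairs survive, so
\[
\Cov(\sqrt n\,\mathbb U_nf_2,\sqrt n\,\mathbb U_ng_2)=\frac{n}{N_n^2}\,N_n\,E(f_2g_2)=\frac{2}{n-1}E(f_2g_2).
\]
Summing the three pieces gives \eqref{eq:finite-cov}.

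\textbf{Limit.} Square integrability of \(f\) and \(g\) makes \(f_2\) and \(g_2\) square integrable, since \(f_2\) is a linear combination of \(f\circ\tau\), \(\mu_f\), and conditional expectations. By Cauchy--Schwarz, \(E(f_2g_2)\) is finite, so the second term is \(O(n^{-1})\to0\). Because \(f_1(U_1)=E[f(Z_{12})\mid U_1]-\mu_f\) and this is centered,
\[
4E[f_1g_1]=4\Cov\big(E[f(Z_{12})\mid U_1],E[g(Z_{12})\mid U_1]\big).
\]
Taking \(f=r_x\) and \(g=r_z\) gives \(f_1=\psi_x\), and the limit matches \(\Omega(x,z)\) in \eqref{eq:Omega}. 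The moment condition of Assumption \ref{ass:moments}, together with bounded \(X\) and bounded \(q_x\), makes \(r_x\) square integrable, so the argument applies. The only step needing care is the conditional-independence argument for dyads sharing one node, and that was already settled in Lemma \ref{lem:projection}.
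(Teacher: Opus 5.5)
Your proposal is correct and takes essentially the same route as the paper: you split both processes with the exact projection identity of Lemma~\ref{lem:projection}, show that the node and second-order components are uncorrelated by conditioning on the shared node, and reuse the overlap enumeration to get $2E(f_2g_2)/(n-1)$, which vanishes as $n\to\infty$. Your remarks on the square integrability of $f_2$, $g_2$ and of $r_x$ make explicit what the paper leaves implicit.
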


\begin{proof}
The two components in \eqref{eq:projection} are uncorrelated.  To verify this,
consider a typical term \(E[f_1(U_i)g_{2,k\ell}]\).  It is zero by independence
when \(i\notin\{k,\ell\}\).  If \(i=k\), condition on \(U_k\) and use
\(E[g_{2,k\ell}\mid U_k]=0\); the case \(i=\ell\) is the same.  Therefore the
covariance of two projection decompositions is the sum of their first- and
second-order covariances.

For the first component,
\begin{align*}
\Cov\left(\frac2{\sqrt n}\sum_i f_1(U_i),
\frac2{\sqrt n}\sum_i g_1(U_i)\right)
=4E[f_1(U_1)g_1(U_1)].
\end{align*}
For the second component, the same overlap enumeration as in the proof of
Lemma \ref{lem:projection} gives \(2E(f_2g_2)/(n-1)\).  Adding the two
uncorrelated covariance components proves \eqref{eq:finite-cov}.  The second
term vanishes as \(n\to\infty\), leaving the asserted first-order kernel.
\end{proof}

\subsection{Uniform control of the second-order remainder}
\label{app:uniform-remainder}

We now prove Lemma \ref{lem:uniform-degenerate}, the uniform projection result stated in the main text.

\begin{proof}
 The second-order remainder contains both variation generated by the
dyad-specific latent variable and variation generated jointly by the two node
variables.  These two components must be treated separately.  Define
\[
b_f(U_i,U_j)=E[f_{2,ij}\mid U_i,U_j],
\qquad
c_{f,ij}=f_{2,ij}-b_f(U_i,U_j).
\]
Then
\(
\mathbb U_nf_2
=
N_n^{-1}\sum_{i<j}b_f(U_i,U_j)
+
N_n^{-1}\sum_{i<j}c_{f,ij}.
\)
By the properties of the Hoeffding projection, we have
\[
E[b_f(U_i,U_j)\mid U_i]
=
E[b_f(U_i,U_j)\mid U_j]
=0\quad\text{ and }\quad
E[c_{f,ij}\mid U_i,U_j]=0.
\]

Conditional on \((U_i)_{i=1}^n\), the variables
\((c_{f,ij})_{i<j}\) are independent across dyads and centered.  Ordinary
conditional symmetrization therefore gives
\begin{equation}
E\sup_{f\in\calF}
\left|
N_n^{-1}\sum_{i<j}c_{f,ij}
\right|
\leq
2E\sup_{f\in\calF}
\left|
N_n^{-1}\sum_{i<j}e_{ij}c_{f,ij}
\right|,
\label{eq:edge-symmetrization}
\end{equation}
where \((e_{ij})_{i<j}\) are independent Rademacher variables, independent of
the original array.

The first component \(N_n^{-1}\sum_{i<j}b_f(U_i,U_j)\) is an
order-two \(U\)-process in the node variables. Let
\((U_j')_{j=1}^n\) be an independent copy of \((U_j)_{j=1}^n\).
Applying Theorem 3.1.1 of \citet{de2012decoupling} with \(m=2\) and
\(\Phi(t)=t\) yields
\begin{equation}
E\sup_{f\in\calF}
\left|
N_n^{-1}\sum_{i<j}b_f(U_i,U_j)
\right|
\leq
C E\sup_{f\in\calF}
\left|
N_n^{-1}\sum_{i<j}b_f(U_i,U_j')
\right|.
\label{eq:node-decoupling}
\end{equation}
Formally, the theorem is first applied to an arbitrary finite subclass
\(\calF_M\subset\calF\), taking \(B=\mathbb R^{|\calF_M|}\) with the
supremum norm and the indexed kernel
\(h_{ij}(u,v)=1\{i<j\}(b_f(u,v))_{f\in\calF_M}\). The result for
\(\calF\) then follows from pointwise measurability and monotone convergence.
Without pointwise measurability, the expectations are interpreted as outer
expectations.

We next symmetrize the two independent node sequences successively. First,
conditional on \((U_j')_{j=1}^n\), the summands are centered as functions of
\((U_i)_{i=1}^n\), because
\(E[b_f(U_i,U_j')\mid U_j']=0\). Conditional on
\((U_i,e_i)_{i=1}^n\), the resulting summands are centered as functions of
\((U_j')_{j=1}^n\), because
\(E[b_f(U_i,U_j')\mid U_i]=0\). Two applications of the ordinary
symmetrization inequality therefore give
\begin{equation}
E\sup_{f\in\calF}
\left|
N_n^{-1}\sum_{i<j}b_f(U_i,U_j')
\right|
\leq
4E\sup_{f\in\calF}
\left|
N_n^{-1}\sum_{i<j}e_i e_j'
b_f(U_i,U_j')
\right|,
\label{eq:node-symmetrization}
\end{equation}
where \((e_i)_{i=1}^n\) and \((e_j')_{j=1}^n\) are mutually independent
Rademacher sequences, independent of all node and dyad variables. Combining
\eqref{eq:edge-symmetrization}-\eqref{eq:node-symmetrization} with the
triangle inequality yields
\begin{align}
E\sup_{f\in\calF}|\mathbb U_nf_2|
\leq{}&
C E\sup_{f\in\calF}
\left|
N_n^{-1}\sum_{i<j}e_i e_j'
b_f(U_i,U_j')
\right| +
2E\sup_{f\in\calF}
\left|
N_n^{-1}\sum_{i<j}e_{ij}c_{f,ij}
\right|.
\label{eq:canonical-symmetrization}
\end{align}
We bound the two terms on the right-hand side of
\eqref{eq:canonical-symmetrization} separately.  Let \(\mathcal F_b=\{b_f:f\in\mathcal F\}\) and
\(\mathcal F_c=\{c_f:f\in\mathcal F\}\). 
Using the definitions of \(f_1\) and \(f_2\), we can write
\begin{align*}
b_f(u,v)
&=E[f_2(U_1,U_2,U_{12})\mid U_1=u,U_2=v]\\
&=E[f(Z_{12})\mid U_1=u,U_2=v]
   -E[f(Z_{12})\mid U_1=u]
   -E[f(Z_{12})\mid U_2=v]
   +Pf.
\end{align*}
Thus \(b_f\) is the sum of four conditional-expectation transforms of
\(f\). Moreover,
\begin{align*}
c_f(u,v,w)
&=f_2(u,v,w)-b_f(u,v)=f\{\tau(u,v,w)\}
  -E[f(Z_{12})\mid U_1=u,U_2=v].
\end{align*}

Suitable envelopes for \(\mathcal F_b\) and \(\mathcal F_c\) are therefore
\begin{align*}
F_b(u,v)
&=
\left\{
E[F(Z_{12})^2\mid U_1=u,U_2=v]
\right\}^{1/2}\\
&\quad+
\left\{
E[F(Z_{12})^2\mid U_1=u]
\right\}^{1/2}
+
\left\{
E[F(Z_{12})^2\mid U_2=v]
\right\}^{1/2}
+\|F\|_{P,2},
\\
F_c(u,v,w)
&=
F\{\tau(u,v,w)\}
+
\left\{
E[F(Z_{12})^2\mid U_1=u,U_2=v]
\right\}^{1/2}.
\end{align*}
Indeed, conditional Jensen's inequality shows that
\(|b_f(u,v)|\leq F_b(u,v)\) and
\(|c_f(u,v,w)|\leq F_c(u,v,w)\). Applying Jensen's inequality once
more and using the identical marginal distributions gives
\begin{align}
\|F_b\|_{P,2}
\leq 4\|F\|_{P,2},
\qquad
\|F_c\|_{P,2}
\leq 2\|F\|_{P,2}.
\label{eq:Fb-Fc-envelope}
\end{align}
Here the \(L_2(P)\) norms on the left-hand side are understood under the
distributions of \((U_1,U_2)\) and
\((U_1,U_2,U_{12})\), respectively.

It remains to verify the entropy bounds. Fix any probability measure \(Q\)
on the space of \((U_1,U_2)\), and let \(Q_1\) and \(Q_2\) denote its
marginals. Define four induced probability measures on the dyad space:
\(Q^{(12)}\) draws \((U_1,U_2)\) from \(Q\);
\(Q^{(1)}\) draws \(U_1\) from \(Q_1\) and \(U_2\) from its original
marginal distribution independently; \(Q^{(2)}\) is defined analogously;
and \(P\) is the original dyad law. Under each induced measure,
\(U_{12}\) is drawn from its original distribution and
\(Z_{12}=\tau(U_1,U_2,U_{12})\). Conditional Jensen's inequality gives
\begin{align*}
\left\|E[(f-g)(Z_{12})\mid U_1,U_2]\right\|_{L_2(Q)}
&\leq\|f-g\|_{L_2(Q^{(12)})},\\
\left\|E[(f-g)(Z_{12})\mid U_1]\right\|_{L_2(Q)}
&\leq\|f-g\|_{L_2(Q^{(1)})},\\
\left\|E[(f-g)(Z_{12})\mid U_2]\right\|_{L_2(Q)}
&\leq\|f-g\|_{L_2(Q^{(2)})},\\
|P(f-g)|&\leq\|f-g\|_{L_2(P)}.
\end{align*}
Let
\(\widetilde Q=\{Q^{(12)}+Q^{(1)}+Q^{(2)}+P\}/4\).
Each norm on the right is bounded by
\(2\|f-g\|_{L_2(\widetilde Q)}\). Because the VC entropy bound for
\(\calF\) is uniform over all probability measures, it applies to
\(\widetilde Q\). Moreover, since the four terms defining \(F_b\) are
nonnegative,
\[
\|F_b\|_{Q,2}^2
\geq
\|F\|_{Q^{(12)},2}^2+
\|F\|_{Q^{(1)},2}^2+
\|F\|_{Q^{(2)},2}^2+
\|F\|_{P,2}^2
=4\|F\|_{\widetilde Q,2}^2.
\]
Thus the covering radius inherited from \(\calF\) is bounded by a fixed
multiple of \(\epsilon\|F_b\|_{Q,2}\). Covering the four
conditional-expectation transforms
simultaneously and using the standard covering bound for finite sums
therefore gives finite constants
\(A_b,v_b\), depending only on the VC characteristics of \(\mathcal F\),
such that
\begin{align}
\sup_Q
N\left(
\epsilon\|F_b\|_{Q,2},
\mathcal F_b,L_2(Q)
\right)
\leq
\left(\frac{A_b}{\epsilon}\right)^{v_b},
\qquad 0<\epsilon\leq1.\label{eq:Fb-VC}
\end{align}
Similarly, for an arbitrary probability measure on
\((U_1,U_2,U_{12})\), use the pushforward measure generated by
\(f\{\tau(U_1,U_2,U_{12})\}\), the measure obtained from its
\((U_1,U_2)\)-marginal with an independently redrawn \(U_{12}\), and their
equal mixture. The two-term representation
\(
c_f(U_1,U_2,U_{12})
=
f(Z_{12})-E[f(Z_{12})\mid U_1,U_2]
\)
and the same contraction argument apply. Because \(F_c\) is the sum of the
corresponding two nonnegative envelope terms, its \(L_2(Q)\) norm dominates,
up to a fixed constant, the envelope norm under this mixture. Hence there
exist finite constants
\(A_c,v_c\) such that
\begin{align}
\sup_Q
N\left(
\epsilon\|F_c\|_{Q,2},
\mathcal F_c,L_2(Q)
\right)
\leq
\left(\frac{A_c}{\epsilon}\right)^{v_c},
\qquad 0<\epsilon\leq1.\label{eq:Fc-VC}
\end{align}
Therefore, \(\mathcal F_b\) and \(\mathcal F_c\) are VC type, possibly
with different VC characteristics. Since \(f_2=b_f+c_f\), the
second-order projection class is also VC type with the square-integrable
envelope
\[
F_2(u,v,w)=F_b(u,v)+F_c(u,v,w).
\]

Now, consider first the edge component in \eqref{eq:canonical-symmetrization}. Conditional on the original array,
\(
\sum_{i<j}e_{ij}c_{f,ij}
\)
is an ordinary Rademacher process indexed by \(f\in\mathcal F\). Define the
empirical probability measure $P_{n,c}=\frac{1}{N_n}\sum_{i<j}\delta_{(U_i,U_j,U_{ij})}$,
and the associated empirical \(L_2\) semimetric
\[
d_{c,n}(f,g)
=
\left\{
P_{n,c}(c_f-c_g)^2
\right\}^{1/2}
=
\left\{
\frac{1}{N_n}\sum_{i<j}
\bigl(c_{f,ij}-c_{g,ij}\bigr)^2
\right\}^{1/2}.
\]
The conditional increments are sub-Gaussian with respect to
\(\sqrt{N_n}d_{c,n}\). Hence Dudley's entropy inequality gives
\[
E_e\sup_{f\in\mathcal F}
\left|
\sum_{i<j}e_{ij}c_{f,ij}
\right|
\leq
C\sqrt{N_n}
\int_0^{\|F_c\|_{P_{n,c}}}
\sqrt{
1+\log N\!\left(
\varepsilon,\mathcal F_c,L_2(P_{n,c})
\right)
}\,d\varepsilon,
\]
where
\(
\|F_c\|_{P_{n,c}}
=
\left\{
\frac{1}{N_n}\sum_{i<j}F_{c,ij}^2
\right\}^{1/2}.
\)
Because \(\mathcal F_c\) is VC type, 
applying bound in \eqref{eq:Fc-VC} with the random finitely supported probability measure
\(Q=P_{n,c}\), followed by the change of variables
\(u=\varepsilon/\|F_c\|_{P_{n,c}}\), yields
\[
E_e\sup_{f\in\mathcal F}
\left|
\sum_{i<j}e_{ij}c_{f,ij}
\right|
\leq
C\sqrt{N_n}\,
J(1,\mathcal F_c)\,
\|F_c\|_{P_{n,c}},
\]
where
\(
J(1,\mathcal F_c)
=
\int_0^1
\sqrt{1+v_c\log(A_c/u)}\,du
<\infty.
\)
Finally, Jensen's inequality gives
\(
E\|F_c\|_{P_{n,c}}
\leq
\left\{
E P_{n,c}F_c^2
\right\}^{1/2}
=
\left\{
\frac{1}{N_n}\sum_{i<j}E[F_{c,ij}^2]
\right\}^{1/2}.
\)
Under identical marginal distributions of the dyads, the last expression is
\(\|F_c\|_{P,2}\). Together, the above arguments imply that
\begin{align}
E\sup_{f\in\mathcal F}
\left|
\sum_{i<j}e_{ij}c_{f,ij}
\right|
\leq
C\sqrt{N_n}\,
J(1,\mathcal F_c)\,
\|F_c\|_{P,2}
\leq
C\sqrt{N_n}\,
J(1,\mathcal F_c)\,
\|F\|_{P,2}.\label{eq:Fc-sup}
\end{align}

For the node component in \eqref{eq:canonical-symmetrization}, conditional on the two node samples,
\(\sum_{i<j}e_i e_j'b_f(U_i,U_j')\) is a decoupled Rademacher
chaos of order two. Independence and orthogonality of the Rademacher
variables imply that, for every \(p\geq2\), hypercontractivity gives
\[
\left\|
\sum_{i<j}e_i e_j'
\{b_f(U_i,U_j')-b_g(U_i,U_j')\}
\right\|_{L_p(e,e')}
\leq
(p-1)
\left\{
\sum_{i<j}
[b_f(U_i,U_j')-b_g(U_i,U_j')]^2
\right\}^{1/2},
\]
where
\(\|X\|_{L_p(e,e')}=(E_{e,e'}|X|^p)^{1/p}\).
Thus, conditional on the node samples, the increments are
sub-exponential with respect to
\[
d_{b,n}(f,g)
=
\left\{
\frac{1}{N_n}\sum_{i<j}
[b_f(U_i,U_j')-b_g(U_i,U_j')]^2
\right\}^{1/2}.
\]

We next give the chaining argument for the second-order Rademacher chaos. It is enough to consider a
finite subclass of \(\mathcal F\); the result for the full class follows from
pointwise measurability and monotone convergence. Adjoin the zero function
and the negatives of all functions to the subclass. This changes its covering
numbers by at most a multiplicative constant, which is absorbed by the term
\(1+\log N\).

Define $P_{n,b}=\frac{1}{N_n}\sum_{i<j}\delta_{(U_i,U_j')}$ and \(
\|F_b\|_{P_{n,b},2}
=
\left\{
\frac{1}{N_n}\sum_{i<j}F_b(U_i,U_j')^2
\right\}^{1/2}
\). For \(k\geq0\), choose a minimal
\(2^{-k}\|F_b\|_{P_{n,b},2}\)-net under \(d_{b,n}\), with the net at
\(k=0\) consisting only of the zero function, and let \(\pi_k f\) be a
nearest element of the \(k\)-th net to \(b_f\). Because the subclass is
finite, the nets eventually contain every function, and hence
\begin{align}
\sum_{i<j}e_i e_j'b_f(U_i,U_j')
=
\sum_{k\geq1}\sum_{i<j}e_i e_j'
\{\pi_kf(U_i,U_j')-\pi_{k-1}f(U_i,U_j')\}.\label{eq:chaos-chain}
\end{align}
Moreover,
\(
d_{b,n}(\pi_kf,\pi_{k-1}f)
\leq
d_{b,n}(\pi_kf,b_f)+d_{b,n}(b_f,\pi_{k-1}f)
\leq
3\cdot2^{-k}\|F_b\|_{P_{n,b},2}.
\)
We use the consequence of hypercontractivity: for $p\ge 2$, \begin{align}\left\|\sum_{i<j}e_i e_j'
\{\pi_kf(U_i,U_j')-\pi_{k-1}f(U_i,U_j')\right\|_{L_p(e,e')}\le C(p-1)\sqrt{N_n}\,2^{-k}\|F_b\|_{P_{n,b},2}.\label{eq:chaos-contract}\end{align} If
\(Y_1,\ldots,Y_m\) satisfy
\(\|Y_\ell\|_{L_p}\leq C(p-1)\sqrt{N_n}\,2^{-k}\|F_b\|_{P_{n,b},2}\), then, taking
\(p=\max\{2,\log(2m)\}\),
\begin{align}\label{eq:maximal-Y}
E\max_{\ell\leq m}|Y_\ell|
\leq
\left\{\sum_{\ell=1}^mE|Y_\ell|^p\right\}^{1/p}
\leq
C\sqrt{N_n}\,2^{-k}\|F_b\|_{P_{n,b},2}\{1+\log m\}.
\end{align}
At chaining level \(k\), there are at most the product of the cardinalities
of the \(k\)-th and \((k-1)\)-st nets possible increments. At level $k$, the logarithm of the number of increments, $\log m$, is bounded by a constant times $\log N\left(
2^{-k}\|F_b\|_{P_{n,b},2},
\mathcal F_b,L_2(P_{n,b})\right)$  Applying the
preceding maximal inequality with the hypercontractive increment bounds
\eqref{eq:chaos-chain}-\eqref{eq:maximal-Y} gives
\[
\begin{aligned}
&E_{e,e'}\left[
\sup_{f\in\mathcal F}
\left|
\sum_{i<j}e_i e_j'b_f(U_i,U_j')
\right|
\,\middle|\,(U_i),(U_j')
\right]\\
&\quad\leq
C\sqrt{N_n}\|F_b\|_{P_{n,b},2}
\sum_{k\geq1}2^{-k}
\left[
1+\log N\left(
2^{-k}\|F_b\|_{P_{n,b},2},
\mathcal F_b,L_2(P_{n,b})
\right)
\right].
\end{aligned}
\]
Since the covering number is nonincreasing in its radius, the dyadic sum is
bounded by the corresponding entropy integral:
\begingroup\footnotesize
\[
\begin{aligned}
\sum_{k\geq1}2^{-k}
\left[
1+\log N\left(
2^{-k}\|F_b\|_{P_{n,b},2},
\mathcal F_b,L_2(P_{n,b})
\right)
\right]
&\leq
C\int_0^1
\left[
1+\log N\left(
u\|F_b\|_{P_{n,b},2},
\mathcal F_b,L_2(P_{n,b})
\right)
\right]du.
\end{aligned}
\]
\endgroup
Consequently, we have
\begin{align*}
E_{e,e'}\left[
\sup_{f\in\mathcal F}
\left|
\sum_{i<j}e_i e_j'b_f(U_i,U_j')
\right|
\,\middle|\,(U_i),(U_j')
\right]
&\leq
C\sqrt{N_n}\,
J_2(1,\mathcal F_b)\,
\|F_b\|_{P_{n,b},2}.
\end{align*}
where
\[
J_2(1,\mathcal F_b)
=
\sup_Q\int_0^1
\left[
1+\log N\left(
u\|F_b\|_{Q,2},
\mathcal F_b,L_2(Q)
\right)
\right]du<\infty.
\]
Moreover, Jensen's inequality and the identical marginal distributions of
\((U_i,U_j')\) give
\[
E\|F_b\|_{P_{n,b},2}
\leq
\left\{
\frac{1}{N_n}\sum_{i<j}
E[F_b(U_i,U_j')^2]
\right\}^{1/2}
=
\|F_b\|_{P,2}.
\]
Therefore,
\begin{align}
E\sup_{f\in\mathcal F}
\left|
\sum_{i<j}e_i e_j'b_f(U_i,U_j')
\right|
\leq
C\sqrt{N_n}\,
J_2(1,\mathcal F_b)\,
\|F_b\|_{P,2} \leq
C\sqrt{N_n}\|F\|_{P,2},
\label{eq:Fb-sup}
\end{align}
where the final inequality uses
\(\|F_b\|_{P,2}\leq4\|F\|_{P,2}\), with the fixed VC characteristics
absorbed into \(C\).

Combining \eqref{eq:canonical-symmetrization}, \eqref{eq:Fc-sup},
\eqref{eq:Fb-sup}, and \(N_n\asymp n^2\) yields
\(
E\sup_{f\in\calF}|\mathbb U_nf_2|
=
O(n^{-1}).
\)

\end{proof}

\subsection{Recovery of node projections and multiplier covariance}

Let \(\bar{f}_i=(n-1)^{-1}\sum_{j\ne i}f(Z_{ij})\).  Conditional on \(U_i\), its
expectation is \(Pf+f_1(U_i)\).  Thus \(\bar{f}_i-\Pn f\) is a noisy estimate of
the node projection.  The noise is small only after averaging across the
\(n-1\) incident dyads.

\begin{lemma}[Uniform recovery of node projections]
\label{lem:node-recovery}
Under the conditions of Lemma \ref{lem:uniform-degenerate},
\begin{equation}
 \sup_{f\in\calF}\frac1n\sum_{i=1}^n
 \left[\bar{f}_i-\Pn f-f_1(U_i)+\frac1n\sum_{j=1}^nf_1(U_j)\right]^2=o_p(1).
\label{eq:node-recovery}
\end{equation}
\end{lemma}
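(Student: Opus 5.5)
Write $D_i(f)=\bar f_i-\mu_f-f_1(U_i)$. By Lemma \ref{lem:projection}, $\Pn f-\mu_f=\frac2n\sum_j f_1(U_j)+\mathbb U_nf_2$. The summand in \eqref{eq:node-recovery} therefore equals
\[
D_i(f)-\frac1n\sum_{j=1}^n f_1(U_j)-\mathbb U_nf_2 .
\]
Using $(a+b+c)^2\le 3(a^2+b^2+c^2)$, it suffices to show that three terms are $o_p(1)$ uniformly in $f$: (i) $\sup_f\frac1n\sum_i D_i(f)^2$, (ii) $\sup_f\{n^{-1}\sum_j f_1(U_j)\}^2$, and (iii) $\sup_f(\mathbb U_nf_2)^2$.

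Terms (ii) and (iii) follow from earlier results. For (ii), the class $\{f_1:f\in\calF\}$ is VC type with envelope $\{E[F^2\mid U_1]\}^{1/2}+\|F\|_{P,2}$; the same conditional-expectation contraction is used in the proof of Lemma \ref{lem:uniform-degenerate}. The standard i.i.d. maximal inequality then gives $E\sup_f|n^{-1}\sum_j f_1(U_j)|=O(n^{-1/2})$. For (iii), Lemma \ref{lem:uniform-degenerate} gives $\sup_f|\mathbb U_nf_2|=O_p(n^{-1})$.

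Term (i) is the main obstacle. Note that
\[
D_i(f)=\frac1{n-1}\sum_{j\ne i}\{f_1(U_j)+f_{2,ij}\},
\]
where $f_{2,ij}=b_f(U_i,U_j)+c_{f,ij}$ as in Lemma \ref{lem:uniform-degenerate}. Split $D_i(f)=A_i(f)+B_i(f)$ with $A_i=\frac1{n-1}\sum_{j\ne i}f_1(U_j)$; this differs from the full node mean by $O(\sup_f|f_1(U_i)|/n)$, and that difference is handled like (ii) plus an envelope term. The remaining part is $B_i=\frac1{n-1}\sum_{j\ne i}f_{2,ij}$.

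For $B_i$ I would avoid the supremum inside the sum by a crude route. First truncate: replace $F$ by $F\ind\{F\le M_n\}$, with $M_n\to\infty$ slowly. The truncated-away part contributes at most $\frac1n\sum_i\bigl(\frac1{n-1}\sum_{j}F_{2,ij}\ind\{\cdot\}\bigr)^2$, whose expectation is bounded by $C\,E[F^2\ind\{F>M_n\}]\to0$ after Jensen; this gives $o_p(1)$. For the bounded part, write $\frac1n\sum_i B_i(f)^2\le \max_i\sup_f B_i(f)^2$. For each fixed $i$, conditional on $U_i$, $B_i(f)$ is a centered average of $n-1$ terms. The $c_{f,ij}$ pieces are conditionally independent given all $U$. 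The $b_f(U_i,U_j)$ pieces are i.i.d. in $j$ given $U_i$ and have conditional mean zero. Symmetrization and the VC maximal inequality, applied conditionally on $U_i$ exactly as in the edge and node steps of Lemma \ref{lem:uniform-degenerate}, give
\[
E\Bigl[\sup_f|B_i(f)|^p\Bigm|U_i\Bigr]^{1/p}\le C\sqrt p\,M_n n^{-1/2}
\]
by a Talagrand/Bernstein-type moment bound for bounded classes. A union bound over the $n$ nodes with $p\asymp\log n$ then yields $\max_i\sup_f|B_i(f)|=O_p(M_n\sqrt{\log n/n})$. This is $o_p(1)$ when $M_n=o(\sqrt{n/\log n})$.

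If the max-over-$i$ route turns out delicate, a fallback avoids it. Bound $E\sup_f\frac1n\sum_iB_i(f)^2$ by expanding the square into a dyad-triple sum $\frac1{n(n-1)^2}\sum_i\sum_{j,k}f_{2,ij}f_{2,ik}$. Its diagonal $j=k$ is bounded by $\frac1{n-1}P_nF_2^2=O_p(n^{-1})$. Its off-diagonal part is a degenerate order-three $U$-process, because $E[f_{2,ij}\mid U_i,U_k,U_{ik}]=0$ for $j\ne k$. That part can be controlled by decoupling and Rademacher-chaos chaining as in Lemma \ref{lem:uniform-degenerate}, applied to the product class $\{f_2g_2\}$, which is VC type with envelope $F_2^2$ after truncation.
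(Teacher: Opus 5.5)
Your proof is correct and follows the paper's skeleton. You use the same three pieces: a first-order piece, the incident average $B_i(f)=(n-1)^{-1}\sum_{j\ne i}f_{2,ij}$, and $\mathbb U_nf_2$, the last handled by Lemma \ref{lem:uniform-degenerate}. You also use the same key fact: given $U_i$, the terms $f_{2,ij}$, $j\ne i$, are i.i.d.\ and centered. The differences are matters of economy.

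First, the paper's identity \eqref{eq:s-P} shows that your $A_i(f)$ cancels against the centering $-n^{-1}\sum_jf_1(U_j)$, leaving only $-(n-1)^{-1}\{f_1(U_i)-n^{-1}\sum_jf_1(U_j)\}$. The first-order part is therefore $O_p(n^{-2})$ using only the envelope, with no maximal inequality for $\{f_1\}$. Bounding the two pieces separately, as you do, is harmless.

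Second, and more substantively, the paper does not take a maximum over nodes for $B_i$. It uses $\sup_f n^{-1}\sum_iB_i(f)^2\le n^{-1}\sum_i\sup_fB_i(f)^2$ together with a second-moment conditional maximal inequality. This gives $E\sup_f n^{-1}\sum_iB_i(f)^2\le C(n-1)^{-1}PF_2^2$ under $PF^2<\infty$ alone.

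Your $\max_i\sup_f$ route needs $\log n$ moments, and hence truncation, a growing level $M_n$, and a union bound. It works (take $M_n=n^{1/4}$, say), but it yields only $o_p(1)$ rather than $O_p(n^{-1})$. It also gains nothing from the maximum, because the truncation tail is controlled only in average square. So the ``supremum inside the sum'' you set out to avoid is the harmless direction. The $p=2$ case of your own conditional bound, with $F_2(u,\cdot,\cdot)$ in place of $M_n$ and no truncation, already finishes the proof.

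Your fallback is unnecessary, and its degeneracy claim is too strong. The quantity $E[f_{2,ij}f_{2,ik}\mid U_j,U_k]=E[b_f(U_i,U_j)b_f(U_i,U_k)\mid U_j,U_k]$ need not vanish, so the off-diagonal part is only first-order degenerate.
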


\begin{proof}
Insert \(f_{ij}=Pf+f_1(U_i)+f_1(U_j)+f_{2,ij}\) into \(\bar{f}_i\) and $\mathbb P_n f$.  A direct
calculation gives
\begin{align}
\bar{f}_i-\Pn f
={}&\left(1-\frac1{n-1}\right)
\left\{f_1(U_i)-\frac1n\sum_{j=1}^nf_1(U_j)\right\}+\frac1{n-1}\sum_{j\ne i}f_{2,ij}-\mathbb U_nf_2,\label{eq:s-P}
\end{align}
Subtracting
\(
f_1(U_i)-n^{-1}\sum_{j=1}^nf_1(U_j)
\)
from both sides with \((a+b+c)^2\leq3(a^2+b^2+c^2)\) gives
\begin{align*}
\sup_{f\in\calF}\frac1n\sum_{i=1}^n
\left[
\bar{f}_i-\Pn f-f_1(U_i)
+\frac1n\sum_{j=1}^nf_1(U_j)
\right]^2&\leq
\frac{3}{(n-1)^2}
\left\{
\sup_{f\in\calF}\frac1n\sum_{i=1}^n
\left[
f_1(U_i)-\frac1n\sum_{j=1}^nf_1(U_j)
\right]^2
\right\}\\
&\qquad+
3\sup_{f\in\calF}\frac1n\sum_{i=1}^n
\left\{
\frac1{n-1}\sum_{j\ne i}f_{2,ij}
\right\}^2
+
3\sup_{f\in\calF}|\mathbb U_nf_2|^2.
\end{align*}

Consider the first term. The projection class
\(\{f_1:f\in\calF\}\) has a square-integrable envelope by Jensen's
inequality. Hence,
\(
\sup_{f\in\calF}\frac1n\sum_{i=1}^n
\left[
f_1(U_i)-\frac1n\sum_{j=1}^nf_1(U_j)
\right]^2=O_p(1),
\) and the first term is $O_p(n^{-2})$.

For the second term, since \(f_2=b_f+c_f\), the VC-type properties of
\(\calF_b\) and \(\calF_c\) established in Lemma
\ref{lem:uniform-degenerate} imply that
\(\calF_2=\{f_2:f\in\calF\}\) is VC type with a square-integrable
envelope \(F_2\). For fixed \(u\), define
\[
\calF_{2,u}
=
\{(v,w)\mapsto f_2(u,v,w):f\in\calF\},
\]
and let \(P_u\) denote the conditional law of \((U_j,U_{ij})\) given
\(U_i=u\). For every probability measure \(Q\) on the \((v,w)\)-space,
the uniform entropy bound for \(\calF_2\), applied to
\(\delta_u\otimes Q\), gives
\[
N\left(
\epsilon\|F_2(u,\cdot,\cdot)\|_{Q,2},
\calF_{2,u},L_2(Q)
\right)
\leq
(A_2/\epsilon)^{v_2},
\]
where \(A_2\) and \(v_2\) do not depend on \(u\).

Conditional on \(U_i=u\), the variables
\(\{f_2(u,U_j,U_{ij}):j\ne i\}\) are independent and centered.
Conditional symmetrization and the squared Rademacher maximal inequality,
using the preceding covering-number bound, therefore give
\[
E\left[
\sup_{f\in\calF}
\left|
\frac1{\sqrt{n-1}}
\sum_{j\ne i}f_2(u,U_j,U_{ij})
\right|^2
\middle|U_i=u
\right]
\leq
C P_uF_2(u,\cdot,\cdot)^2,
\]
where $C$ depends on the uniform VC characteristics $A_2,v_2$ and $P_u$ is the conditional law of $(U_j,U_{ij})$ given $U_i=u$. Dividing by \(n-1\) therefore yields
\[
E\left[\sup_{f\in\calF}
\left|
\frac1{n-1}\sum_{j\ne i}f_{2}(u,U_j,U_{ij})
\right|^2\vert U_i=u\right]
\leq
\frac{C}{n-1}P_uF_2(u,\cdot,\cdot)^2.
\]
 Integrating over $U_i$ therefore
gives
\(
E\sup_{f\in\calF}
\left|
\frac1{n-1}\sum_{j\ne i}f_{2,ij}
\right|^2
\leq
\frac{C}{n-1}PF_2^2,
\)
where \(F_2\) is a square-integrable envelope of
\(\{f_2:f\in\calF\}\). Consequently,
\begin{align*}
E\sup_{f\in\calF}\frac1n\sum_{i=1}^n
\left\{
\frac1{n-1}\sum_{j\ne i}f_{2,ij}
\right\}^2
\leq
\frac1n\sum_{i=1}^n
E\sup_{f\in\calF}
\left|
\frac1{n-1}\sum_{j\ne i}f_{2,ij}
\right|^2\leq \frac{C}{n-1}PF_2^2=o(1).
\end{align*}
Thus the second term is \(o_p(1)\).

Finally, Lemma \ref{lem:uniform-degenerate} implies
\[
\sup_{f\in\calF}|\mathbb U_nf_2|^2=o_p(n^{-1}),
\]
so the third term is also \(o_p(1)\). Combining these three bounds proves
\eqref{eq:node-recovery}.

\end{proof}

\begin{lemma}[Conditional covariance consistency of the ideal multiplier]
\label{lem:mult-cov}
Under Assumption \ref{ass:sampling} and the conditions of
Lemma \ref{lem:uniform-degenerate}, for \(f,g\in\calF\), define
\begin{equation}
\widehat K_n(f,g)=\frac4n\sum_{i=1}^n
\{\bar{f}_i-\Pn f\}\{\bar{g}_i-\Pn g\}.
\label{eq:Khat}
\end{equation}
Then
\begin{equation}
 \sup_{f,g\in\calF}|\widehat K_n(f,g)-4P(f_1g_1)|=o_p(1).
\label{eq:Khat-consistency}
\end{equation}
In particular, \(\widehat K_n(f,g)\) is the conditional covariance of the ideal
node-multiplier process at \((f,g)\).
\end{lemma}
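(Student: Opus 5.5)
The plan is to write \(\widehat K_n(f,g)\) in terms of the centered oracle projections \(\tilde f_{1,i}=f_1(U_i)-n^{-1}\sum_{j}f_1(U_j)\) plus the recovery errors \(\Delta_{f,i}=\bar f_i-\Pn f-\tilde f_{1,i}\) controlled by Lemma \ref{lem:node-recovery}. Expanding the product gives
\[
\widehat K_n(f,g)=\frac4n\sum_{i=1}^n\tilde f_{1,i}\tilde g_{1,i}
+\frac4n\sum_{i=1}^n\bigl(\tilde f_{1,i}\Delta_{g,i}+\Delta_{f,i}\tilde g_{1,i}\bigr)
+\frac4n\sum_{i=1}^n\Delta_{f,i}\Delta_{g,i}.
\]
I will show that the first term converges uniformly to \(4P(f_1g_1)\) and that the other two terms vanish uniformly. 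The identification of \(\widehat K_n\) as the conditional covariance is immediate. Conditionally on the data, the only randomness in \(\mathbb G_n^*f\) is the multipliers \(\xi_i\), which are i.i.d. with mean zero and unit variance, so \(E^*[\mathbb G_n^*f\,\mathbb G_n^*g]=(4/n)\sum_i(\bar f_i-\Pn f)(\bar g_i-\Pn g)\).

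For the cross and quadratic terms I apply Cauchy--Schwarz uniformly over \(f,g\):
\[
\sup_{f,g}\Bigl|\frac1n\sum_i\tilde f_{1,i}\Delta_{g,i}\Bigr|\le\Bigl(\sup_f\frac1n\sum_i\tilde f_{1,i}^2\Bigr)^{1/2}\Bigl(\sup_g\frac1n\sum_i\Delta_{g,i}^2\Bigr)^{1/2}.
\]
The second factor is \(o_p(1)\) by Lemma \ref{lem:node-recovery}. The first factor is \(O_p(1)\): the envelope \(F_1(u)=\{E[F(Z_{12})^2\mid U_1=u]\}^{1/2}\) is square integrable by conditional Jensen, so \(\sup_f n^{-1}\sum_i\tilde f_{1,i}^2\le n^{-1}\sum_iF_1(U_i)^2=O_p(1)\) by the law of large numbers. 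The quadratic term is bounded by the product of two \(o_p(1)\) factors.

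The first term equals
\[
\frac4n\sum_i f_1(U_i)g_1(U_i)-4\Bigl(\frac1n\sum_if_1(U_i)\Bigr)\Bigl(\frac1n\sum_ig_1(U_i)\Bigr).
\]
It therefore suffices to prove two Glivenko--Cantelli statements for i.i.d. node variables: for the class \(\calF_1=\{f_1:f\in\calF\}\), and for the product class \(\calF_1\cdot\calF_1=\{f_1g_1\}\). The class \(\calF_1\) is VC type with envelope \(F_1\). This follows by the same conditional-Jensen contraction used for \(\calF_b\) in the proof of Lemma \ref{lem:uniform-degenerate}, with a single induced measure \(Q^{(1)}\). Products of two uniformly bounded-entropy classes are again VC type, with envelope \(F_1^2\). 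Because \(PF_1^2\le PF^2<\infty\), the envelope of the product class is integrable. The uniform-entropy Glivenko--Cantelli theorem (e.g.\ van der Vaart--Wellner, Thm.~2.4.3, with truncation of the envelope) then gives
\[
\sup_{f,g}\Bigl|n^{-1}\sum_if_1g_1(U_i)-P(f_1g_1)\Bigr|=o_p(1)
\quad\text{and}\quad
\sup_f\Bigl|n^{-1}\sum_if_1(U_i)\Bigr|=o_p(1),
\]
the latter because \(Pf_1=0\). Pointwise measurability of \(\calF\) carries over to these classes, so the suprema are measurable or can be handled as outer probabilities.

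The main obstacle is the product class \(\calF_1\cdot\calF_1\). Its envelope has only a first moment, so the Glivenko--Cantelli step cannot use \(L_2\) entropy directly. I would first truncate at \(F_1\le M\), bounding the remainder by \(P F_1^2\ind\{F_1>M\}\). On the truncated part, I would use the covering bound \(N(2\epsilon\|F_1\|_{Q,2}^2,\calF_1\calF_1,L_1(Q))\le N(\epsilon\|F_1\|_{Q,2},\calF_1,L_2(Q))^2\), which follows from Cauchy--Schwarz. Together these give the uniform law of large numbers.
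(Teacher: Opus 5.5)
Your proposal is correct and follows the paper's proof essentially step for step. Both arguments split $\widehat K_n(f,g)$ into centered oracle projections plus recovery errors, control the cross and quadratic terms by Cauchy--Schwarz with Lemma~\ref{lem:node-recovery}, and apply a uniform law of large numbers to $\calF_1$ and to the product class; your truncation and $L_1$ covering argument only spells out what the paper invokes as an ``ordinary'' uniform law of large numbers.

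One small fix: since $f_1=E[f(Z_{12})\mid U_1]-Pf$, an envelope for $\calF_1$ is $F_1+\|F\|_{P,2}$, not $F_1$ alone. Your bound on $n^{-1}\sum_i\tilde f_{1,i}^2$ is unaffected, because the $Pf$ cancels after centering, and the product-class envelope $(F_1+\|F\|_{P,2})^2$ remains integrable.
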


\begin{proof}
By conditional independence, \(E^*\xi_i=0\) and
\(E^*(\xi_i\xi_j)=1\{i=j\}\). Therefore,
\begin{align*}
E^*[\mathbb G_n^*f\,\mathbb G_n^*g]
&=
\frac4n\sum_{i=1}^n
\{\bar{f}_i-\Pn f\}\{\bar{g}_i-\Pn g\}=\widehat K_n(f,g).
\end{align*}
It remains to show that, uniformly over \(f,g\in\calF\),
\begin{align}
\frac1n\sum_{i=1}^n
\{\bar{f}_i-\Pn f\}\{\bar{g}_i-\Pn g\}
=
P(f_1g_1)+o_p(1).
\label{eq:Khat-step}
\end{align}

First, Lemma \ref{lem:node-recovery} gives
\begin{align}
\sup_{f\in\calF}\frac1n\sum_{i=1}^n
\left[
\bar{f}_i-\Pn f-f_1(U_i)
+\frac1n\sum_{j=1}^nf_1(U_j)
\right]^2=o_p(1).
\label{eq:node-recovery-use}
\end{align}
Moreover,
\begin{align*}
&\sup_{f\in\calF}\frac1n\sum_{i=1}^n
\left[
f_1(U_i)-\frac1n\sum_{j=1}^nf_1(U_j)
\right]^2\leq
\sup_{f\in\calF}\frac1n\sum_{i=1}^nf_1(U_i)^2\leq
\frac1n\sum_{i=1}^n
\left\{
E[F(Z_{12})\mid U_i]+PF
\right\}^2
=O_p(1),
\end{align*}
where the last equality follows from the law of large numbers and Jensen's
inequality
\(
E\big[\big\{E[F(Z_{12})\mid U_1]+PF\big\}^2\big]
\leq 4PF^2<\infty.
\)

Now expand
\begin{align*}
&\{\bar{f}_i-\Pn f\}\{\bar{g}_i-\Pn g\}-
\left\{
f_1(U_i)-\frac1n\sum_{j=1}^nf_1(U_j)
\right\}
\left\{
g_1(U_i)-\frac1n\sum_{j=1}^ng_1(U_j)
\right\}\\
&=
\left[
\bar{f}_i-\Pn f-f_1(U_i)
+\frac1n\sum_{j=1}^nf_1(U_j)
\right]
\left\{
g_1(U_i)-\frac1n\sum_{j=1}^ng_1(U_j)
\right\}\\
&\quad+
\left\{
f_1(U_i)-\frac1n\sum_{j=1}^nf_1(U_j)
\right\}
\left[
\bar{g}_i-\Pn g-g_1(U_i)
+\frac1n\sum_{j=1}^ng_1(U_j)
\right]\\
&\quad+
\left[
\bar{f}_i-\Pn f-f_1(U_i)
+\frac1n\sum_{j=1}^nf_1(U_j)
\right]\times
\left[
\bar{g}_i-\Pn g-g_1(U_i)
+\frac1n\sum_{j=1}^ng_1(U_j)
\right].
\end{align*}
Applying Cauchy-Schwarz to each of these three terms, using
\eqref{eq:node-recovery-use} and the preceding \(O_p(1)\) bound, yields
\begin{align}
&\sup_{f,g\in\calF}
\left|
\frac1n\sum_{i=1}^n
\{\bar{f}_i-\Pn f\}\{\bar{g}_i-\Pn g\}\right.\notag\\
&\left.\qquad-
\frac1n\sum_{i=1}^n
\left\{
f_1(U_i)-\frac1n\sum_{j=1}^nf_1(U_j)
\right\}
\left\{
g_1(U_i)-\frac1n\sum_{j=1}^ng_1(U_j)
\right\}
\right|
=o_p(1).
\label{eq:replacement-product}
\end{align}

It remains to control the centered projection products. Direct expansion gives
\begin{align}
&\frac1n\sum_{i=1}^n
\left\{
f_1(U_i)-\frac1n\sum_{j=1}^nf_1(U_j)
\right\}
\left\{
g_1(U_i)-\frac1n\sum_{j=1}^ng_1(U_j)
\right\}\notag\\
&\qquad=
\frac1n\sum_{i=1}^nf_1(U_i)g_1(U_i)
-
\left\{\frac1n\sum_{i=1}^nf_1(U_i)\right\}
\left\{\frac1n\sum_{i=1}^ng_1(U_i)\right\}.
\label{eq:centered-projection-product}
\end{align}

For \(f,g\in\calF\), conditional Jensen's inequality gives
\begin{align*}
\left\{E\left[f_1(U_1)-g_1(U_1)\right]^2\right\}^{1/2}
&\leq
\left\{
E\left[
E\{(f-g)(Z_{12})\mid U_1\}
\right]^2
\right\}^{1/2}
+\left|E[(f-g)(Z_{12})]\right|
\leq2\|f-g\|_{L_2(P)}.
\end{align*}
Thus the projection class inherits the uniform VC covering-number bound of
\(\calF\), up to an irrelevant constant in the covering radius. Its envelope
is square integrable. The product class
\(\{f_1g_1:f,g\in\calF\}\) consequently has an integrable envelope and is
Glivenko-Cantelli by an ordinary uniform law of large numbers. Hence
\begin{align}
\sup_{f,g\in\calF}
\left|
\frac1n\sum_{i=1}^nf_1(U_i)g_1(U_i)-P(f_1g_1)
\right|
=o_p(1).
\label{eq:projection-product-ulln}
\end{align}
Similarly, since \(Pf_1=0\) for every \(f\in\calF\), the uniform law of large
numbers for the projection class gives
\(
\sup_{f\in\calF}
\left|
\frac1n\sum_{i=1}^nf_1(U_i)
\right|
=o_p(1).
\)
It follows that
\begin{align}
\sup_{f,g\in\calF}
\left|
\left\{\frac1n\sum_{i=1}^nf_1(U_i)\right\}
\left\{\frac1n\sum_{i=1}^ng_1(U_i)\right\}
\right|\leq
\left\{
\sup_{f\in\calF}
\left|
\frac1n\sum_{i=1}^nf_1(U_i)
\right|
\right\}^2
=o_p(1).\label{eq:sup-fg}
\end{align}
Combining \eqref{eq:replacement-product}-\eqref{eq:sup-fg}
proves \eqref{eq:Khat-step}. 
\end{proof}

\subsection{Finite-dimensional conditional convergence}

\begin{lemma}[Conditional Lindeberg argument]
\label{lem:conditional-fidi}
Suppose Assumption \ref{ass:sampling} and the conditions of Lemma
\ref{lem:uniform-degenerate} hold. Let the multipliers satisfy the
conditions in Lemma \ref{lem:ideal-multiplier}. For fixed \(f_1,\ldots,f_m\in\calF\) with $m<\infty$, conditionally on the data,
\((\mathbb G_n^*f_1,\ldots,\mathbb G_n^*f_m)'\) converges in probability to a
centered normal vector with covariance
\(K_{ab}=4P\{(f_a)_1(f_b)_1\}\) for $a,b=1,\ldots,m$.
\end{lemma}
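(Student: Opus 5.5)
The plan is to reduce to a scalar statement with the Cramér--Wold device and then apply a conditional Lindeberg--Feller central limit theorem for weighted sums of independent multipliers. Fix \(\lambda\in\R^m\) and set
\[
w_{n,i}=\frac{2}{\sqrt n}\sum_{a=1}^m\lambda_a\{\bar{(f_a)}_i-\Pn f_a\},
\qquad
S_n^*=\sum_{a=1}^m\lambda_a\mathbb G_n^*f_a=\sum_{i=1}^n\xi_iw_{n,i}.
\]
Conditionally on the data, \(S_n^*\) is a sum of independent, centered terms. Its conditional variance is
\[
\sum_{i=1}^n w_{n,i}^2=\sum_{a,b}\lambda_a\lambda_b\widehat K_n(f_a,f_b).
\]
By Lemma \ref{lem:mult-cov}, this converges in probability to \(\lambda'K\lambda\). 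If \(\lambda'K\lambda=0\), conditional Chebyshev gives \(S_n^*\to0\) in conditional probability, which is the degenerate normal limit. So I would assume \(\lambda'K\lambda>0\).

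The main step is the conditional Lindeberg condition: for every \(\eta>0\),
\[
L_n(\eta)=\sum_{i=1}^n w_{n,i}^2\,E\!\left[\xi_1^2\ind\{|\xi_1||w_{n,i}|>\eta\}\right]\to_p0.
\]
The multipliers are sub-Gaussian with \(E\xi_1^2=1\), so the function \(t\mapsto E[\xi_1^2\ind\{|\xi_1|>\eta/t\}]\) goes to zero as \(t\downarrow0\). Hence it suffices to show \(\max_{i\le n}|w_{n,i}|\to_p0\), because \(\sum_i w_{n,i}^2=O_p(1)\).

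To prove that, I would write \(\bar{f}_i-\Pn f=f_1(U_i)-n^{-1}\sum_jf_1(U_j)+D_i(f)\), where \(D_i\) is the recovery error bounded in Lemma \ref{lem:node-recovery}. The \(f_1\) part is handled by an i.i.d. argument. Since \(Ef_1(U_1)^2<\infty\), we have \(n^{-1/2}\max_i|f_1(U_i)|\to_p0\), via the union bound \(P(\max_i|f_1(U_i)|>\epsilon\sqrt n)\le nP(f_1^2>\epsilon^2n)\to0\). The centering term is \(O_p(n^{-1/2})\).

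The remainder is where I expect the real difficulty. The bound \(\max_i D_i^2\le\sum_iD_i^2=n\cdot o_p(1)\) from Lemma \ref{lem:node-recovery} is too crude. I would avoid this by not asking for a sup over \(i\) at all. Instead I would split the Lindeberg sum as
\[
w_{n,i}^2\le2\{w_{n,i}^{(1)}\}^2+2\{w_{n,i}^{(2)}\}^2,
\qquad
w^{(2)}_{n,i}=\frac{2}{\sqrt n}\sum_a\lambda_aD_i(f_a).
\]
Then \(\sum_i\{w^{(2)}_{n,i}\}^2=\frac4n\sum_i\{\sum_a\lambda_aD_i(f_a)\}^2\to_p0\) by Lemma \ref{lem:node-recovery}, since \(m\) is finite. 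Bounding the Lindeberg indicator \(\ind\{|\xi_1||w_{n,i}|>\eta\}\) by \(\ind\{|\xi_1||w^{(1)}_{n,i}|>\eta/2\}+\ind\{|\xi_1||w^{(2)}_{n,i}|>\eta/2\}\), and using \(E\xi_1^2=1\), gives
\[
L_n(\eta)\le C\sum_i\{w^{(1)}_{n,i}\}^2E\!\left[\xi_1^2\ind\{|\xi_1|\max_i|w^{(1)}_{n,i}|>\eta/2\}\right]+C\sum_i\{w^{(2)}_{n,i}\}^2+C\,\text{cross terms}.
\]
The cross terms mix \(w^{(1)}\)-weights with \(w^{(2)}\)-indicators, and I would control them by Cauchy--Schwarz together with \(\sum_i\{w^{(2)}_{n,i}\}^2\to_p0\). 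With all three pieces tending to zero, \(L_n(\eta)\to_p0\).

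To finish, I would apply Lindeberg--Feller along subsequences. Every subsequence has a further subsequence on which the variance and Lindeberg convergences hold almost surely. Along it, \(S_n^*\) converges conditionally in distribution to \(N(0,\lambda'K\lambda)\), which yields convergence in probability of the conditional law, for instance in the bounded-Lipschitz metric over \(BL_1\). Cramér--Wold then gives the joint conditional limit \(N(0,K)\).
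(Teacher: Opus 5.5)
Your proposal is correct and follows essentially the paper's argument. Both use Cram\'er--Wold, take the conditional variance from Lemma~\ref{lem:mult-cov}, and reduce the Lindeberg condition to $\max_i|w_{n,i}|\to_p0$ by splitting $\bar f_i-\Pn f$ into the centered projection $f_1(U_i)-n^{-1}\sum_jf_1(U_j)$ and the recovery error controlled by Lemma~\ref{lem:node-recovery}.

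The one detour is unnecessary. The bound $\max_iD_i^2\le\sum_iD_i^2$ is not too crude once you account for the factor $2/\sqrt n$ in the weights: $\max_i|w^{(2)}_{n,i}|\le(\sum_i\{w^{(2)}_{n,i}\}^2)^{1/2}\to_p0$, which is exactly the paper's step. So $\max_i|w_{n,i}|\to_p0$ follows directly, and you can drop the split of the Lindeberg sum and its cross terms.
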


\begin{proof}
Fix \(t\in\R^m\) and put
\(c_{i,n}
=
\frac{2}{\sqrt n}
\sum_{a=1}^m
t_a\{\bar f_{a,i}-\Pn f_a\}\).  Conditional on the data,
\(\sum_i c_{i,n}\xi_i\) is a sum of independent centered variables.  Lemma
\ref{lem:mult-cov} gives
\begin{align*}
\sum_i c_{i,n}^2
&=t\widehat K_nt\to_p t'Kt,
\end{align*}
where \(\widehat K_{n,ab}=\widehat K_{n}(f_a,f_b)\) with \(\widehat K_{n}(f,g)\) given in \eqref{eq:Khat}.  For the conditional Lindeberg condition, for each \(\epsilon>0\),
\begin{align}
\sum_iE^*[c_{i,n}^2\xi_i^21(|c_{i,n}\xi_i|>\epsilon)]
&\le \left(\max_i|c_{i,n}|\right)^\delta
\epsilon^{-\delta}E|\xi|^{2+\delta}\sum_i c_{i,n}^2.\label{eq:lindeberg}
\end{align}
For fixed \(f_1,\ldots,f_m\), write
\[
\bar f_{a,i}-\mathbb P_n f_a
=f_{a,1}(U_i)-\frac{1}{n}\sum_{j=1}^nf_{a,1}(U_j)+e_{i,n}(f_a).
\]
Lemma \ref{lem:node-recovery} implies
\[
\frac{\max_i|e_{i,n}(f_a)|}{\sqrt n}
\leq
\left\{\frac{1}{n}\sum_{i=1}^ne_{i,n}(f_a)^2\right\}^{1/2}
=o_p(1).
\]
Because the projection envelope is square integrable,
\(\max_{i\leq n}|f_{a,1}(U_i)|/\sqrt n=o_p(1)\).
These two results imply that \(\max_i|c_{i,n}|=o_p(1)\). Hence the
right-hand side of \eqref{eq:lindeberg} is \(o_p(1)\).
The conditional Lindeberg-Feller theorem gives convergence of the scalar
linear combination.  Cram\'er-Wold establishes the vector result.  Combining
finite-dimensional convergence with the conditional maximal inequality gives
the functional multiplier theorem used in the main text.
\end{proof}

\subsection{Outer conditional probability}

Because a supremum over an uncountable class need not be measurable, the
bootstrap statements are understood in the standard outer sense.  We make the
meaning explicit.  For a possibly nonmeasurable random variable \(W\), let
\(P^*(W>c)\) denote the infimum of conditional probabilities of measurable
majorants of \(1(W>c)\).  Then \(W_n=o_{p^*}(1)\) in probability means that for
every \(\epsilon,\eta>0\),
\begin{equation}
 P\{P^*(|W_n|>\epsilon)>\eta\}\to0.
\label{eq:outer-definition}
\end{equation}

\begin{lemma}[Conditional Slutsky in bounded-Lipschitz distance]
\label{lem:conditional-slutsky}
Let \(Z_n^*\) and \(\widetilde Z_n^*\) be maps into
\(\ell^\infty(\calX)\).  If
\(\|Z_n^*-\widetilde Z_n^*\|_\infty=o_{p^*}(1)\) in probability and
\(\widetilde Z_n^*\) converges conditionally in bounded-Lipschitz distance to a
tight Borel random element \(Z\), then the same is true of \(Z_n^*\).
\end{lemma}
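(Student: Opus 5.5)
We prove the statement by working directly with the bounded-Lipschitz characterization of conditional weak convergence. Saying that \(\widetilde Z_n^*\) converges conditionally to \(Z\) in bounded-Lipschitz distance means
\[
\sup_{h\in BL_1}\bigl|E^*h(\widetilde Z_n^*)-Eh(Z)\bigr|\to0
\]
in outer probability. The target is the same display with \(Z_n^*\) in place of \(\widetilde Z_n^*\). By the triangle inequality it suffices to show that
\[
\sup_{h\in BL_1}\bigl|E^*h(Z_n^*)-E^*h(\widetilde Z_n^*)\bigr|=o_p(1),
\]
with \(E^*\) read as outer conditional expectation (infimum over measurable majorants) where needed.

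\textbf{Step 1.} Fix \(\epsilon>0\). Every \(h\in BL_1\) is bounded by one and is \(1\)-Lipschitz with respect to \(\|\cdot\|_\infty\). Hence, pointwise,
\[
|h(Z_n^*)-h(\widetilde Z_n^*)|
\le \min\{2,\|Z_n^*-\widetilde Z_n^*\|_\infty\}
\le \epsilon+2\,\ind\{\|Z_n^*-\widetilde Z_n^*\|_\infty>\epsilon\}.
\]
The right-hand side does not depend on \(h\). Taking outer conditional expectations, and using subadditivity of outer expectation, gives
\[
\sup_{h\in BL_1}\bigl|E^*h(Z_n^*)-E^*h(\widetilde Z_n^*)\bigr|
\le \epsilon+2P^*\bigl(\|Z_n^*-\widetilde Z_n^*\|_\infty>\epsilon\bigr).
\]

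\textbf{Step 2.} By hypothesis and definition \eqref{eq:outer-definition}, for every \(\eta>0\),
\[
P\bigl\{P^*(\|Z_n^*-\widetilde Z_n^*\|_\infty>\epsilon)>\eta\bigr\}\to0.
\]
Hence the bound from Step 1 is at most \(\epsilon+2\eta\) with probability tending to one. Since \(\epsilon\) and \(\eta\) are arbitrary, the supremum is \(o_p(1)\). Combining this with the assumed bounded-Lipschitz convergence of \(\widetilde Z_n^*\) yields the same convergence for \(Z_n^*\).

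\textbf{Asymptotic measurability.} If the convergence notion also includes asymptotic measurability of \(Z_n^*\), we transfer it from \(\widetilde Z_n^*\) using the same inequality. For bounded Lipschitz \(h\), the gap between the outer and inner expectations of \(h(Z_n^*)\) is bounded by the corresponding gap for \(\widetilde Z_n^*\) plus \(2\epsilon+4P^*(\|Z_n^*-\widetilde Z_n^*\|_\infty>\epsilon)\). This follows by applying the bound to measurable majorants and minorants.

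\textbf{Main obstacle.} The only delicate point is measurability. Outer expectations are only subadditive, so we must check that the majorant construction goes through. This works because \(\epsilon+2\,\ind\{\cdot\}\) composed with the minimal measurable majorant of \(\|Z_n^*-\widetilde Z_n^*\|_\infty\) is itself a measurable majorant of the difference. Tightness and Borel measurability of \(Z\) ensure that \(Eh(Z)\) is well defined; no further properties of \(Z\) are used.
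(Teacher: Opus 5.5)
Your proposal is correct and follows the paper's proof essentially step for step. Both arguments use the $h$-free bound $|E^*h(Z_n^*)-E^*h(\widetilde Z_n^*)|\le\epsilon+2P^*(\|Z_n^*-\widetilde Z_n^*\|_\infty>\epsilon)$, take the supremum over $BL_1$, invoke the outer-probability hypothesis, and add the bounded-Lipschitz distance from $\widetilde Z_n^*$ to $Z$; your explicit transfer of asymptotic measurability via measurable majorants and minorants spells out what the paper only asserts in its final sentence.
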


\begin{proof}
For any \(h\in BL_1\) and \(\epsilon>0\), the Lipschitz property gives
\begin{align*}
|E^*h(Z_n^*)-E^*h(\widetilde Z_n^*)|
\le \epsilon+2P^*(\|Z_n^*-\widetilde Z_n^*\|_\infty>\epsilon).
\end{align*}
Take the supremum over \(h\in BL_1\).  The conditional probability on the
right converges to zero in probability by \eqref{eq:outer-definition}.  Add
the bounded-Lipschitz distance between \(\widetilde Z_n^*\) and \(Z\), first
let \(n\to\infty\), and then let \(\epsilon\downarrow0\).  The same argument
with measurable majorants proves the assertion when either process is not
Borel measurable as an \(\ell^\infty(\calX)\)-valued map.
\end{proof}

\subsection{Uniform control of bootstrap plug-in effects}

The ideal multiplier process is indexed by population marks. Feasibility requires uniform control of the difference created by estimating the regression coefficient and the orthogonalizing instrument. The following deterministic enlargement embeds the random estimated class in a fixed VC-type class; the contraction and maximal inequalities then transfer dyad-level control to the node-multiplier process.

We next introduce a deterministic class used to justify replacement of the
population mark by the estimated mark. For a constant \(C<\infty\), define
\begin{align}
h_{x,a,b}(y,z)
&=
(y-z'b)\{1(z\preceq x)-a'z\},
\label{eq:enlarged-mark}\\
d_{x,a,b}(y,z)
&=
h_{x,a,b}(y,z)-r_x(y,z),
\label{eq:enlarged-difference}
\end{align}
and let
\begin{align}
\calH_C
&=
\left\{
h_{x,a,b}:
x\in\calX,\ \|a\|\leq C,\ \|b\|\leq C
\right\},
\label{eq:enlarged-mark-class}\\
\calD_C
&=
\left\{
d_{x,a,b}:
x\in\calX,\ \|a\|\leq C,\ \|b\|\leq C
\right\}.
\label{eq:enlarged-difference-class}
\end{align}
Although \(M_n(\cdot)\) is a function, for each \(x\) the estimated mark
depends on \(M_n(x)\) and \(Q_n\) only through the finite-dimensional vector
\(Q_n^{-1}M_n(x)\). The class \(\calD_C\) therefore provides a deterministic
finite-dimensional enlargement containing all estimated differences with
probability approaching one.

\begin{lemma}[Deterministic nuisance enlargement and entropy]
\label{lem:enlarged-class}
Under Assumption \ref{ass:sampling}-\ref{ass:moments}, there exist a fixed constant
\(C<\infty\), an envelope \(D(y,z)=C(1+|y|)\), and finite constants
\(A_0,v_0\), not depending on the probability measure \(L\), such that
\begin{equation}
\sup_L
N\left\{
\epsilon\|D\|_{L,2},
\calD_C,
L_2(L)
\right\}
\leq
(A_0/\epsilon)^{v_0},
\qquad 0<\epsilon\leq1.
\label{eq:enlarged-entropy}
\end{equation}
Moreover, with probability tending to one,
\begin{equation}
\widehat r_{ij}(x)-r_x(Z_{ij})
=
d_{x,A_n(x),\widehat\beta}(Z_{ij})
\qquad\text{for every }x\in\calX,
\label{eq:estimated-difference-inclusion}
\end{equation}
where
\(
A_n(x)=Q_n^{-1}M_n(x)\), 
\(A(x)=Q^{-1}M(x),
\)
and \(d_{x,A_n(x),\widehat\beta}\in\calD_C\) uniformly over \(x\). In addition,
\begin{equation}
\Pn\sup_{x\in\calX}
\left|
d_{x,A_n(x),\widehat\beta}
\right|^2
=o_p(1).
\label{eq:enlarged-l2}
\end{equation}
The class \(\calD_C\) may be chosen pointwise measurable.
\end{lemma}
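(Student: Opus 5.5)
I would prove the three claims of Lemma \ref{lem:enlarged-class} in the order entropy, inclusion, $L_2$ smallness, since the inclusion and the last display only use the entropy bound through the choice of $C$.

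\textbf{Entropy.} Expand
\[
h_{x,a,b}(y,z)=y\,1(z\preceq x)-y\,a'z-(z'b)1(z\preceq x)+(z'b)(a'z).
\]
Each piece is a product of a VC-type class with a fixed function or with a finite-dimensional parametric class:
\begin{itemize}[leftmargin=*]
\item $\{1(z\preceq x):x\in\calX\}$ is VC, since lower orthants form a VC class of sets.
\item $\{a'z:\|a\|\le C\}$ and $\{z'b:\|b\|\le C\}$ lie in vector spaces of dimension $d$.
\item $(z'b)(a'z)$ is Lipschitz in $(a,b)$ on a compact set, with Lipschitz constant bounded by $C_X^2$ because $\|z\|\le C_X$.
\end{itemize}
Using $\|z\|\le C_X$ from Assumption \ref{ass:moments}, every $h_{x,a,b}$ is bounded by $C(1+|y|)$ for a suitable $C$, and so is $r_x$. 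The standard permanence results give uniform entropy bounds relative to the envelope $D=C(1+|y|)$:
\begin{itemize}[leftmargin=*]
\item products of VC-type classes with bounded envelopes, and multiplication by the fixed function $y$;
\item sums of finitely many VC-type classes;
\item Lipschitz-in-parameter classes over a compact parameter set, giving polynomial covering numbers.
\end{itemize}
Subtracting the fixed class $\calR=\{r_x\}$, which is VC type by Lemma \ref{lem:entropy}, preserves a polynomial bound $(A_0/\epsilon)^{v_0}$ uniformly in $L$. The envelope $D$ is square integrable under the moment condition on $Y$. For pointwise measurability, restrict $(x,a,b)$ to rational points. Right-continuity of $x\mapsto 1(z\preceq x)$, together with continuity in $(a,b)$, gives pointwise approximating sequences.

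\textbf{Inclusion.} The identity \eqref{eq:estimated-difference-inclusion} is algebraic: $\widehat r_{ij}(x)=h_{x,A_n(x),\widehat\beta}(Z_{ij})$ exactly. It remains to show that $\sup_x\|A_n(x)\|\le C$ and $\|\widehat\beta\|\le C$ with probability tending to one.
\begin{itemize}[leftmargin=*]
\item Take $C>2\max\{\sup_x\|A(x)\|,\|\beta_*\|\}+1$. This is finite because $\|M(x)\|\le C_X$ and $\lambda_{\min}(Q)$ is bounded away from zero.
\item $\Pn(XX')\to_p Q$ and $\Pn(XY)\to_p E[XY]$ by Lemma \ref{lem:ddg}, applied to finitely many fixed functions. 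Hence $Q_n$ is invertible with probability tending to one and $\widehat\beta\to_p\beta_*$.
\item $\sup_x\|M_n(x)-M(x)\|\to_p0$ by Lemma \ref{lem:ddg}, applied to the VC class $\{z_k1(z\preceq x)\}$.
\end{itemize}
Together these give $\sup_x\|A_n(x)-A(x)\|\to_p0$, which yields both bounds.

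\textbf{$L_2$ smallness.} Write the difference explicitly:
\[
d_{x,A_n(x),\widehat\beta}(y,z)=-z'(\widehat\beta-\beta_*)\,q_{n,x}(z)-(y-z'\beta_*)\,z'(A_n(x)-A(x)).
\]
This gives the pointwise bound
\[
\sup_x|d|\le C\|\widehat\beta-\beta_*\|+C(1+|y|)\sup_x\|A_n(x)-A(x)\|.
\]
Squaring and applying $\Pn$ gives
\[
\Pn\sup_x|d|^2\le C\bigl\{\|\widehat\beta-\beta_*\|^2+\Pn(1+|Y|)^2\,\sup_x\|A_n(x)-A(x)\|^2\bigr\}.
\]
Here $\Pn(1+|Y|)^2=O_p(1)$ by Markov's inequality and identical marginals. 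Both parameter errors are $o_p(1)$ by the inclusion step.

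The main obstacle is the entropy bound holding uniformly in $L$ with the single envelope $D$, specifically the product terms. I would first handle these by bounding each factor's covering number under the rescaled measure $dL'=(1+|y|)^2dL/\|1+|y|\|_{L,2}^2$, the standard trick for multiplying by an unbounded fixed function. I would then combine the factors with the product lemma for uniformly bounded VC classes.
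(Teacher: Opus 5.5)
Your proposal is correct and follows essentially the same route as the paper. It uses the algebraic identity $\widehat r_{ij}(x)=h_{x,A_n(x),\widehat\beta}(Z_{ij})$, a fixed $C$ chosen from consistency of $\widehat\beta$ and $A_n$, VC-type closure under sums and products followed by subtraction of $\calR$, and the explicit expansion of $d_{x,A_n(x),\widehat\beta}$ to get $\Pn\sup_x|d|^2=o_p(1)$. The differences from the paper are harmless: you bound the residual factor by $C(1+|y|)$ where the paper uses $|\varepsilon|$, and you cite Lemma~\ref{lem:ddg} where the paper cites Lemma~\ref{lem:prelim}.

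One small detail in the measurability step needs adding, as the paper does. Include the interval endpoints in the countable index set so that points on the upper boundary of $\calX$ can be approached, and use dominated convergence to get $M(x_k)\to M(x)$ inside $r_x$.
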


\begin{proof}[Proof of Lemma \ref{lem:enlarged-class}]
Notice that we have $$\widehat r_{ij}(x)=h_{x,Q_n^{-1}M_n(x),\widehat \beta}(Z_{ij})\quad\text{and}\quad r_x(Z_{ij})=h_{x,Q^{-1}M(x),\beta_*}(Z_{ij}).$$ Because \(\calX\) is compact and \(\|X\|\leq C_X\) almost surely,
\[
\sup_{x\in\calX}\|M(x)\|
=
\sup_{x\in\calX}
\left\|
E[X1\{X\preceq x\}]
\right\|
\leq
E\|X\|
\leq C_X.
\]
Since \(Q^{-1}\) is fixed,
\(
\sup_{x\in\calX}\|A(x)\|
=
\sup_{x\in\calX}\|Q^{-1}M(x)\|
<\infty.
\)
Moreover, Lemma \ref{lem:prelim} gives
\[
\|\widehat\beta-\beta_*\|=o_p(1),\qquad
\|Q_n^{-1}-Q^{-1}\|=o_p(1),\qquad
\sup_{x\in\calX}\|M_n(x)-M(x)\|=o_p(1).
\]
It follows that
\(
\|\widehat\beta\|=O_p(1)\) and \(
\sup_{x\in\calX}\|A_n(x)\|=O_p(1).
\)
We may therefore choose a fixed \(C<\infty\), sufficiently large, such that
\[
P\left\{
\|\widehat\beta\|\leq C,\
\sup_{x\in\calX}\|A_n(x)\|\leq C
\right\}
\longrightarrow1,
\]
while also ensuring that
\(
\|\beta_*\|\leq C\) and \(
\sup_{x\in\calX}\|A(x)\|\leq C.
\)

For \(h_{x,a,b}\in\calH_C\), boundedness of \(z\) gives
\[
|y-z'b|\leq |y|+C
\quad
\text{and}\quad
|1\{z\preceq x\}-a'z|\leq C.
\]
The population mark \(r_x\) satisfies the same bound. Consequently,
\(D(y,z)=C(1+|y|)\) is an envelope for both \(\calH_C\) and \(\calD_C\).
Assumption \ref{ass:moments} implies \(PD^{2+\delta}<\infty\).

The lower-orthant indicator class
\(
\{z\mapsto1(z\preceq x):x\in\calX\}
\)
is a bounded VC class. The classes
\[
\{z\mapsto a'z:\|a\|\leq C\}
\quad\text{and}\quad
\{(y,z)\mapsto y-z'b:\|b\|\leq C\}
\]
are finite-dimensional linear classes and hence are VC type. Standard
closure properties of VC-type classes under finite sums and products imply
that \(\calH_C\) is VC type with envelope \(C(1+|y|)\).

By Lemma \ref{lem:entropy}, the population class
\(\calR=\{r_x:x\in\calX\}\) is also VC type with an envelope bounded by
\(D\). Since
\[
\calD_C
\subseteq
\{h-r:h\in\calH_C,\ r\in\calR\},
\]
the product-covering argument for differences gives finite constants
\(A_0,v_0\), independent of \(L\), such that
\[
\sup_L
N\left\{
\epsilon\|D\|_{L,2},
\calD_C,
L_2(L)
\right\}
\leq
(A_0/\epsilon)^{v_0},
\qquad 0<\epsilon\leq1.
\]
This proves \eqref{eq:enlarged-entropy}.

We next verify pointwise measurability. Let \(\calX_0\) be the Cartesian product of the rational points in each
coordinate interval together with the two endpoints of that interval.
Then \(\calX_0\) is countable. For every \(x\in\calX\), there exists a
sequence \(x_k\in\calX_0\) such that \(x_k\downarrow x\) coordinatewise.
Consequently,
\(
1\{z\preceq x_k\}\longrightarrow1\{z\preceq x\}
\)
for every fixed \(z\). For every
\((x,a,b)\), choose rational sequences
\[
x_k\downarrow x,\qquad a_k\to a,\qquad b_k\to b,
\]
where \(x_k\downarrow x\) coordinatewise. 
Also, by dominated convergence,
\[
M(x_k)
=
E[X1\{X\preceq x_k\}]
\longrightarrow
E[X1\{X\preceq x\}]
=
M(x).
\]
It follows that
\(
d_{x_k,a_k,b_k}(y,z)
\longrightarrow
d_{x,a,b}(y,z)
\)
for every fixed \((y,z)\). Thus, the rationally indexed subclass is a
countable pointwise-dense subclass of \(\calD_C\).

By the definitions of \(\widehat r_{ij}(x)\), \(A_n(x)\), and
\(d_{x,a,b}\), whenever \(Q_n\) is nonsingular,
\[
\widehat r_{ij}(x)-r_x(Z_{ij})
=
d_{x,A_n(x),\widehat\beta}(Z_{ij})
\]
for every \(x\in\calX\). The preceding boundedness result implies that
\(d_{x,A_n(x),\widehat\beta}\in\calD_C\) uniformly over \(x\), with
probability tending to one.

Finally,
\[
\sup_{x\in\calX}\|A_n(x)-A(x)\|
\leq
\|Q_n^{-1}-Q^{-1}\|
\sup_{x\in\calX}\|M_n(x)\|
+
\|Q^{-1}\|
\sup_{x\in\calX}\|M_n(x)-M(x)\|
=o_p(1).
\]
Using \(Y=X'\beta_*+\varepsilon\), direct expansion gives
\begin{align*}
d_{x,A_n(x),\widehat\beta}(Y,X)
={}&
-X'(\widehat\beta-\beta_*)
\{1(X\preceq x)-A_n(x)'X\}+\varepsilon\{A(x)-A_n(x)\}'X.
\end{align*}
Because \(\|X\|\leq C_X\) and
\(\sup_x\|A_n(x)\|=O_p(1)\),
\[
\sup_{x\in\calX}
|d_{x,A_n(x),\widehat\beta}(Y,X)|
\leq
C\|\widehat\beta-\beta_*\|
+
C\sup_{x\in\calX}\|A_n(x)-A(x)\||\varepsilon|
\]
on events whose probabilities tend to one. Therefore,
\begin{align*}
\Pn\sup_{x\in\calX}
|d_{x,A_n(x),\widehat\beta}|^2
\leq{}&
C\|\widehat\beta-\beta_*\|^2+
C\sup_{x\in\calX}\|A_n(x)-A(x)\|^2
\Pn\varepsilon^2
=o_p(1),
\end{align*}
because \(\Pn\varepsilon^2=O_p(1)\). This proves
\eqref{eq:enlarged-l2}.
\end{proof}

\begin{lemma}[Dyad-to-node contraction]
\label{lem:node-contraction}
Under Assumptions
\ref{ass:sampling}-\ref{ass:moments},
\begin{align}
\sup_{x\in\calX}
\left[
\frac1n\sum_{i=1}^n
\left\{
\frac1{n-1}\sum_{j\ne i}d_{x,A_n(x),\widehat\beta}(Z_{ij})
-\Pn d_{x,A_n(x),\widehat\beta}
\right\}^2
\right]^{1/2}
=o_p(1).
\label{eq:difference-node-radius}
\end{align}
\end{lemma}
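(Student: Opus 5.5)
The bound follows from two elementary inequalities and the $L_2$ control \eqref{eq:enlarged-l2} in Lemma \ref{lem:enlarged-class}, without any new empirical-process argument. Throughout, work on the event of probability tending to one on which $Q_n$ is nonsingular and \eqref{eq:estimated-difference-inclusion} holds. Fix $x\in\calX$ and abbreviate $d_{ij}(x)=d_{x,A_n(x),\widehat\beta}(Z_{ij})$ and $\bar d_i(x)=(n-1)^{-1}\sum_{j\ne i}d_{ij}(x)$.

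\emph{Step 1 (remove the centering).} Since every dyad enters exactly two incident averages, $n^{-1}\sum_{i=1}^n\bar d_i(x)=\Pn d_{x,A_n(x),\widehat\beta}$. The bracket in \eqref{eq:difference-node-radius} is therefore the empirical variance of $\{\bar d_i(x)\}_{i\le n}$. An empirical variance is at most the corresponding uncentered second moment, so
\[
\frac1n\sum_{i=1}^n\{\bar d_i(x)-\Pn d_{x,A_n(x),\widehat\beta}\}^2\le\frac1n\sum_{i=1}^n\bar d_i(x)^2 .
\]

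\emph{Step 2 (Jensen within each node).} Apply Cauchy--Schwarz (Jensen) to each incident average: $\bar d_i(x)^2\le (n-1)^{-1}\sum_{j\ne i}d_{ij}(x)^2$. Summing over $i$, each unordered dyad appears twice, and $n(n-1)=2N_n$. Hence
\[
\frac1n\sum_{i=1}^n\bar d_i(x)^2\le\frac{1}{n(n-1)}\sum_{i=1}^n\sum_{j\ne i}d_{ij}(x)^2=\Pn\, d_{x,A_n(x),\widehat\beta}^2 .
\]

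\emph{Step 3 (uniformity in $x$).} Taking the supremum over $x$ and moving it inside the average gives
\[
\sup_{x\in\calX}\Pn d_{x,A_n(x),\widehat\beta}^2\le\Pn\sup_{x\in\calX}|d_{x,A_n(x),\widehat\beta}|^2 ,
\]
and the right-hand side is $o_p(1)$ by \eqref{eq:enlarged-l2}. Taking square roots yields \eqref{eq:difference-node-radius}.

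The only point needing care is measurability of the supremum, since $A_n(x)$ and $\widehat\beta$ are random. Here I would note that the explicit bound in the proof of Lemma \ref{lem:enlarged-class}, namely $C\|\widehat\beta-\beta_*\|+C\sup_x\|A_n(x)-A(x)\||\varepsilon|$, is a measurable majorant. Alternatively, the statement can be read in outer probability, using the countable pointwise-dense subclass of $\calD_C$. I do not expect a genuine obstacle: the nuisance effect is already small in dyad-level $L_2$, and averaging over incident dyads only contracts it.
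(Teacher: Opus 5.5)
Your proposal is correct and follows essentially the same route as the paper's proof. Both remove the centering because an empirical variance is at most the uncentered second moment, apply Jensen within each incident average to reach \(\Pn d^2\), and move the supremum inside before invoking \eqref{eq:enlarged-l2}.
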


\begin{proof}[Proof of Lemma \ref{lem:node-contraction}]
Let \(\bar a_i=(n-1)^{-1}\sum_{j\ne i}a_{ij}\) and
\(\bar a=\Pn a=n^{-1}\sum_i\bar a_i\).  Since subtraction of the mean reduces
the average squared norm,
\(n^{-1}\sum_i(\bar a_i-\bar a)^2\le n^{-1}\sum_i\bar a_i^2\).
For each \(i\), Jensen's inequality yields
\(\bar a_i^2\le(n-1)^{-1}\sum_{j\ne i}a_{ij}^2\).  Averaging over \(i\) and
using symmetry,
\begin{align}
\frac1n\sum_i\bar a_i^2
&\le\frac1{n(n-1)}\sum_i\sum_{j\ne i}a_{ij}^2
=\frac2{n(n-1)}\sum_{i<j}a_{ij}^2
=\Pn a^2.\label{eq:dyad-node-contraction}
\end{align}
Applying this inequality with
\eqref{eq:enlarged-l2} in Lemma \ref{lem:enlarged-class} yields
\begin{align*}
&\sup_{x\in\calX}
\left[
\frac1n\sum_{i=1}^n
\left\{
\frac1{n-1}\sum_{j\ne i}d_{x,A_n(x),\widehat\beta}(Z_{ij})
-\Pn d_{x,A_n(x),\widehat\beta}
\right\}^2
\right]^{1/2}\leq
\left\{
\Pn\sup_{x\in\calX}
|d_{x,A_n(x),\widehat\beta}|^2
\right\}^{1/2}
=o_p(1).
\end{align*}
\end{proof}

\begin{lemma}[Conditional maximal inequality for node multipliers]
\label{lem:conditional-maximal}
Suppose that, conditional on the data, \(\xi_1,\ldots,\xi_n\) are independent,
centered, variance-one, and uniformly sub-Gaussian. Let \(\calA\) be a
deterministic VC-type class containing the zero function, with envelope
\(F_\calA\), and define
\(
\calA_n(r)=\{f\in\calA:\|f\|_{\Pn,2}\le r\}.
\)
Then
\begin{align}
&E^*\sup_{f\in\calA_n(r)}
\left|
\frac2{\sqrt n}\sum_{i=1}^n\xi_i
\{\bar f_i-\Pn f\}
\right|\leq
C\|F_\calA\|_{\Pn,2}
\int_0^{r/\|F_\calA\|_{\Pn,2}}
\sqrt{
1+\log N\!\left(
\epsilon\|F_\calA\|_{\Pn,2},
\calA,L_2(\Pn)
\right)
}\,d\epsilon .
\label{eq:conditional-maximal}
\end{align}
Consequently, if \(\{d_{x,A_n(x),\widehat\beta}:x\in\calX\}\subset\calA\)
with probability approaching one,
\(
r_n^2=\Pn\sup_{x\in\calX}|d_{x,A_n(x),\widehat\beta}|^2=o_p(1)
\), \(
\|F_\calA\|_{\Pn,2}=O_p(1)
\), 
then
\[
E^*\sup_{x\in\calX}
\left|
\frac2{\sqrt n}\sum_{i=1}^n\xi_i
\{\bar d_{i,n}(x)-\Pn d_{x,A_n(x),\widehat\beta}\}
\right|
=o_p(1),
\]
where \(\bar d_{i,n}(x)=(n-1)^{-1}\sum_{j\neq i}^nd_{ij,n}(x)\) and \(d_{ij,n}(x)=\widehat r_{ij}(x)-r_x(Z_{ij})\). 
\end{lemma}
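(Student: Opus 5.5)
The inequality will follow from conditional sub-Gaussian chaining applied to the node-level scores, with the dyad-to-node contraction \eqref{eq:dyad-node-contraction} used to move from node-level distances to the dyad empirical norm \(\|\cdot\|_{\Pn,2}\).

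\textbf{Step 1 (sub-Gaussian increments).} Fix the data and write \(\phi_i(f)=\bar f_i-\Pn f\) for \(f\in\calA\). The process \(W_n(f)=2n^{-1/2}\sum_i\xi_i\phi_i(f)\) is linear in \(f\). Because the \(\xi_i\) are independent, centered and uniformly sub-Gaussian, the conditional increments are sub-Gaussian:
\[
\|W_n(f)-W_n(g)\|_{\psi_2\mid\text{data}}\le C\Big\{\tfrac4n\textstyle\sum_i(\phi_i(f)-\phi_i(g))^2\Big\}^{1/2}.
\]
Applying \eqref{eq:dyad-node-contraction} to the array \(a_{ij}=(f-g)(Z_{ij})\) bounds the right-hand side by \(2C\|f-g\|_{\Pn,2}\). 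So \(W_n\) is conditionally sub-Gaussian with respect to a multiple of the \(L_2(\Pn)\) semimetric.

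\textbf{Step 2 (chaining).} Apply Dudley's entropy integral on \(\calA_n(r)\), which contains the zero function. Then \(W_n(0)=0\), and every element lies within \(r\) of \(0\) in \(L_2(\Pn)\), so the diameter of \(\calA_n(r)\) is at most \(2r\). This gives
\[
E^*\sup_{f\in\calA_n(r)}|W_n(f)|\le C\int_0^{r}\sqrt{1+\log N(\eta,\calA_n(r),L_2(\Pn))}\,d\eta .
\]
Covering numbers of the subset are bounded by those of \(\calA\) at half the radius, and the constant is absorbed into \(C\). The substitution \(\eta=\epsilon\|F_\calA\|_{\Pn,2}\) then yields \eqref{eq:conditional-maximal}. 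Measurability is handled by passing to a countable pointwise-dense subclass (pointwise measurability) and monotone convergence, or else by reading the statement with outer expectations.

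\textbf{Step 3 (consequence).} Take \(\calA=\calD_C\), which is VC type, contains the zero function \(d_{x,A(x),\beta_*}\), and has envelope \(D\) by Lemma \ref{lem:enlarged-class}. On the event of probability tending to one on which \(d_{x,A_n(x),\widehat\beta}\in\calA\) for all \(x\), each such function lies in \(\calA_n(r_n)\), because \(\|d_{x,A_n(x),\widehat\beta}\|_{\Pn,2}\le r_n\). The left-hand side of the claimed bound is therefore at most the supremum over \(\calA_n(r_n)\). This step needs some care, since \(r_n\) is data dependent but is a fixed number conditional on the data, so the conditional bound applies with \(r=r_n\).

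By \eqref{eq:enlarged-entropy}, the integrand is at most \(\sqrt{1+v_0\log(A_0/\epsilon)}\), uniformly in the random measure \(\Pn\). Hence the right-hand side is bounded by
\[
C\|D\|_{\Pn,2}\,J\!\left(r_n/\|D\|_{\Pn,2}\right),\qquad J(\delta)=\int_0^{\delta}\sqrt{1+v_0\log(A_0/\epsilon)}\,d\epsilon .
\]
The map \(\delta\mapsto\delta J(1)\) does not dominate \(J\) near zero, so use instead that \(J(\delta)\to0\) as \(\delta\downarrow0\) and that \(\|F_\calA\|_{\Pn,2}J(\delta)\) is nondecreasing in the envelope norm. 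A lower bound on \(\|F_\calA\|_{\Pn,2}\) is also needed. If it were small, then \(\|F_\calA\|_{\Pn,2}J(r_n/\|F_\calA\|_{\Pn,2})\) would be bounded above by \(C r_n\sqrt{1+\log(\|F_\calA\|_{\Pn,2}/r_n)}\), which is \(\le Cr_n\sqrt{\log(1/r_n)}\) up to \(O_p(1)\) factors.

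The main obstacle is handling this ratio. I would split on the event \(\{\|F_\calA\|_{\Pn,2}\ge r_n\}\) and use the elementary bound \(J(\delta)\le C\delta\sqrt{1+\log(1/\delta)}\) for \(\delta\le1\). Together with \(\|F_\calA\|_{\Pn,2}=O_p(1)\) and \(r_n=o_p(1)\), this gives a bound of order \(r_n\sqrt{1+\log(O_p(1)/r_n)}=o_p(1)\). On the complementary event, \(\|F_\calA\|_{\Pn,2}<r_n\), every \(f\in\calA\) already has \(\|f\|_{\Pn,2}\le\|F_\calA\|_{\Pn,2}<r_n\), so \(\calA_n(r_n)=\calA\) and the upper limit of the integral may be capped at \(1\). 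The bound is then \(C\|F_\calA\|_{\Pn,2}J(1)\le Cr_nJ(1)=o_p(1)\).
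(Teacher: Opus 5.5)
Your proposal is correct and follows essentially the same route as the paper's proof. Both arguments get sub-Gaussian increments bounded by $\|f-g\|_{\Pn,2}$ through the dyad-to-node contraction, apply Dudley's integral anchored at the zero function (which you rightly check lies in $\calD_C$ as $d_{x,A(x),\beta_*}$), and then take $\calA=\calD_C$, $r=r_n$ and use $s\sqrt{\log(C/s)}\to0$.

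Your case split on $\{\|F_\calA\|_{\Pn,2}<r_n\}$ is harmless but vacuous. On the inclusion event the envelope gives $|d_{x,A_n(x),\widehat\beta}|\le F_\calA$ pointwise, hence $r_n\le\|F_\calA\|_{\Pn,2}$. Since the bound is nondecreasing in the envelope norm, no lower bound on $\|F_\calA\|_{\Pn,2}$ is needed.
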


\begin{proof}[Proof of Lemma \ref{lem:conditional-maximal}]
Condition on the dyadic sample and define
\(
S_n(f)
=
\frac2{\sqrt n}\sum_{i=1}^n
\xi_i\{\bar f_i-\Pn f\}.
\)
For \(f,g\in\calA\), define the sample-dependent semimetric
\[
\rho_{n,N}(f,g)
=
\left[
\frac1n\sum_{i=1}^n
\{\overline{(f-g)}_i-\Pn(f-g)\}^2
\right]^{1/2}.
\]
For \(f,g\in\calA\), conditional independence and sub-Gaussianity give
\begin{align*}
E^*\exp\{t[S_n(f)-S_n(g)]\}
&\le
\exp\left[
C t^2
\frac1n\sum_{i=1}^n
\{\overline{(f-g)}_i-\Pn(f-g)\}^2
\right]=
\exp\{Ct^2\rho_{n,N}^2(f,g)\}.
\end{align*}
Thus \(S_n\) has sub-Gaussian increments under \(\rho_{n,N}\). By 
\eqref{eq:dyad-node-contraction},
\(
\rho_{n,N}(f,g)\le\|f-g\|_{\Pn,2}.
\)
Moreover, for \(f\in\calA_n(r)\),
\(
\rho_{n,N}(f,0)\le\|f\|_{\Pn,2}\le r.
\)
Dudley's inequality therefore yields
\begin{align*}
E^*\sup_{f\in\calA_n(r)}|S_n(f)|
&\le
C\int_0^r
\sqrt{
1+\log N(\epsilon,\calA,L_2(\Pn))
}\,d\epsilon\\
&=
C\|F_\calA\|_{\Pn,2}
\int_0^{r/\|F_\calA\|_{\Pn,2}}
\sqrt{
1+\log N\!\left(
\epsilon\|F_\calA\|_{\Pn,2},
\calA,L_2(\Pn)
\right)
}\,d\epsilon,
\end{align*}
which proves \eqref{eq:conditional-maximal}.

Apply the preceding inequality with
\(\calA=\calD_C\) and envelope \(D\). Let
\(
r_n^2
=
\Pn\sup_{x\in\calX}
|d_{x,A_n(x),\widehat\beta}|^2.
\)
By Lemma \ref{lem:enlarged-class}, with probability approaching one,
\[
d_{x,A_n(x),\widehat\beta}\in\calD_C
\qquad\text{for every }x\in\calX,
\]
and \(r_n=o_p(1)\). Moreover,
\(
\|d_{x,A_n(x),\widehat\beta}\|_{\Pn,2}\leq r_n\text{, for every }x\in\calX.
\)
Therefore,
\begin{align*}
E^*\sup_{x\in\calX}
|S_n(d_{x,A_n(x),\widehat\beta})|
&\leq
C\|D\|_{\Pn,2}
\int_0^{r_n/\|D\|_{\Pn,2}}
\sqrt{1+v_0\log(A_0/\epsilon)}\,d\epsilon\leq
Cr_n
\sqrt{
1+\log\left(
\frac{A_0\|D\|_{\Pn,2}}{r_n}
\right)
}.
\end{align*}
Since \(r_n=o_p(1)\), \(\|D\|_{\Pn,2}=O_p(1)\), and
\(s\sqrt{\log(C/s)}\to0\) as \(s\downarrow0\), the last expression is
\(o_p(1)\).
\end{proof}

\section{Entropy and Preliminary Regularity}
\label{app:regularity}

The following two lemmas verify the function-class and uniform-convergence conditions used in the estimated-process linearization. They are standard supporting results and are collected here to keep the main argument focused on the uniform dyadic projection and bootstrap construction.

\begin{lemma}[Entropy of the marked class]
\label{lem:entropy}
Under Assumption \ref{ass:moments}, the class
\begin{equation}
\calR=\{r_x:(y,z)\mapsto (y-z'\beta_*)[\ind\{z\preceq x\}-M(x)'Q^{-1}z],\ x\in\calX\}
\label{eq:Rclass}
\end{equation}
is pointwise measurable and VC type with a square-integrable envelope. The classes \(\{z\mapsto z_k\ind\{z\preceq x\}:x\in\calX\}\), \(k=1,\ldots,d\), are also bounded VC-type classes.
\end{lemma}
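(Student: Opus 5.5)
The plan is to write each mark as a product of two factors and use the permanence properties of VC-type classes. For \(r_x\in\calR\),
\[
r_x(y,z)=\underbrace{(y-z'\beta_*)}_{=:e(y,z)}\cdot\underbrace{\bigl[\ind\{z\preceq x\}-A(x)'z\bigr]}_{=:q_x(z)},\qquad A(x)=Q^{-1}M(x).
\]
The first factor \(e\) does not depend on \(x\), so \(\calR=e\cdot\mathcal Q\) with \(\mathcal Q=\{q_x:x\in\calX\}\). The whole task therefore reduces to showing that \(\mathcal Q\) is a uniformly bounded VC-type class. Multiplying a uniformly bounded VC-type class by a single fixed function \(e\) preserves the covering numbers in \(L_2(L)\) after reweighting by \(e^2\). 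Concretely, \(N(\epsilon\|eF_{\mathcal Q}\|_{L,2},e\mathcal Q,L_2(L))\le N(\epsilon\|F_{\mathcal Q}\|_{\widetilde L,2},\mathcal Q,L_2(\widetilde L))\) with \(d\widetilde L\propto e^2dL\), and this bound is uniform in \(L\).

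For \(\mathcal Q\), I would proceed in three steps.
\begin{enumerate}
\item Lower orthants \(\{z\preceq x\}\) form a VC class of sets of index at most \(d+1\). Hence \(\{\ind\{z\preceq x\}:x\in\calX\}\) is a bounded VC class with envelope \(1\).
\item The functions \(z\mapsto A(x)'z\) lie in the \(d\)-dimensional vector space spanned by the coordinate maps \(z_1,\dots,z_d\). The coefficients are bounded, since \(\sup_x\|A(x)\|\le\|Q^{-1}\|C_X<\infty\) by Assumption \ref{ass:moments} (as in the proof of Lemma \ref{lem:enlarged-class}). A finite-dimensional vector space of functions is VC-subgraph. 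Alternatively, covering the ball \(\{\|a\|\le C\}\) directly gives polynomial covering numbers with envelope \(CC_X\), because \(|a'z-\tilde a'z|\le C_X\|a-\tilde a\|\).
\item The difference of two VC-type classes is VC type, with envelope equal to the sum of the envelopes, by the standard product-covering bound. So \(\mathcal Q\) is VC type with constant envelope \(1+CC_X\).
\end{enumerate}
The envelope of \(\calR\) is then \(F_{\calR}(y,z)=(|y|+C_X\|\beta_*\|)(1+CC_X)\). This is square integrable because \(E|Y_{12}|^{4+\delta}<\infty\).

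The secondary claim about \(\{z\mapsto z_k\ind\{z\preceq x\}\}\) follows the same way. It is the product of the single bounded function \(z_k\), with \(|z_k|\le C_X\), and the bounded VC class of orthant indicators, so it is bounded by \(C_X\) and VC type.

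Pointwise measurability is the one step needing care, because \(x\mapsto\ind\{z\preceq x\}\) is only right-continuous. I would use the countable set \(\calX_0\) of rational points (plus endpoints), exactly as in the proof of Lemma \ref{lem:enlarged-class}. For any \(x\), take \(x_k\in\calX_0\) with \(x_k\downarrow x\) coordinatewise. Then \(\ind\{z\preceq x_k\}\to\ind\{z\preceq x\}\) pointwise, and \(M(x_k)\to M(x)\) by dominated convergence, using boundedness of \(X\). Hence \(r_{x_k}\to r_x\) pointwise, and the countable subclass indexed by \(\calX_0\) is pointwise dense. The same sequence handles \(z_k\ind\{z\preceq x\}\).

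I expect the main obstacle to be only bookkeeping in step 3: keeping the covering radius expressed relative to the envelope norm under an arbitrary \(L\) when combining sums and products. Both constants are uniform in \(L\), so this should go through with the standard lemmas.
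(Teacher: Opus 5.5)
Your proposal is correct and takes essentially the same approach as the paper. You factor $r_x$ as the fixed function $y-z'\beta_*$ times the bounded class $\{\ind\{z\preceq x\}-A(x)'z:x\in\calX\}$, obtain VC type from the orthant indicators and a bounded finite-dimensional linear class via closure under sums and under multiplication by a fixed function (with the product envelope), and get pointwise measurability from rational approximations $x_k\downarrow x$ with dominated convergence for $M(x_k)\to M(x)$; your reweighting $d\widetilde L\propto e^2\,dL$ simply spells out the permanence step the paper cites.
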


\begin{proof}[Proof of Lemma \ref{lem:entropy}]
Let \(\mathcal I=\{z\mapsto\ind\{z\preceq x\}:x\in\calX\}\). The collection of lower orthants in \(\R^d\) is a VC class of sets with finite VC dimension. Hence \(\mathcal I\) is a VC-subgraph, and therefore a VC-type, class with envelope one.

For each coordinate \(k\), the class
\(\mathcal M_k=\{z\mapsto z_k\ind\{z\preceq x\}:x\in\calX\}\) is obtained by multiplying the VC class \(\mathcal I\) by the fixed bounded function \(z_k\). Standard permanence properties of VC-type classes imply that \(\mathcal M_k\) is a VC-type. Its envelope is bounded by \(C_X\).

Because \(M(x)=P[X\ind\{X\preceq x\}]\), compactness and boundedness imply \(\sup_x\norm{M(x)}\leq C_X\). Since \(Q^{-1}\) is fixed, \(a(x)=Q^{-1}M(x)\) lies in a bounded subset of \(\R^d\). The class \(\{z\mapsto a(x)'z:x\in\calX\}\) is contained in a finite-dimensional linear class with bounded coefficients and is VC type. Therefore, the class
\(\mathcal Q=\{z\mapsto\ind\{z\preceq x\}-M(x)'Q^{-1}z:x\in\calX\}\) is VC type by closure under sums. A uniform envelope is \(1+C\norm z\), which is bounded under Assumption \ref{ass:moments}.

Finally, \(r_x(y,z)=(y-z'\beta_*)q_x(z)\) is obtained by multiplying every function in \(\mathcal Q\) by the fixed measurable function \(y-z'\beta_*\). Multiplication by a fixed function preserves covering-number bounds after replacing the envelope by the product envelope. Thus \(\calR\) is VC type with envelope
\(R(y,z)=|y-z'\beta_*|(1+C\norm z)\). Boundedness of \(z\) and \(\E|Y|^{2+\delta}<\infty\) imply \(PR^2<\infty\). Pointwise measurability follows by restricting \(x\) to the Cartesian product of the rational points and endpoints of the coordinate intervals. For every \(x\in\calX\), choose \(x_m\downarrow x\) coordinatewise from this countable set. Then \(\ind\{z\preceq x_m\}\to\ind\{z\preceq x\}\) for every fixed \(z\), and dominated convergence gives \(M(x_m)\to M(x)\).
\end{proof}

\begin{lemma}[Preliminary uniform convergence]
\label{lem:prelim}
Under Assumptions \ref{ass:sampling}-\ref{ass:moments},
\begin{enumerate}[label=(\roman*)]
\item \(\norm{Q_n-Q}=o_p(1)\), \(\norm{Q_n^{-1}-Q^{-1}}=o_p(1)\), and \(Q_n\) is nonsingular with probability approaching one;
\item \(\sup_{x\in\calX}\norm{M_n(x)-M(x)}=o_p(1)\);
\item \(\sqrt n\Pn(X\varepsilon)=O_p(1)\), \(\widehat\beta-\beta_*=O_p(n^{-1/2})\), and \(\widehat\beta\to_p\beta_*\).
\end{enumerate}
\end{lemma}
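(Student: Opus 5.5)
The plan is to handle the three parts in order, treating each sample average as a dyadic empirical process and applying Lemma \ref{lem:ddg} or Lemma \ref{lem:projection} to it.

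\emph{Part (i).} Each entry of \(Q_n-Q\) has the form \(\Pn f-Pf\) with \(f(z)=z_kz_l\), which is bounded because \(\norm{X_{12}}\leq C_X\). By Lemma \ref{lem:projection} and Lemma \ref{lem:finite-cov}, \(\Var(\sqrt n\,\Pn f)=O(1)\). Hence \(Q_n-Q=O_p(n^{-1/2})\) entrywise, and therefore in operator norm since \(d\) is fixed. Because \(\lambda_{\min}(Q)\) is bounded away from zero, Weyl's inequality shows that \(\lambda_{\min}(Q_n)\geq\lambda_{\min}(Q)/2\) with probability approaching one. On that event the identity \(Q_n^{-1}-Q^{-1}=Q_n^{-1}(Q-Q_n)Q^{-1}\) gives \(\norm{Q_n^{-1}-Q^{-1}}=o_p(1)\).

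\emph{Part (ii).} Write \(M_n(x)-M(x)=n^{-1/2}\,\Gn m_{k,x}\) coordinatewise, where \(m_{k,x}(z)=z_k\ind\{z\preceq x\}\). By Lemma \ref{lem:entropy}, the class \(\{m_{k,x}:x\in\calX\}\) is a bounded, pointwise measurable VC-type class. Lemma \ref{lem:ddg} (or directly the decomposition \eqref{eq:main-projection}) then gives \(\sup_x|\Gn m_{k,x}|=O_p(1)\). For the projection term this follows from the i.i.d.\ maximal inequality over the node variables, and for the remainder it follows from Lemma \ref{lem:uniform-degenerate}. Thus \(\sup_x\norm{M_n(x)-M(x)}=O_p(n^{-1/2})=o_p(1)\).

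\emph{Part (iii).} Since \(E[X_{12}\varepsilon_{12}]=0\), Lemma \ref{lem:finite-cov} applies to \(f(z)=z_k(y-z'\beta_*)\). This function has a finite second moment because \(X\) is bounded and \(E|Y|^{4+\delta}<\infty\). We obtain
\[
E\{\sqrt n\,\Pn(X_k\varepsilon)\}^2=4E[f_1(U_1)^2]+\tfrac{2}{n-1}Ef_2^2=O(1),
\]
and Chebyshev's inequality gives \(\sqrt n\,\Pn(X\varepsilon)=O_p(1)\). On the event that \(Q_n\) is nonsingular, \(\widehat\beta-\beta_*=Q_n^{-1}\Pn(X\varepsilon)\). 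Combining this with part (i) yields \(\widehat\beta-\beta_*=O_p(n^{-1/2})\), and hence consistency.

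The only genuinely uniform step is part (ii), where one needs the supremum over the continuum of \(x\). This is already delivered by Lemma \ref{lem:ddg} together with Lemma \ref{lem:entropy}, so no real obstacle remains. The pointwise parts need only the exact variance formula of Lemma \ref{lem:finite-cov}.
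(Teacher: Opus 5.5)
Your proposal is correct and follows essentially the same route as the paper: the projection decomposition of Lemma \ref{lem:projection} with its exact variance bounds for the pointwise parts (i) and (iii), and, for part (ii), the split into a node-projection term plus a second-order remainder controlled uniformly by Lemma \ref{lem:uniform-degenerate}. The only difference is that you use second-moment and Donsker-type bounds (Lemmas \ref{lem:finite-cov} and \ref{lem:ddg}) to get slightly sharper $O_p(n^{-1/2})$ rates in (i)--(ii), whereas the paper only needs a law of large numbers for the node average in (i) and a Glivenko--Cantelli argument for the node projections in (ii).
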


\begin{proof}[Proof of Lemma \ref{lem:prelim}]
For part (i), each entry of \(XX'\) is bounded. Apply the projection decomposition in Lemma \ref{lem:projection} coordinate by coordinate. The node-average component satisfies the ordinary law of large numbers, while the second-order dyad remainder has variance of order \(n^{-2}\) by \eqref{eq:deg-var}. Hence every entry of \(Q_n-Q\) converges to zero in probability, and because \(d\) is fixed, \(\norm{Q_n-Q}=o_p(1)\). The smallest eigenvalue is continuous in the matrix entries, so \(\lambda_{\min}(Q_n)\to_p\lambda_{\min}(Q)>0\). Therefore \(Q_n\) is nonsingular with probability approaching one. On that event,
\(Q_n^{-1}-Q^{-1}=Q_n^{-1}(Q-Q_n)Q^{-1}\), and the norms of \(Q_n^{-1}\) and \(Q^{-1}\) are bounded in probability and fixed, respectively. Thus \(\norm{Q_n^{-1}-Q^{-1}}=o_p(1)\).

For part (ii), apply the same projection decomposition uniformly to each bounded VC-type class \(\mathcal M_k\) in Lemma \ref{lem:entropy}. The node-projection class is Glivenko-Cantelli because conditional expectation contracts its entropy, and Lemma \ref{lem:uniform-degenerate} controls the second-order remainder uniformly. This gives
\(\sup_x|M_{n,k}(x)-M_k(x)|=o_p(1)\) for every coordinate \(k\). Since \(d\) is fixed, the Euclidean norm also converges uniformly.

For part (iii), by definition of \(\beta_*\), \(P(X\varepsilon)=0\). The finite-dimensional class consisting of the coordinates of \(X\varepsilon\) has a square-integrable envelope because \(X\) is bounded and \(\varepsilon\) has a finite \((2+\delta)\)-moment. For each coordinate \(k\), apply the projection identity in
Lemma \ref{lem:projection} to
\(f_k(Z_{ij})=X_{ij,k}\varepsilon_{ij}\). Since
\(P f_k=0\),
\[
\sqrt n\Pn f_k
=
\frac{2}{\sqrt n}\sum_{i=1}^n(f_k)_1(U_i)
+\sqrt n\,\mathbb U_n(f_k)_2.
\]
The first term is \(O_p(1)\) because the summands are i.i.d., centered,
and square integrable. By \eqref{eq:deg-small}, the second term is
\(o_p(1)\). Hence
\(\sqrt n\Pn f_k=O_p(1)\). Since \(d\) is fixed,
\(\sqrt n\Pn(X\varepsilon)=O_p(1)\). The exact OLS identity is
\(\widehat\beta-\beta_*=Q_n^{-1}\Pn(X\varepsilon)\). Since \(\norm{Q_n^{-1}}=O_p(1)\), it follows that \(\widehat\beta-\beta_*=O_p(n^{-1/2})\), which also implies consistency.
\end{proof}

\section{Proofs of Main Theorems}
\label{app:main-proofs}

\begin{proof}[Proof of Lemma \ref{lem:identification}]
If \(H_0\) holds, then \(\E[\varepsilon_{12}\mid X_{12}]=0\) almost surely. By iterated expectations, for every \(x\in\R^d\),
\(\Delta(x)=\E[\E(\varepsilon_{12}\mid X_{12})\ind\{X_{12}\preceq x\}]=0\).

Conversely, suppose \(\Delta(x)=0\) for every \(x\in\R^d\). Define a finite signed measure \(\mu\) on the Borel sets of \(\R^d\) by
\(\mu(B)=\E[\varepsilon_{12}\ind\{X_{12}\in B\}]\). This is well defined because \(\E|\varepsilon_{12}|<\infty\). Let
\(\mathcal C=\{(-\infty,x_1]\times\cdots\times(-\infty,x_d]:x\in\R^d\}\). By assumption, \(\mu(C)=0\) for every \(C\in\mathcal C\).

The class \(\mathcal C\) is a \(\pi\)-system because the intersection of two lower orthants is a lower orthant. It generates the Borel \(\sigma\)-field on \(\R^d\). Let
\(\mathcal L=\{B:\mu(B)=0\}\). Since \(\mu(\R^d)=\lim_{m\to\infty}\Delta(m\bm 1)=0\), \(\R^d\in\mathcal L\). If \(A\subset B\), with \(A,B\in\mathcal L\), then \(\mu(B\setminus A)=\mu(B)-\mu(A)=0\). If \((B_k)\) are disjoint members of \(\mathcal L\), countable additivity gives \(\mu(\cup_kB_k)=\sum_k\mu(B_k)=0\). Thus \(\mathcal L\) is a Dynkin system containing \(\mathcal C\). By the \(\pi\)-\(\lambda\) theorem, \(\mathcal L\) contains the Borel \(\sigma\)-field. Therefore,
\(\E[\varepsilon_{12}\ind\{X_{12}\in B\}]=0\) for every Borel set \(B\).

Let \(g(X_{12})=\E[\varepsilon_{12}\mid X_{12}]\). Then \(g\) is integrable and
\(\E[g(X_{12})\ind\{X_{12}\in B\}]=0\) for every Borel set \(B\). Taking \(B=\{x:g(x)>0\}\) gives \(\E[g(X_{12})\ind\{g(X_{12})>0\}]=0\), so the positive part of \(g(X_{12})\) is zero almost surely. Taking \(B=\{x:g(x)<0\}\) shows that the negative part is also zero almost surely. Hence \(g(X_{12})=0\) almost surely, which is equivalent to \(H_0\).
\end{proof}

\begin{proof}[Proof of Lemma \ref{lem:ddg}]
For every \(f\in\calF\), apply the exact projection identity in Lemma
\ref{lem:projection}.  It gives
\begin{equation}
\Gn f=\frac2{\sqrt n}\sum_{i=1}^nf_1(U_i)
+\sqrt n\,\mathbb U_nf_2.
\label{eq:ddg-proof-decomp}
\end{equation}
Let \(\calF_1=\{f_1:f\in\calF\}\), where
\(f_1(u)=E[f(Z_{12})\mid U_1=u]-Pf\). We first verify the entropy and
moment conditions for \(\calF_1\). Let \(P_U\) denote the marginal distribution of \(U_1\). For fixed \(u\),
let \(P_u\) denote the conditional distribution of
\((U_2,U_{12})\) given \(U_1=u\). For \(f,g\in\calF\), conditional
Jensen's inequality and Cauchy-Schwarz give
\begin{align}
\|f_1-g_1\|_{L_2(P_U)}
&\leq
\left\|E[(f-g)(Z_{12})\mid U_1]\right\|_{L_2(P_U)}
+|P(f-g)| \leq
2\|f-g\|_{L_2(P)}.
\label{eq:projection-l2-contraction}
\end{align}
To obtain the uniform entropy bound, fix an arbitrary probability measure
\(Q\) on the \(U_1\)-space. Let \(Q^{(1)}\) denote the induced dyad law
obtained by drawing \(U_1\) from \(Q\), drawing \(U_2\) and \(U_{12}\)
independently from their original distributions, and setting
\(Z_{12}=\tau(U_1,U_2,U_{12})\). Conditional Jensen's inequality gives
\[
\left\|E[(f-g)(Z_{12})\mid U_1]\right\|_{L_2(Q)}
\leq\|f-g\|_{L_2(Q^{(1)})}.
\]
Also, \(|P(f-g)|\leq\|f-g\|_{L_2(P)}\). Applying the VC bound for
\(\calF\) to the mixture \(\widetilde Q=(Q^{(1)}+P)/2\) controls both
terms simultaneously. Define
\[
F_1(u)
=
\left\{E[F(Z_{12})^2\mid U_1=u]\right\}^{1/2}
+\|F\|_{P,2}.
\]
Then \(|f_1(u)|\leq F_1(u)\),
\(\|F_1\|_{P_U,2}\leq2\|F\|_{P,2}\), and, for every \(Q\),
\[
\|F_1\|_{Q,2}^2
\geq
\|F\|_{Q^{(1)},2}^2+\|F\|_{P,2}^2
=2\|F\|_{\widetilde Q,2}^2.
\]
Consequently, the covering radius obtained under \(\widetilde Q\) is no
larger than a fixed multiple of \(\epsilon\|F_1\|_{Q,2}\), and the
projection class is VC type uniformly over \(Q\). Hence there exist finite
constants \(A_1,v_1\), depending only on \((A,v)\), such that
\begin{equation}
\sup_Q
N\left(
\epsilon\|F_1\|_{Q,2},
\calF_1,L_2(Q)
\right)
\leq
\left(\frac{A_1}{\epsilon}\right)^{v_1},
\qquad 0<\epsilon\leq1.
\label{eq:projection-entropy-bound}
\end{equation}
Thus \(\calF_1\) is VC type with a square-integrable envelope. Since \(Ef_1(U_1)=0\), the ordinary i.i.d. empirical-process central
limit theorem gives
\[
\left\{
\frac{2}{\sqrt n}\sum_{i=1}^nf_1(U_i):f\in\calF
\right\}
\rightsquigarrow
\{\mathbb G(f):f\in\calF\}
\quad\text{in }\ell^\infty(\calF),
\]
where \(\mathbb G\) is a centered tight Gaussian process with covariance
\begin{align}
E[\mathbb G(f)\mathbb G(g)]
&=4E[f_1(U_1)g_1(U_1)] =4\Cov\left(
E[f(Z_{12})\mid U_1],
E[g(Z_{12})\mid U_1]
\right).
\label{eq:projection-limit-covariance}
\end{align}
Lemma \ref{lem:uniform-degenerate} gives
\(\sup_{f\in\calF}|\sqrt n\,\mathbb U_nf_2|=o_p(1)\). Combining this
result with \eqref{eq:ddg-proof-decomp} and applying Slutsky's theorem
in \(\ell^\infty(\calF)\) proves the lemma, including the
covariance formula in \eqref{eq:ddg-cov}.
\end{proof}

\begin{proof}[Proof of Lemma \ref{lem:ideal-multiplier}]
Recall that 
\(
e_{i,n}(f)
=
\bar f_i-\Pn f-f_1(U_i)+n^{-1}\sum_{j=1}^nf_1(U_j).
\)
Then the ideal multiplier process admits the exact decomposition
\begin{align}
\mathbb G_n^*f
={}&
\frac{2}{\sqrt n}\sum_{i=1}^n\xi_i
\left\{
f_1(U_i)-\frac1n\sum_{j=1}^nf_1(U_j)
\right\}
+
\frac{2}{\sqrt n}\sum_{i=1}^n\xi_i e_{i,n}(f).
\label{eq:multiplier-projection-decomposition}
\end{align}

We first show that the second term is uniformly negligible. By Lemma \ref{lem:node-recovery},
\(
\sup_{f\in\calF}\frac1n\sum_{i=1}^ne_{i,n}(f)^2=o_p(1).
\)
To connect this result to Lemma \ref{lem:conditional-maximal}, first note
that \(e_{i,n}(f)\) is itself a centered incident-dyad average. Indeed,
the identity \eqref{eq:s-P} gives
\begin{align*}
e_{i,n}(f)
={}&
\frac1{n-1}\sum_{j\ne i}f_{2,ij}
-\mathbb U_nf_2
-\frac1{n-1}
\left\{
f_1(U_i)-\frac1n\sum_{j=1}^nf_1(U_j)
\right\}.
\end{align*}
Equivalently, for \(n\geq3\),
\begin{align}
e_{i,n}(f)
={}&
\frac1{n-1}\sum_{j\ne i}
\left\{
f_{2,ij}
-\frac{f_1(U_i)+f_1(U_j)}{n-2}
\right\}
-
\frac{1}{N_n}\sum_{i<j}\left\{
f_{2,ij}-\frac{f_1(U_i)+f_1(U_j)}{n-2}
\right\}.
\label{eq:en-as-incident-average}
\end{align}
The class inside braces in \eqref{eq:en-as-incident-average} is obtained
from the second-order projection class and the first-order projection
class by finite sums and multiplication by the deterministic constant
\((n-2)^{-1}\). Hence it has VC-type entropy constants that can be chosen
uniformly in \(n\). Its envelope can be taken to be the sum of an
envelope of the second-order projection class and
\(2(n-2)^{-1}\) times an envelope of \(\calF_1\); its empirical
\(L_2\) norm is therefore \(O_p(1)\). Although this class is written as
functions of the latent dyad tuple \((U_i,U_j,U_{ij})\), the dyad-to-node
contraction and the conditional maximal argument in Lemma
\ref{lem:conditional-maximal} depend only on the dyad-indexed evaluations and
therefore apply verbatim with the empirical measure on these latent tuples.

Conditional on the data, the increment semimetric of
\[
f\longmapsto
\frac{2}{\sqrt n}\sum_{i=1}^n\xi_i e_{i,n}(f)
\]
is
\(
\left[
\frac1n\sum_{i=1}^n
\{e_{i,n}(f)-e_{i,n}(g)\}^2
\right]^{1/2}.
\)
The dyad-to-node contraction used in the proof of Lemma
\ref{lem:conditional-maximal} shows that the covering numbers under this
semimetric are bounded by the covering numbers of the preceding
VC-type dyad class. Moreover, Lemma \ref{lem:node-recovery} shows that
its radius around zero satisfies
\[
\sup_{f\in\calF}
\left\{
\frac1n\sum_{i=1}^ne_{i,n}(f)^2
\right\}^{1/2}
=o_p(1).
\]
Applying the conditional sub-Gaussian chaining argument in the proof of
Lemma \ref{lem:conditional-maximal}, now with this shrinking node-level
radius as the upper limit of the entropy integral, gives
\begin{align*}
&E^*\sup_{f\in\calF}
\left|
\frac{2}{\sqrt n}\sum_{i=1}^n\xi_i e_{i,n}(f)
\right|\\
&\quad\leq
C\left\{
\sup_{f\in\calF}\frac1n\sum_{i=1}^ne_{i,n}(f)^2
\right\}^{1/2}
\sqrt{
1+\log\left[
\frac{C}{
\{\sup_{f\in\calF}n^{-1}\sum_i e_{i,n}(f)^2\}^{1/2}}
\right]}
=o_p(1).
\end{align*}
The last equality follows from Lemma \ref{lem:node-recovery} and
\(s\sqrt{\log(C/s)}\to0\) as \(s\downarrow0\). Conditional Markov's
inequality consequently yields
\(
\sup_{f\in\calF}
\left|
\frac{2}{\sqrt n}\sum_{i=1}^n\xi_i e_{i,n}(f)
\right|
=o_{p^*}(1)
\)
in probability. By \eqref{eq:multiplier-projection-decomposition},
\begin{align}
\sup_{f\in\calF}
\left|
\mathbb G_n^*f
-
\frac{2}{\sqrt n}\sum_{i=1}^n\xi_i
\left\{
f_1(U_i)-\frac1n\sum_{j=1}^nf_1(U_j)
\right\}
\right|
=o_{p^*}(1)
\label{eq:multiplier-projection-remainder}
\end{align}
in probability.

Lemma \ref{lem:conditional-fidi}  gives for every fixed \(f_1,\ldots,f_m\in\calF\), with \(m<\infty\),
\[
(\mathbb G_n^*f_1,\ldots,\mathbb G_n^*f_m)'
\rightsquigarrow
(\mathbb G(f_1),\ldots,\mathbb G(f_m))'
\]
conditionally in probability. The limiting covariance matrix has
\((a,b)\)-th entry
\(
4P\{(f_a)_1(f_b)_1\},
\)
as established by Lemma \ref{lem:mult-cov}. It therefore remains only to
establish conditional asymptotic equicontinuity.

For this purpose, use the decomposition
\eqref{eq:multiplier-projection-decomposition}. Its leading term is the
ordinary i.i.d. multiplier empirical process associated with the projected
class \(\calF_1=\{f_1:f\in\calF\}\). For \(f,g\in\calF\), its increment is
\[
\frac{2}{\sqrt n}\sum_{i=1}^n\xi_i
\left[
f_1(U_i)-g_1(U_i)
-\frac1n\sum_{j=1}^n\{f_1(U_j)-g_1(U_j)\}
\right].
\]
Conditional on \(U_1,\ldots,U_n\), this increment is sub-Gaussian with
conditional scale bounded by a constant multiple of
\[
\left[
\frac1n\sum_{i=1}^n
\left(
f_1(U_i)-g_1(U_i)
-\frac1n\sum_{j=1}^n\{f_1(U_j)-g_1(U_j)\}
\right)^2
\right]^{1/2}.
\]
Because subtraction of the sample mean reduces the average squared norm,
this scale is bounded by
\[
\left[
\frac1n\sum_{i=1}^n
\{f_1(U_i)-g_1(U_i)\}^2
\right]^{1/2}.
\]
The uniform law of large numbers for the product class
\(\{(f_1-g_1)^2:f,g\in\calF\}\) implies that, uniformly over
\(\|f_1-g_1\|_{L_2(P_U)}\leq\delta\), the last display is bounded by
\(\delta+o_p(1)\).

Moreover, \eqref{eq:projection-entropy-bound} shows that \(\calF_1\) has
the required uniform entropy bound, and its envelope is square integrable.
Applying the conditional sub-Gaussian chaining argument used in Lemma
\ref{lem:conditional-maximal}, now to the ordinary multiplier process
indexed by the difference class
\(\{f_1-g_1:f,g\in\calF\}\), let
\[
\calF_{1,\delta}
=
\{(f,g)\in\calF^2:\|f_1-g_1\|_{L_2(P_U)}\leq\delta\}.
\]
Then, for every \(\eta>0\),
\begingroup\small
\begin{align*}
\lim_{\delta\downarrow0}\limsup_{n\to\infty}
P\Bigg\{
E^*
\sup_{(f,g)\in\calF_{1,\delta}}
\left|
\frac{2}{\sqrt n}\sum_{i=1}^n\xi_i
\left[
f_1(U_i)-g_1(U_i)
-\frac1n\sum_{j=1}^n\{f_1(U_j)-g_1(U_j)\}
\right]
\right|
>\eta
\Bigg\}
=0.
\end{align*}
\endgroup
Thus the leading multiplier empirical process is conditionally
asymptotically equicontinuous with respect to the semimetric
\(
(f,g)\longmapsto\|f_1-g_1\|_{L_2(P_U)}.
\)

The remainder bound established above gives
\[
E^*\sup_{f\in\calF}
\left|
\frac{2}{\sqrt n}\sum_{i=1}^n\xi_i e_{i,n}(f)
\right|
=o_p(1).
\]
Consequently, for every \(f,g\in\calF\), the difference between the
increment \(\mathbb G_n^*f-\mathbb G_n^*g\) and the corresponding
increment of the leading multiplier process is bounded by
\[
2\sup_{h\in\calF}
\left|
\frac{2}{\sqrt n}\sum_{i=1}^n\xi_i e_{i,n}(h)
\right|
=o_{p^*}(1)
\]
in probability. Conditional asymptotic equicontinuity therefore transfers
from the leading process to \(\mathbb G_n^*\).

Combining conditional finite-dimensional convergence from Lemma
\ref{lem:conditional-fidi} with conditional asymptotic equicontinuity
gives
\[
\mathbb G_n^*\rightsquigarrow\mathbb G
\quad\text{conditionally in }\ell^\infty(\calF),
\]
in probability. All suprema, conditional expectations, and conditional
probabilities may be interpreted in the outer sense through the
measurable-majorant formulation in \eqref{eq:outer-definition}.
\end{proof}

\begin{proof}[Proof of Lemma \ref{lem:linearization}]
Write \(A_n(x)=Q_n^{-1}M_n(x)\) and \(A(x)=Q^{-1}M(x)\). From the OLS identity,
\(\widehat\beta-\beta_*=Q_n^{-1}\Pn(X\varepsilon)\). Substituting this identity into \eqref{eq:Rhat} gives the exact expansion
\begin{align*}
\widehat R_n(x)
&=\Pn[\varepsilon\ind\{X\preceq x\}]
-M_n(x)'(\widehat\beta-\beta_*)\\
&=\Pn[\varepsilon\ind\{X\preceq x\}]
-M_n(x)'Q_n^{-1}\Pn(X\varepsilon)\\
&=\Pn[\varepsilon\ind\{X\preceq x\}]-A_n(x)'\Pn(X\varepsilon).
\end{align*}
On the other hand,
\(\Pn r_x=\Pn[\varepsilon\ind\{X\preceq x\}]-A(x)'\Pn(X\varepsilon)\). Therefore,
\begin{equation}
\widehat R_n(x)-\Pn r_x=-\{A_n(x)-A(x)\}'\Pn(X\varepsilon).
\label{eq:exact-rem}
\end{equation}
By Lemma \ref{lem:prelim},
\begin{align*}
\sup_x\norm{A_n(x)-A(x)}
&\leq \norm{Q_n^{-1}-Q^{-1}}\sup_x\norm{M_n(x)}
+\norm{Q^{-1}}\sup_x\norm{M_n(x)-M(x)}=o_p(1),
\end{align*}
because \(\sup_x\norm{M_n(x)}\leq C_X\). Also, \(\sqrt n\norm{\Pn(X\varepsilon)}=O_p(1)\). Multiplying these bounds in \eqref{eq:exact-rem} yields
\(\sup_x\sqrt n|\widehat R_n(x)-\Pn r_x|=o_p(1)\).

Finally, \(Pr_x=P[\varepsilon\ind\{X\preceq x\}]-A(x)'P(X\varepsilon)=\Delta(x)\). Thus
\(\sqrt n(\Pn r_x-Pr_x)=\Gn r_x\), and the desired result follows uniformly in \(x\).
\end{proof}

\begin{proof}[Proof of Lemma \ref{lem:bootstrap-stability}]
Define the ideal incident-dyad averages
\(\bar{r}_{x,i}=(n-1)^{-1}\sum_{j\neq i}^nr_x(Z_{ij})\) and
\(\psi_{i,n}(x)=\bar{r}_{x,i}-\Pn r_x\). Then
\(\mathbb G_n^*r_x=2n^{-1/2}\sum_i\xi_i\psi_{i,n}(x)\). Let
\(d_{ij,n}(x)=\widehat r_{ij}(x)-r_x(Z_{ij})\),
\(\bar d_{i,n}(x)=(n-1)^{-1}\sum_{j\neq i}^nd_{ij,n}(x)\), and
\(\Pn d_n(x)=N_n^{-1}\sum_{i<j}d_{ij,n}(x)\). Then
\begin{equation}
\sqrt{n}\widehat R_n^{\rm{raw},*}(x)-\mathbb G_n^*r_x
=\frac{2}{\sqrt n}\sum_{i=1}^n\xi_i\{\bar d_{i,n}(x)-\Pn d_n(x)\}.
\label{eq:boot-difference}
\end{equation}

Notice that $d_{ij,n}(x)=d_{x,A_n(x),\widehat \beta}(x)$. By Lemma \ref{lem:enlarged-class}, on events \(E_n\) satisfying
\(P(E_n)\to1\),
\(
\left\{
d_{x,A_n(x),\widehat\beta}:x\in\calX
\right\}
\subseteq\calD_C.
\)
Moreover, its empirical \(L_2\) radius satisfies
\[
\delta_n
=
\left\{
\Pn\sup_{x\in\calX}
|d_{x,A_n(x),\widehat\beta}|^2
\right\}^{1/2}
=o_p(1).
\]
In particular, for every \(x\in\calX\),
\(
\|d_{x,A_n(x),\widehat\beta}\|_{\Pn,2}
\leq\delta_n.
\)
Thus, on \(E_n\), the estimated difference class is contained in the
empirically localized subclass
\[
\calD_{C,n}(\delta_n)
=
\left\{
d\in\calD_C:\|d\|_{\Pn,2}\leq\delta_n
\right\}.
\]
Lemma \ref{lem:node-contraction} also shows that the corresponding conditional
node radius is no greater than \(\delta_n\). Applying Lemma
\ref{lem:conditional-maximal} to the fixed class \(\calD_C\), with envelope
\(D\) and localization radius \(\delta_n\), gives the desired result because
\(\delta_n=o_p(1)\), \(\|D\|_{\Pn,2}=O_p(1)\), and the entropy constants of
\(\calD_C\) are deterministic.  Therefore,
for every \(\epsilon,\eta>0\),
\begin{align*}
P\left\{P^*\left(
\sup_x|\sqrt{n}\widehat R_n^{\rm{raw},*}(x)-\mathbb G_n^*r_x|>\epsilon
\right)>\eta\right\}\to0.
\end{align*}
This is exactly \(o_{p^*}(1)\) in probability.  Notice also that
\(n^{-1}\sum_i\bar d_{i,n}(x)=\Pn d_n(x)\) exactly.  Hence the centering in
\eqref{eq:boot-difference} is the second-order node centering and creates no
uncontrolled remainder.

All suprema and conditional probabilities in the preceding display may be
interpreted as outer suprema and outer conditional probabilities.  The fixed enlarged class \(\calD_C\) is pointwise measurable by the
countable rational approximation established in the proof of Lemma
\ref{lem:enlarged-class}.  The unrestricted statement follows by measurable-majorant
arguments, so the conclusion holds in outer conditional probability even when
the random estimated class itself is not measurable as a map into
\(\ell^\infty(\calX)\).
\end{proof}

\begin{proof}[Proof of Theorem \ref{thm:weak}]
By Lemma \ref{lem:entropy}, \(\calR=\{r_x:x\in\calX\}\) is a
pointwise measurable VC-type class with a square-integrable envelope.
Applying Lemma \ref{lem:ddg} with \(\calF=\calR\) therefore gives
\[
\{\Gn r_x:x\in\calX\}\weak\mathbb G_R
\quad\text{in }\ell^\infty(\calX).
\] Its covariance kernel is
\begin{align*}
\E[\mathbb G_R(x_1)\mathbb G_R(x_2)]
&=4\Cov\left(\E[r_{x_1}(Z_{12})\mid U_1],
\E[r_{x_2}(Z_{12})\mid U_1]\right)=4\E[\psi_{x_1}(U_1)\psi_{x_2}(U_1)]
=\Omega(x_1,x_2).
\end{align*}
Lemma \ref{lem:linearization} gives \(\sup_{x\in\calX}|\sqrt n\{\widehat R_n(x)-\Delta(x)\}-\Gn r_x|=o_p(1)\). Slutsky's theorem in \(\ell^\infty(\calX)\) yields \eqref{eq:weak-general}. Under \(H_0\), Lemma \ref{lem:identification} gives \(\Delta\equiv0\).
\end{proof}

\begin{proof}[Proof of Theorem \ref{thm:weak-independent}]
Under independent dyads, \(\{r_x:x\in\calX\}\) is a pointwise measurable
VC-type class with a square-integrable envelope by Lemma
\ref{lem:entropy}. The ordinary i.i.d. empirical-process central limit
theorem therefore gives
\[
\{\sqrt{N_n}(\Pn-P)r_x:x\in\calX\}\weak\mathbb G_D
\quad\text{in }\ell^\infty(\calX),
\]
with covariance \(\Gamma\) in \eqref{eq:Gamma}. Under independent dyads,
the ordinary i.i.d. rates give
\[
\sup_{x\in\calX}\|A_n(x)-A(x)\|=O_p(N_n^{-1/2}),
\qquad
\|\Pn(X\varepsilon)\|=O_p(N_n^{-1/2}).
\]
Consequently, the exact remainder in \eqref{eq:exact-rem} satisfies
\[
\sqrt{N_n}\sup_{x\in\calX}
\left|\widehat R_n(x)-\Pn r_x\right|
\leq
\sqrt{N_n}\sup_{x\in\calX}\|A_n(x)-A(x)\|
\|\Pn(X\varepsilon)\|
=O_p(N_n^{-1/2})=o_p(1).
\]
It follows that
\[
\sup_{x\in\calX}\left|
\sqrt{N_n}\{\widehat R_n(x)-\Delta(x)\}
-\sqrt{N_n}(\Pn-P)r_x\right|=o_p(1).
\]
The key cancellation is still
\(\E[Xq_x(X)]=0\); hence estimating \(\beta_*\), \(Q\), and \(M(x)\)
produces only a second-order remainder. Slutsky's theorem proves
\eqref{eq:weak-independent}. Under \(H_0\), \(\Delta\equiv0\) by Lemma
\ref{lem:identification}.
\end{proof}

\begin{proof}[Proof of Theorem \ref{thm:consistency}]
If \(H_0\) is false, Lemma \ref{lem:identification} implies that
\(\Delta(x)\neq0\) for at least one \(x\in\mathbb R^d\). Write
\(
\calX=\prod_{k=1}^d[\underline x_k,\overline x_k].
\)
If \(x_k<\underline x_k\) for some \(k\), then
\(\ind\{X\preceq x\}=0\) almost surely, which would imply
\(\Delta(x)=0\). Hence, for any \(x\) satisfying \(\Delta(x)\neq0\),
we must have \(x_k\geq\underline x_k\) for every \(k\).

Define
\(
x_k^*=\min\{x_k,\overline x_k\}
\)
and \(x^*=(x_1^*,\ldots,x_d^*)'\). Then \(x^*\in\calX\) and, because
\(X_k\leq\overline x_k\) almost surely,
\[
\ind\{X\preceq x^*\}
=
\ind\{X\preceq x\}
\quad\text{almost surely}.
\]
Therefore,
\(
\Delta(x^*)=\Delta(x)\neq0
\),
so \(\sup_{x\in\calX}|\Delta(x)|>0\). Theorem \ref{thm:weak} implies
\(\norm{\widehat R_n-\Delta}_\infty=O_p(n^{-1/2})=o_p(1)\). Therefore,
\begin{align*}
\left|\norm{\widehat R_n}_\infty-\norm\Delta_\infty\right|
\leq\norm{\widehat R_n-\Delta}_\infty=o_p(1).
\end{align*}
For the CvM statistic, uniform convergence implies
\begin{align*}
\left|\int\widehat R_n^2d\nu-\int\Delta^2d\nu\right|
&\leq\nu(\calX)\norm{\widehat R_n-\Delta}_\infty
(\norm{\widehat R_n}_\infty+\norm\Delta_\infty)=o_p(1).
\end{align*}
Under independent dyads, the same conclusions follow from Theorem
\ref{thm:weak-independent}. This proves the result.
\end{proof}

\begin{proof}[Proof of Theorem \ref{thm:bootstrap}]
By Lemma \ref{lem:ideal-multiplier}, conditionally on the data, the ideal process \(\{\mathbb G_n^*r_x:x\in\calX\}\) converges weakly in probability to \(\mathbb G_R\). Lemma \ref{lem:bootstrap-stability} shows that the sup-norm distance between the feasible and ideal multiplier processes converges to zero in outer conditional probability. Lemma \ref{lem:conditional-slutsky} therefore yields \eqref{eq:bootstrap-weak}.
\end{proof}

\begin{proof}[Proof of Theorem \ref{thm:bootstrap-tests}]
For part (a), under scenario (i) and \(H_0\), Theorem \ref{thm:weak} gives
\(\sqrt n\widehat R_n\weak\mathbb G_R\) in
\(\ell^\infty(\calX)\). The map \(f\mapsto\norm f_\infty\) is Lipschitz
under the sup norm, so the continuous mapping theorem gives
\(T_n^{KS}\to_d\norm{\mathbb G_R}_\infty\).

The map \(f\mapsto\int f(x)^2d\nu(x)\) is continuous on \(\ell^\infty(\calX)\) because, for bounded \(f,g\),
\begin{align*}
\left|\int f^2d\nu-\int g^2d\nu\right|
&\leq\nu(\calX)\norm{f-g}_\infty(\norm f_\infty+\norm g_\infty).
\end{align*}
Hence \(T_n^{CvM}\to_d\int\mathbb G_R(x)^2d\nu(x)\).

By Theorem \ref{thm:bootstrap}, the same continuous mappings applied conditionally to \(\sqrt{n}\widehat R_n^{\rm{raw},*}\) converge to the corresponding limiting laws. Assumption \ref{ass:critical} guarantees continuity and strict increase of each limiting distribution at its \((1-\alpha)\)-quantile. Standard bootstrap quantile consistency therefore gives
\(c_{n,1-\alpha}^{S,\rm raw,*}\to_p c_{1-\alpha}^S\). Combining convergence of the statistic and critical value yields
\(\Pp(T_n^S>c_{n,1-\alpha}^{S,\rm raw,*})\to\alpha\).
For a fixed grid, evaluation at \((x_1,\ldots,x_G)\) is continuous under
the sup norm. Applying the grid KS and CvM functionals to the sample and
raw bootstrap vectors, followed by the same quantile argument, proves
\eqref{eq:size-raw-grid}.

For part (b), let \(K_G=[\Omega(x_g,x_h)]_{g,h=1}^G\) in scenario (i) and
\(\Gamma_G=[\Gamma(x_g,x_h)]_{g,h=1}^G\) in scenario (ii).

For generic vector marks \(\bm a_{ij}\), let \(V_{0,n}(\bm a)\) and
\(V_{1,n}(\bm a)\) denote the expressions in \eqref{eq:V0} and
\eqref{eq:V1}, respectively, with \(\widehat{\bm a}_{ij}\) replaced by
\(\bm a_{ij}\). The exact identity \eqref{eq:V1-incident-identity} applies
with the same replacement.

In scenario (i), apply this identity to the estimated centered marks. Their
incident averages equal \(\widehat{\bm\psi}_i\). Lemmas
\ref{lem:node-recovery}, \ref{lem:enlarged-class}, and
\ref{lem:node-contraction}, followed by Cauchy-Schwarz, imply
\[
\frac1n\sum_{i=1}^n
\left\|
\widehat{\bm\psi}_i-
\left\{
\bm\psi(U_i)-\frac1n\sum_{j=1}^n\bm\psi(U_j)
\right\}
\right\|^2=o_p(1),
\]
where \(\bm\psi(U_i)=(\psi_{x_1}(U_i),\ldots,
\psi_{x_G}(U_i))'\). The ordinary uniform law of large numbers for the
fixed-dimensional products of these node projections therefore gives
\[
\frac1n\sum_{i=1}^n
\widehat{\bm\psi}_i\widehat{\bm\psi}_i'
\to_p E[\bm\psi(U_1)\bm\psi(U_1)']=\frac14K_G.
\]
Also, the moment condition and the dyadic law of large numbers give
\(\widehat V_{0,n}=O_p(1)\). Hence
\eqref{eq:V1-incident-identity} yields
\[
\widehat V_{1,n}\to_p
\left[\E\{\psi_{x_g}(U_1)\psi_{x_h}(U_1)\}\right]_{g,h=1}^G
=\frac14K_G.
\]
Consequently, both \(\widehat K_n^{\rm raw}\) and
\(\widehat K_n^{\rm FS}\) converge to \(K_G\).

In scenario (ii), the i.i.d. law of large numbers gives
the same-dyad limit once the plug-in effect is controlled. Let
\(\bm r_{ij}=(r_{x_1}(Z_{ij}),\ldots,r_{x_G}(Z_{ij}))'\),
\(\bm\Delta=(\Delta(x_1),\ldots,\Delta(x_G))'\),
\(\bm a_{ij}^0=\bm r_{ij}-\bm\Delta\), and
\(\bm d_{ij}=\widehat{\bm a}_{ij}-\bm a_{ij}^0\). The ordinary i.i.d.
empirical-process rates give
\[
\|\widehat\beta-\beta_*\|
+\sup_{x\in\calX}\|A_n(x)-A(x)\|
=O_p(N_n^{-1/2})=O_p(n^{-1}).
\]
Let \(\bm e_{ij}=\widehat{\bm r}_{ij}-\bm r_{ij}\). The expansion used in
Lemma \ref{lem:enlarged-class}, together with the \(4+\delta\) moment
condition, implies \(\Pn\|\bm e\|^2=O_p(n^{-2})\). Moreover,
\(\bm d_{ij}=\bm e_{ij}-(\widehat{\bm R}_n-\bm\Delta)\), and Theorem
\ref{thm:weak-independent} gives
\(\|\widehat{\bm R}_n-\bm\Delta\|^2=O_p(N_n^{-1})=O_p(n^{-2})\).
Consequently,
\[
\Pn\|\bm d\|^2
\leq2\Pn\|\bm e\|^2
+2\|\widehat{\bm R}_n-\bm\Delta\|^2
=O_p(n^{-2}).
\]
This bound and Cauchy-Schwarz also show that
\(\widehat V_{0,n}\to_p\Gamma_G\).

Write \(\bar{\bm a}_i^0=(n-1)^{-1}\sum_{j\ne i}\bm a_{ij}^0\) and
\(\bar{\bm d}_i=(n-1)^{-1}\sum_{j\ne i}\bm d_{ij}\). Independence and
centering give
\[
\frac1n\sum_{i=1}^n\|\bar{\bm a}_i^0\|^2=O_p(n^{-1}),
\qquad
\frac1n\sum_{i=1}^n\|\bar{\bm d}_i\|^2
\leq\Pn\|\bm d\|^2=O_p(n^{-2}).
\]
Applying \eqref{eq:V1-incident-identity} to
\(\widehat{\bm a}=\bm a^0+\bm d\), the two incident-average cross terms
are \(O_p(n^{-3/2})\) by Cauchy-Schwarz, the incident term quadratic in
\(\bm d\) is \(O_p(n^{-2})\), and the change in the same-dyad term after
multiplication by \((n-2)^{-1}\) is \(O_p(n^{-2})\). Hence replacing
\(\bm a_{ij}^0\) by \(\widehat{\bm a}_{ij}\) changes
\(V_{1,n}\) by \(O_p(n^{-3/2})=o_p(n^{-1})\).

For the infeasible centered marks, each summand in \eqref{eq:V1} is a
product of two independent, mean-zero dyad marks. Two such products have
zero covariance unless their two dyad indices coincide up to order. There
are \(O(n^3)\) nonzero terms among \(O(n^6)\) pairs, so the variance of the
infeasible shared-node average is \(O(n^{-3})\). Its expectation is zero,
so \(V_{1,n}(\bm a^0)=O_p(n^{-3/2})\). Combining this with the preceding
plug-in bound gives
\[
\widehat V_{1,n}=O_p(n^{-3/2})=o_p(n^{-1}).
\]
Since \(N_n/n=(n-1)/2\), equations \eqref{eq:Kraw}-\eqref{eq:KFS} yield
\[
\frac{N_n}{n}\widehat K_n^{\rm FS}
=\widehat V_{0,n}+2(n-2)\widehat V_{1,n}\to_p\Gamma_G,
\]
whereas
\[
\frac{N_n}{n}\widehat K_n^{\rm raw}
=2\widehat V_{0,n}+2(n-2)\widehat V_{1,n}\to_p2\Gamma_G.
\]

Projection onto the cone of positive-semidefinite matrices is continuous
and nonexpansive in Frobenius norm. It is also positively homogeneous, so
under scenario (ii),
\[
\frac{N_n}{n}\widehat K_{n,+}^{\rm FS}
=
\Pi_+\!\left(\frac{N_n}{n}\widehat K_n^{\rm FS}\right)
\to_p\Pi_+(\Gamma_G)=\Gamma_G.
\]
Under scenario (i), continuity directly gives
\(\widehat K_{n,+}^{\rm FS}\to_p K_G\). Conditional Gaussian convergence now follows
from convergence of the covariance matrices. Theorem \ref{thm:weak} gives
the matching sample-vector limit in scenario (i), and Theorem
\ref{thm:weak-independent} gives it in scenario (ii). Applying the
continuous grid KS and CvM maps and Assumption \ref{ass:critical} proves
quantile consistency and asymptotic size.

Finally, in scenario (ii) the raw conditional covariance converges to
\(2\Gamma_G\), whereas the sample covariance converges to \(\Gamma_G\).
If \(\Gamma_G\ne0\), the two centered Gaussian laws differ. Hence the raw
node multiplier cannot consistently estimate the independent-dyad null
law. This also proves part (c).
\end{proof}

\begin{proof}[Proof of Corollary \ref{cor:grid-consistency}]
By Theorem \ref{thm:consistency},
\(
\max_{g\leq G}
|\widehat R_n(x_g)-\Delta(x_g)|=o_p(1).
\)
Hence, if \(\max_{g\leq G}|\Delta(x_g)|>0\), then
\[
\frac{T_{n,G}^{KS}}{\sqrt n}
=
\max_{g\leq G}|\widehat R_n(x_g)|
\to_p
\max_{g\leq G}|\Delta(x_g)|>0.
\]
Because \(w_g>0\) for every \(g\), the same condition also implies
\[
\frac{T_{n,G}^{CvM}}{n}
=
\sum_{g=1}^G w_g\widehat R_n(x_g)^2
\to_p
\sum_{g=1}^G w_g\Delta(x_g)^2>0.
\]
Thus both sample statistics diverge in probability.

It remains to control the bootstrap critical values. Under the moment
conditions and for fixed \(G\),
\(
\widehat V_{0,n}=O_p(1)
\)
and Jensen's inequality gives
\[
\frac1n\sum_{i=1}^n
\|\bar{\widehat{\bm a}}_i\|^2
\leq
\Pn\|\widehat{\bm a}\|^2
=O_p(1).
\]
Here
\(
\bar{\widehat{\bm a}}_i=(n-1)^{-1}
\sum_{j\ne i}\widehat{\bm a}_{ij}
\).
The incident-average identity \eqref{eq:V1-incident-identity} therefore gives
\(
\widehat V_{1,n}=O_p(1)
\).
Consequently,
\(
\widehat K_{n,+}^{\rm FS}=O_p(1)
\)
and
\(
\widehat K_n^{\rm raw}=O_p(1)
\).
The conditional second moments of the corrected and raw bootstrap vectors
are therefore bounded in probability, so their fixed-grid KS and CvM
critical values are \(O_p(1)\). Comparing these critical values with the
diverging sample statistics proves the stated results.
\end{proof}

\begin{proof}[Proof of Theorem \ref{thm:local}]
Under \eqref{eq:local-model}, the population linear projection coefficient is
\begin{align*}
\beta_{*,n}
&=Q^{-1}\E[X_{12}Y_{12,n}]\\
&=Q^{-1}\E[X_{12}\{X_{12}'\beta_0+s_n^{-1}\Delta_0(X_{12})+u_{12}\}]\\
&=\beta_0+s_n^{-1}Q^{-1}\E[X_{12}\Delta_0(X_{12})]
=\beta_0+s_n^{-1}b_\Delta,
\end{align*}
because \(\E[Xu]=0\). The corresponding projection residual is
\begin{equation}
\varepsilon_{ij,n}=Y_{ij,n}-X_{ij}'\beta_{*,n}
=u_{ij}+s_n^{-1}\widetilde\Delta_0(X_{ij}).
\label{eq:local-resid}
\end{equation}
Therefore, the population marked moment is
\begin{align*}
\Delta_n(x)
&=\E[\varepsilon_{12,n}\ind\{X_{12}\preceq x\}]
=s_n^{-1}\E[\widetilde\Delta_0(X_{12})
\ind\{X_{12}\preceq x\}]
=s_n^{-1}\mu_\Delta(x),
\end{align*}
because \(\E[u\mid X]=0\).

Define the baseline mark and the local perturbation mark by
\[
r_{0,x}(Z_{ij})=u_{ij}q_x(X_{ij}),
\qquad
\ell_x(Z_{ij})=\widetilde\Delta_0(X_{ij})q_x(X_{ij}).
\]
Then \(r_{n,x}=r_{0,x}+s_n^{-1}\ell_x\). Because \(q_x\) is uniformly
bounded, the classes \(\{r_{0,x}:x\in\calX\}\) and
\(\{\ell_x:x\in\calX\}\) are VC type with square-integrable envelopes;
their VC characteristics and envelope moments are uniform in \(n\).

In scenario (i), the exact linearization argument in Lemma
\ref{lem:linearization} gives
\begin{align*}
\sup_{x\in\calX}\left|
\sqrt n\{\widehat R_n(x)-\Delta_n(x)\}
-\Gn r_{0,x}-n^{-1/2}\Gn\ell_x
\right|=o_p(1).
\end{align*}
Because the baseline class \(\{r_{0,x}:x\in\calX\}\) is fixed,
Lemma \ref{lem:ddg} gives
\(
\{\Gn r_{0,x}:x\in\calX\}\weak\mathbb G_R
\)
in \(\ell^\infty(\calX)\). The same lemma gives
\(
\sup_{x\in\calX}|\Gn\ell_x|=O_p(1)
\).
Consequently,
\(
\sqrt n\{\widehat R_n-\Delta_n\}\weak\mathbb G_R.
\)

In scenario (ii), the i.i.d. linearization in the proof of Theorem
\ref{thm:weak-independent} similarly gives
\begin{align*}
\sup_{x\in\calX}\left|
\sqrt{N_n}\{\widehat R_n(x)-\Delta_n(x)\}
-\sqrt{N_n}(\Pn-P)r_{0,x}
-N_n^{-1/2}\sqrt{N_n}(\Pn-P)\ell_x
\right|=o_p(1).
\end{align*}
The ordinary i.i.d. empirical-process theorem gives
\[
\{\sqrt{N_n}(\Pn-P)r_{0,x}:x\in\calX\}\weak\mathbb G_D,
\qquad
\sup_{x\in\calX}|\sqrt{N_n}(\Pn-P)\ell_x|=O_p(1).
\]
Therefore,
\(
\sqrt{N_n}\{\widehat R_n-\Delta_n\}\weak\mathbb G_D.
\)
Since \(s_n\Delta_n=\mu_\Delta\), Slutsky's theorem yields
\eqref{eq:local-limit-R} and \eqref{eq:local-limit-D}.

We next verify that the centered bootstrap laws are unchanged. In scenario
(i), the proof of Lemma \ref{lem:bootstrap-stability} applies uniformly to
the local class \(\{r_{n,x}:x\in\calX\}\), whose VC characteristics and
envelope moments are uniformly bounded, and gives
\[
\sup_{x\in\calX}
\left|\sqrt n\,\widehat R_n^{\rm raw,*}(x)-\mathbb G_n^*r_{n,x}\right|
=o_{p^*}(1)
\]
in probability. Since \(r_{n,x}-r_{0,x}=n^{-1/2}\ell_x\), Lemma
\ref{lem:ideal-multiplier} applied to \(\{\ell_x:x\in\calX\}\) gives
\[
\sup_{x\in\calX}
\left|\mathbb G_n^*r_{n,x}-\mathbb G_n^*r_{0,x}\right|
\leq
n^{-1/2}\sup_{x\in\calX}|\mathbb G_n^*\ell_x|
=o_{p^*}(1)
\]
in probability. Lemma \ref{lem:ideal-multiplier} applied to the baseline
class and conditional Slutsky therefore yield, conditionally on the data,
\[
\sqrt n\,\widehat R_n^{\rm raw,*}\weak\mathbb G_R
\quad\text{in }\ell^\infty(\calX),
\]
in probability.

For the covariance estimators, since
\(r_{n,x}-r_{0,x}=s_n^{-1}\ell_x\) uniformly in the relevant
\(L_2\) metrics, Cauchy-Schwarz gives an \(o(1)\) change in the
same-dyad covariance component. In scenario (i), uniform node recovery
applied to \(\ell_x\) gives an \(o_p(1)\) change in
\(n^{-1}\sum_i\widehat{\bm\psi}_i\widehat{\bm\psi}_i'\), and hence in
both \(\widehat K_n^{\rm raw}\) and \(\widehat K_n^{\rm FS}\). In
scenario (ii), the dyads remain i.i.d.; the rate argument based on
\eqref{eq:V1-incident-identity} applies uniformly and gives
\[
\frac{N_n}{n}\widehat K_{n,+}^{\rm FS}\to_p\Gamma_G,
\qquad
\frac{N_n}{n}\widehat K_n^{\rm raw}\to_p2\Gamma_G.
\]
Thus the covariance estimators have the same limits as under the null.

In scenario (i), the preceding conditional functional convergence and
bootstrap quantile consistency give
\[
c_{n,1-\alpha}^{S,\rm raw,*}\to_p c_{1-\alpha}^S,
\qquad S\in\{KS,CvM\}.
\]
For the fixed grid, the covariance argument in Theorem
\ref{thm:bootstrap-tests} similarly gives, in scenario (i),
\[
c_{n,G,1-\alpha}^{S,\rm corr,*}
\to_p c_{G,1-\alpha}^{S,R},
\qquad
c_{n,G,1-\alpha}^{S,\rm raw,*}
\to_p c_{G,1-\alpha}^{S,R}.
\]
In scenario (ii), the rate-adjusted corrected critical values satisfy
\[
\sqrt{\frac{N_n}{n}}\,
c_{n,G,1-\alpha}^{KS,\rm corr,*}
\to_p c_{G,1-\alpha}^{KS,D},
\qquad
\frac{N_n}{n}\,
c_{n,G,1-\alpha}^{CvM,\rm corr,*}
\to_p c_{G,1-\alpha}^{CvM,D}.
\]
For completeness, consider any fixed \(t=(t_1,\ldots,t_G)'\). After
conversion from the node rate to the dyad rate, the corresponding
linear combination of the raw bootstrap vector can be written as
\[
\sum_{i=1}^n \xi_i a_{i,n},
\qquad
a_{i,n}
=
\frac{2\sqrt{N_n}}{n}
\sum_{g=1}^G t_g\widehat\psi_i(x_g).
\]
By the preceding covariance convergence,
\(
\sum_{i=1}^n a_{i,n}^2\to_p2t'\Gamma_Gt.
\)
It remains to verify the conditional Lindeberg condition. Put \(p=4+\delta\). For each fixed
grid, write
\(\bm r_{ij,n}=(r_{n,x_1}(Z_{ij}),\ldots,r_{n,x_G}(Z_{ij}))'\).
Rosenthal's inequality for the independent centered dyad marks incident to
node \(i\) gives, uniformly along the local sequence,
\[
E\left\|
\frac1{n-1}\sum_{j\ne i}
\{\bm r_{ij,n}-P\bm r_n\}
\right\|^p
\leq Cn^{-p/2}.
\]
The union bound then gives
\[
P\left(
\max_{i\leq n}
\left\|
\frac1{n-1}\sum_{j\ne i}
\{\bm r_{ij,n}-P\bm r_n\}
\right\|>\epsilon
\right)
\leq C\epsilon^{-p}n^{1-p/2}\to0.
\]
The i.i.d. nuisance-parameter rates and the expansion in Lemma
\ref{lem:enlarged-class} imply, on events with probability approaching one,
that the absolute plug-in difference at every fixed grid point is bounded
by \(C\eta_n(1+|u_{ij}|+|\Delta_0(X_{ij})|)\), where
\(\eta_n=O_p(n^{-1})\). Another application of the preceding moment and
union-bound argument shows that the maximum across \(i\) of the incident
averages of the term in parentheses is \(O_p(1)\). Centering across nodes
cannot increase this order. Hence replacing the population marks by the
estimated centered marks changes the maximum incident average by \(o_p(1)\).
Since \(2\sqrt{N_n}/n=O(1)\), it follows that
\(\max_{i\leq n}|a_{i,n}|=o_p(1)\). The same conditional Lindeberg
argument as in Lemma \ref{lem:conditional-fidi} therefore gives
conditional convergence of the rescaled raw bootstrap vector to
\(\sqrt{2}\mathbb G_D\). Consequently,
\begin{align*}
\sqrt{\frac{N_n}{n}}\,
c_{n,G,1-\alpha}^{KS,\rm raw,*}
&\to_p
\sqrt{2}\,c_{G,1-\alpha}^{KS,D},\\
\frac{N_n}{n}\,
c_{n,G,1-\alpha}^{CvM,\rm raw,*}
&\to_p
2c_{G,1-\alpha}^{CvM,D}.
\end{align*}

Applying the continuous mapping theorem to \eqref{eq:local-limit-R} and
\eqref{eq:local-limit-D} gives the
corresponding limits of the sample KS and CvM statistics. In scenario (ii),
both sides of each bootstrap comparison are rescaled from the node rate to
the dyad rate; this common rescaling leaves the rejection event unchanged.
The additional no-atom condition in Theorem \ref{thm:local} ensures that
the shifted limiting statistics place no probability mass at the relevant
critical values. Slutsky's theorem for the sample
statistics and bootstrap critical values therefore yields all rejection
probability limits stated in Theorem \ref{thm:local}, including
\eqref{eq:local-power-continuum-KS} and
\eqref{eq:local-power-continuum-CvM}.

Finally, the KS and CvM nonrejection regions are convex and symmetric.
Anderson's inequality therefore implies that shifting the centered
Gaussian law by \(\mu_\Delta\) cannot increase the probability of
nonrejection. The limiting rejection probabilities of the valid tests
are consequently at least \(\alpha\). For a fixed grid, positive
definiteness of the Gaussian covariance matrix, together with a nonzero
drift vector and positive CvM weights, gives strict inequality.

Under scenario (ii), however, the raw tests compare the shifted sample
limit with the larger critical values generated by
\(\sqrt{2}\mathbb G_D\). When \(\mu_\Delta=0\), Assumption
\ref{ass:critical}, together with the nontrivial grid-variance
condition, implies
\begin{align*}
\Pp\left(
\max_{g\leq G}|\mathbb G_D(x_g)|
>\sqrt{2}\,c_{G,1-\alpha}^{KS,D}
\right)
&<\alpha,\\
\Pp\left(
\sum_{g=1}^G w_g\mathbb G_D(x_g)^2
>2c_{G,1-\alpha}^{CvM,D}
\right)
&<\alpha.
\end{align*}
\end{proof}

 \bibliographystyle{chicago}
\bibliography{multiway_clustering}

\end{document}